\documentclass[a4paper,fleqn]{cas-sc}

\usepackage{amsthm}
\usepackage{url}
\usepackage{microtype}
\usepackage{tikz}
\usetikzlibrary{calc}
\usepackage{pgfplots}
\pgfplotsset{compat=1.18}

\newenvironment{IEEEproof}{\begin{proof}}{\end{proof}}

\newtheorem{proposition}{Proposition}
\newtheorem{lemma}[proposition]{Lemma}
\newtheorem{remark}[proposition]{Remark}
\newtheorem{corollary}[proposition]{Corollary}

\newcommand{\Rp}{\mathbb{R}}
\newcommand{\Cp}{\mathbb{C}}
\newcommand{\Real}{\Re}
\newcommand{\Imag}{\Im}
\newcommand{\Mclass}{\mathcal{M}}
\newcommand{\bfx}{\tilde{\mathbf{x}}}
\newcommand{\bfs}{\tilde{\mathbf{s}}}
\newcommand{\Sig}{\boldsymbol{\Sigma}}
\newcommand{\mub}{\boldsymbol{\mu}}
\newcommand{\Gk}{\mathbf{G}_{k_1 k_2}}
\newcommand{\SDRstar}{\mathrm{SDR}^\star}
\newcommand{\mstar}{m^\star}
\DeclareMathOperator{\Var}{Var}
\DeclareMathOperator{\Ex}{\mathbb{E}}

\begin{document}
\let\WriteBookmarks\relax
\def\floatpagepagefraction{1}
\def\textpagefraction{.001}

\shorttitle{Geometric Ceilings on Time-Frequency Masking}
\shortauthors{M. Baelde}

\title[mode = title]{Geometric Ceilings on Time-Frequency Masking for Single-Channel Separation}

\author[1]{Maxime Baelde}[orcid=0000-0002-3634-1398]
\cormark[1]
\ead{maxime.baelde@ik.me}

\affiliation[1]{organization={Independent researcher}, city={Lille}, country={France}}

\cortext[1]{Corresponding author}

\begin{abstract}
Most single-channel separators estimate a source by applying a real gain to the mixture in each time-frequency bin. The optimum of that format, which the oracle masks used as bounds do not attain, is the orthogonal projection of the source onto the line spanned by the mixture, its residual set by the angle between them. Locating an estimator reduces to the block structure of a real-linear operator on stacked spectra, giving a chain of four nested classes whose three larger terms match three assumptions on the prior: zero means, circularity and absence of inter-frequency coupling. Held fixed the chain is a cascade of four orthogonal projections; refitted per frame it collapses onto its first term, attributing the whole residual to one missing real parameter per bin, the phase. When the phase posterior is symmetric about the mixture direction, the minimum mean-square estimate falls back onto the line, with gain the posterior mean of the oracle gain and excess error its variance. On MUSDB18 a posterior mean under a non-circular Gaussian-mixture prior leaves the class yet stays 11.44~dB under the per-frame ceiling, which four times as many components and 7.5x the data do not close; a closed-form gate attributes some 70\% of it, in decibels, to the predicted variance. The widest fixed class stays 6.70~dB under the same ceiling. Leaving the class and minimising squared error are conflicting requests: the barrier lies in the criterion rather than in the prior.
\end{abstract}

\begin{keywords}
Single-channel source separation \sep time-frequency masking \sep widely linear estimation \sep Gaussian mixture models \sep non-circular complex random variables \sep posterior mean estimation \sep phase modelling
\end{keywords}

\maketitle

\section*{Highlights}

\begin{itemize}
\item Exact ceiling for time-frequency masking: projection onto the mixture line.
\item Four nested operator classes match three classical assumptions on the prior.
\item Adaptivity to the frame yields 6.70~dB, more than any widening of the linear class.
\item Multi-source masks: the partition is free, bounding the mask range breaks it.
\item A symmetric phase posterior pulls the MMSE estimate back onto the real line.
\end{itemize}

\section{Introduction}
\label{sec:intro}

Single-channel source separation is most often carried out in the time-frequency domain, where the estimate of a source is formed by applying a real gain to the mixture in each bin. This format covers the classical Wiener filter, the ideal ratio mask and the ideal binary mask used as oracles throughout the field, and the great majority of the mask-based separators that followed them. It is convenient, it preserves the mixture phase, and it reduces separation to the estimation of one non-negative number per bin.

The convenience carries a geometric constraint that is rarely stated. Fix one time-frequency bin, and regard the mixture value $x$ and the source value $s$ as two vectors of the plane $\Rp^2 = \Cp$. A real gain returns $mx$ with $m$ real, so whatever the estimator has learned and however large it is, the only points it can output lie on the single line $\Rp x$. The best point of that line is the orthogonal projection of $s$ onto it, and it leaves the component of $s$ orthogonal to $x$, of length $|s|\sin\theta$ with $\theta$ the angle between source and mixture (Fig.~\ref{fig:geometry}). The format is therefore a constraint whose residual is set by an angle already present in the data, and that residual is the same for every member of the class, from the crudest binary mask to a network of arbitrary size.

That observation is not what the field currently uses to bound mask-based separation. The usual reference points are particular oracle masks computed from the true sources, chiefly the ideal ratio mask and the oracle Wiener filter \cite{vincent2007oracle,erdogan2015}. Those are members of the class, short of its optimum, and on the material used below they sit several decibels under the true optimum of the class, so an estimator can beat both while remaining strictly inside the format they are supposed to represent. The question this raises is the one we address. What exactly does the real-gain format forbid, what does an estimator have to trade away in order to leave it, and does leaving it yield anything?

We take the question by the operator instead of by the estimator. Regard separation in one bin as a real-linear map applied to the stacked real and imaginary parts of the mixture. A real gain is the most constrained such map, a scalar multiple of the identity. Relaxing it in three successive steps, first to a similitude, then to an arbitrary linear map of the plane, then to a map that lets the output in one bin depend on the input in the others, yields a chain of four nested operator classes in which any estimator can be located, and the smallest class containing a given estimator is readable from the image of the unit circle under its fitted operator. The three steps are not free parameters chosen at the output: we show that they correspond exactly to three classical assumptions on the prior of the source, namely zero means, circularity of the complex law, and absence of coupling between frequencies. Dropping one assumption enlarges the containing class by one term, and reinstating all three collapses every block back to a scalar and returns the estimator to the line, where it coincides with the Wiener filter of the Gaussian separation literature.

This gives a constructive route out of the smallest class. We fit on each source a Gaussian mixture on stacked real and imaginary spectra, with non-zero means and non-circular covariances \cite[Ch.~4]{baelde2019thesis}, for which the posterior mean of the source given the mixture is available in closed form over the pairs of mixture components. Being off the line then follows from the fitted prior instead of from a design choice at the output, so whether leaving the class helps becomes a measurement. We report that measurement on MUSDB18, and it comes with a proof of what governs it: when the phase posterior is symmetric about the mixture direction, the minimum mean-square estimate falls back onto the line, with gain the posterior mean of the oracle gain and excess error the posterior variance of that gain. Leaving the class and minimising squared error are conflicting requests, which generalises to an arbitrary coupled prior a return to the mixture phase long known for the minimum mean-square short-time spectral amplitude estimate \cite{ephraim1984mmse,gerkmann2013phase}.

The contributions of this paper are the following.
\begin{enumerate}
\item An exact characterisation of the class of real-gain estimators, its optimum in closed form and its irreducible residual as an energy-weighted average of $\sin^2\theta$, at every granularity from one bin to a whole spectrum and for any number of sources, with the ordered sub-ceilings of its non-negative and bounded subclasses, which cost about one and a quarter decibels at $L = 1024$ (Section~\ref{sec:problem}).
\item A chain of four nested classes of real-linear operators on stacked spectra, with a certificate that locates an arbitrary estimator in it from its fitted operator alone, and the exact correspondence between its three larger classes and three assumptions on the prior (Section~\ref{sec:problem:classes}).
\item The separation of the two readings of that chain, degenerate above its first term when the operator is refitted per frame and a cascade of four orthogonal projections when it is held fixed, which is what attributes the whole residual of the format to the phase (Sections~\ref{sec:problem:cascade} and~\ref{sec:results:cascade}).
\item A closed-form estimator that provably lies outside the smallest class, collapses to the Wiener filter when the three assumptions are reinstated, and carries a falsifiable prediction on its behaviour away from the support of its fit, confirmed by a jump of one decibel at the frames where the assignment of components changes against 0.03~dB elsewhere (Sections~\ref{sec:model}, \ref{sec:analysis} and~\ref{sec:results:offsupport}).
\item The conflict between leaving the class and minimising squared error, proved for an arbitrary coupled prior with no parametric phase model (Proposition~\ref{prop:pull}) and measured on MUSDB18 (Sections~\ref{sec:analysis} and~\ref{sec:results}).
\item A measurement of where such an estimator stands: a deficit to the per-frame ceiling that four times as many components, 7.5x more training data and a full covariance in place of a diagonal one each leave within a decibel, and a closed-form gate that attributes some 70\% of that deficit, read in decibels, to the posterior variance of Proposition~\ref{prop:pull} (Sections~\ref{sec:results:deficit}, \ref{sec:results:cov} and~\ref{sec:results:gate}).
\end{enumerate}

\section{Related Work}
\label{sec:related}

\subsection{Oracle Masks and Their Ceilings}
\label{sec:related:oracles}

Bounding a family of estimators from above is an established benchmarking device, and the per-class oracles of Vincent, Gribonval and Plumbley \cite{vincent2007oracle} already cover single-channel time-frequency masking among other classes; the oldest member of that family, the ideal binary mask of Wang \cite{wang2005ibm}, is a member of the same class with its gain restricted to two values. The quantity we take as the ceiling is on record as well. It is the instantaneous optimal ratio mask of the speech enhancement literature, given by Liang \emph{et al.} \cite{liang2013orm} as the mask that maximises the signal-to-noise ratio, and the target of the phase-sensitive mask of Erdogan \emph{et al.} \cite{erdogan2015}, whose equation~(1) reads $a_{\mathrm{psf}} = \Real(s/y) = (|s|/|y|)\cos\theta$, derived as the minimiser of $|\hat a y - s|^2$ under the constraint $a \in \Rp$. Their Table~1 lists that constrained optimum next to its unconstrained counterpart $a_{\mathrm{icf}} = s/y$ with the annotations ``max SNR given $a \in \Rp$'' and ``max SNR given $a \in \Cp$'', so the optimality principle is already stated there alongside the formula, and their Table~2 reports on the CHiME-2 development set, averaged over the two input signal-to-noise conditions, 20.76~dB for the real-gain optimum against 19.17~dB for its truncation to $[0,1]$ and 17.29~dB for the ideal ratio mask. Oracle mask studies since have kept refining the list of candidate masks \cite{irmbest}, and the complex-ratio and training-target literature \cite{wang2014targets,williamson2016crm} selects what to regress in place of the class the regression lands in.

Two things are missing from that body of work, and supplying them is what makes the ceiling usable as an argument. First, these are lists of candidate masks compared against one another, so the reported figures bound the members that were tried and leave the class unbounded; in particular the ideal ratio mask and the oracle Wiener filter, which the field uses as the reference points a separator is asked to approach, are ordinary members several decibels below the optimum, as the CHiME-2 figures above already show. An estimator can beat both while remaining strictly inside the format they represent, so beating them establishes nothing about the format. Second, none of this work says what membership in the class depends on. Reference \cite{erdogan2015} is a training-objective study, with no source prior and no statement about the law of $s$, so there is no route from a modelling assumption to a position with respect to the class.

The systems the field actually deploys are located by the same reading, and it separates them in two. Open-Unmix \cite{stoter2019openunmix}, the reference baseline of the music separation benchmarks, regresses a source magnitude and reads the estimate out as that magnitude carried by the mixture phase, which is a non-negative real gain applied bin by bin, so it belongs to the bounded-mask subclass of Section~\ref{sec:problem} and every figure the ceiling gives for that subclass applies to it verbatim; the multichannel Wiener post-filter that the reference implementation offers on top is the one part of it that leaves the format, and it does so along the coupling axis of Section~\ref{sec:problem:cascade} instead of by modelling phase. Conv-TasNet \cite{luo2019convtasnet} and the hybrid Demucs family \cite{defossez2021hybrid} are outside the class by construction, masking in a learned or a time-domain basis, so the operator they realise on short-time Fourier coefficients is neither a real gain nor bin-wise, and the ceiling says nothing about them. A campaign such as SiSEC 2018 \cite{stoter2018sisec}, which ranks both kinds against one another, is therefore not comparing members of one class, and that is where the ceiling is useful: it quantifies what a spectrogram-masking entry leaves unexploited before any question of architecture or training volume arises, and it turns the position of a time-domain entry into something to be measured instead of assumed.

\subsection{Widely Linear Filtering and Non-Circularity}
\label{sec:related:widelylinear}

Leaving the real-gain format means letting the per-bin operator be something other than a scalar, and single-microphone enhancement has built such operators by design. Complex masking and complex ratio masking \cite{williamson2016crm} give the operator one rotation on top of its gain. Widely linear estimation in the sense of Picinbono and Chevalier \cite{picinbono1995widely}, the name for a real-linear map on $\Cp^F$ that fails to be complex-linear, was carried into noise reduction by Benesty, Chen and Huang \cite{benesty2010widely}, who derive widely linear Wiener and tradeoff filters bin by bin and quantify the gain over their strictly linear counterparts when the signal is non-circular in the sense of Neeser and Massey \cite{neeser1993proper}, the general treatment of improper complex signals and of the estimators they call for being that of Schreier and Scharf \cite{schreier2010}. Multi-frame Wiener and MVDR filters \cite{huang2012multiframe,fischer2017mfmvdr} stack several consecutive frames of one bin and filter across them, the interframe correlation vector coupling what a bin-wise operator keeps separate. The anisotropic Gaussian phase models of Magron \emph{et al.} \cite{magron2017consistent,magron2017anisotropic} reach a comparable degree of freedom while tying the anisotropy to a phase estimate in place of learning it.

In all of these the operator is prescribed from second-order statistics of the observation, so the non-circularity is an assumption in the model of the signal and, where there is coupling, it runs along the temporal axis. The question of whether reaching those degrees of freedom suffices to pass the ceiling of the constrained format is not asked, which is unsurprising: without the exact ceiling of the previous subsection there is nothing to pass.

\subsection{Generative Priors and Phase}
\label{sec:related:generative}

The Gaussian source separation literature offers the other route to the same operators, through the prior instead of through the filter. Benaroya \emph{et al.} \cite{benaroya2006} fit a Gaussian mixture on each source and show that the conditional expectation of a source given the mixture is a responsibility-weighted combination of pairwise Wiener filters, a closed-form estimator we reuse verbatim in Section~\ref{sec:model}. Their priors, and those of the factorisation-based separators that followed, are phase-invariant: the components are zero-mean and circular, and their covariances are bin-wise. Their own conclusion names phase modelling in the transform domain as the work left undone. That the discarded parameter carries quality of its own is documented independently of any separator, by the reconstruction experiments of Paliwal, W\'ojcicki and Shannon \cite{paliwal2011phase} and by the phase-aware processing literature surveyed by Mowlaee \emph{et al.} \cite{mowlaee2016phase}; recovering it from magnitudes alone, in the manner of Griffin and Lim \cite{griffin1984}, is the other classical response to the same gap and a different problem from the one treated here, being an inverse problem on one spectrum instead of an estimator of a source.

That is the gap this paper works in. The three properties a phase-invariant prior forgoes, non-zero means, non-circularity and coupling between frequencies, are exactly the three degrees of freedom the filtering literature of Section~\ref{sec:related:widelylinear} adds by hand, and exactly what separates an estimator from the class of Section~\ref{sec:related:oracles}. No existing result connects the three, so it is not known whether a prior that relinquishes those assumptions produces an estimator that leaves the class, nor whether leaving the class yields anything against a ceiling that has never been stated for the class as a whole.

\section{Problem Statement and Geometry}
\label{sec:problem}

This section states the problem the paper answers and settles it for the format the field uses. It writes that format as a set of operators and computes its best member in closed form, together with the error no member of it avoids, at every granularity from one bin to a whole spectrum. It then places an arbitrary estimator on a chain of four nested operator classes, and separates the two readings under which that chain can be given a ceiling: a per-frame oracle, under which the chain is degenerate, and a fixed operator, under which it becomes a cascade of orthogonal projections. Nothing here depends on the model of Section~\ref{sec:model}; that model is one way of leaving the smallest class of the chain, and Section~\ref{sec:analysis} shows in what sense it is the only one a prior can produce. Every quantity in this section is a property of the data or of an operator, and none is a measurement of ours; measurements are reported in Section~\ref{sec:results}.

\subsection{Notation}
\label{sec:problem:setup}

The observed mixture is a single channel $x(n) = s_1(n) + s_2(n)$. Analysis is a short-time Fourier transform with window $w$ of length $L$, hop $L/2$ and $F = L/2 + 1$ non-redundant bins, the window being a periodic Hann window throughout, which satisfies the constant-overlap-add condition at that hop. Write $\mathbf{x}_t \in \Cp^F$ for the mixture spectrum of frame $t$ and $\mathbf{s}_{i,t} \in \Cp^F$ for the spectrum of source $i$ in that frame, so that linearity of the transform gives $\mathbf{x}_t = \mathbf{s}_{1,t} + \mathbf{s}_{2,t}$ exactly and frame by frame. Every statement below is made for a single frame, the index $t$ being dropped, and $x[f]$ and $s[f]$ denote the mixture and the source of interest in bin $f$. Real and imaginary parts are stacked into $\bfx = [\Real\mathbf{x}; \Imag\mathbf{x}]$ and $\bfs = [\Real\mathbf{s}; \Imag\mathbf{s}]$, both in $\Rp^{d}$ with $d = 2F$. The stacking map is a real-linear bijection $\Cp^F \to \Rp^d$, so nothing is lost or added by working in $\Rp^d$; what it changes is the class of operators one is allowed to write down, and that is the whole subject of this paper. A single bin read through the same map is the plane $\Rp^2 = \Cp$ carrying the real inner product $\langle u,v\rangle = \Real(u\overline{v})$, in which $\theta = \angle(s,x)$ denotes the angle between source and mixture.

The problem is then the following. A separator receives $\bfx$, returns an estimate $\hat{\bfs}$ of the source of interest, and is judged by the squared error $\|\bfs - \hat{\bfs}\|^2$. It belongs to the format in use in the field when its output in each bin is a real multiple of the mixture in that bin,
\begin{equation}
\Mclass_1 = \left\{ \hat{\mathbf{s}} : \hat{s}[f] = m[f]\,x[f],\ m[f] \in \Rp \right\},
\label{eq:class}
\end{equation}
with the nested subclasses $\Mclass_{[0,1]} \subset \Mclass_{+} \subset \Mclass_1$ obtained by restricting each $m[f]$ to $[0,1]$ and to the non-negative reals. Throughout the paper ``the class'' without further qualification means $\Mclass_1$, the widest of the three, and the subclass in play is named explicitly whenever a statement concerns the bounded or the non-negative one; the subscripts of $\Mclass_1$ to $\Mclass_4$ index the steps of the chain of Section~\ref{sec:problem:classes} and are unrelated to the interval subscript of $\Mclass_{[0,1]}$. Three questions follow, and they organise the paper. What is the best member of $\Mclass_1$, and what error does no member of it avoid? What must an estimator sacrifice in order to lie outside $\Mclass_1$? And does lying outside yield anything? The first is answered in this section by a projection argument, the second by the chain of operator classes of Section~\ref{sec:problem:classes} read against the assumptions on a prior in Section~\ref{sec:analysis:outside}, the third by measurement in Section~\ref{sec:results}.

\subsection{The Best Member of $\Mclass_1$ and Its Irreducible Residual}
\label{sec:problem:line}

The first question is settled in one bin and by one projection. Membership in $\Mclass_1$ confines the estimate of that bin to the line $\Rp x$, whatever the estimator has learned and however large it is, so the best point available to it is the orthogonal projection of $s$ onto that line and what it leaves behind is the component of $s$ orthogonal to $x$.

\begin{proposition}[Exact Ceiling of the Class]
\label{prop:ceiling}
For a fixed bin with $x \neq 0$, the minimiser of $|s - m x|^2$ over $m \in \Rp$ is
\begin{equation}
\mstar = \frac{\Real\!\left(s\,\overline{x}\right)}{|x|^2},
\qquad
\min_{m \in \Rp} |s - m x|^2 = |s|^2 \sin^2\theta.
\label{eq:mstar}
\end{equation}
\end{proposition}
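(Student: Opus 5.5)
The plan is to treat the bin as the real plane $\Rp^2 = \Cp$ with the inner product $\langle u,v\rangle = \Real(u\overline{v})$ fixed in Section~\ref{sec:problem:setup}, and to read the minimisation as a one-dimensional least-squares problem in the real variable $m$. First I will expand the objective as a real quadratic,
\begin{equation*}
|s - m x|^2 = |s|^2 - 2m\,\Real\!\left(s\,\overline{x}\right) + m^2 |x|^2 ,
\end{equation*}
using $|a-b|^2 = |a|^2 - 2\Real(a\overline{b}) + |b|^2$ and $m \in \Rp$, so that $m$ passes through both the conjugate and the real part. Since $x \neq 0$ the leading coefficient $|x|^2$ is strictly positive, so the quadratic is strictly convex and has a unique minimiser at the zero of its derivative, which gives $\mstar = \Real(s\overline{x})/|x|^2$. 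Geometrically this is the orthogonal projection of $s$ onto the line $\Rp x$, and I will note that equivalently the residual $s - \mstar x$ satisfies $\langle s - \mstar x, x\rangle = 0$.

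Second, I will substitute $\mstar$ back to get the minimum value $|s|^2 - \Real(s\overline{x})^2/|x|^2$. To convert this into the angle form I will use the definition of $\theta = \angle(s,x)$ through the real inner product: for $s \neq 0$, $\cos\theta = \langle s,x\rangle/(|s||x|) = \Real(s\overline{x})/(|s||x|)$. Then $\Real(s\overline{x})^2/|x|^2 = |s|^2\cos^2\theta$, and the minimum becomes $|s|^2(1-\cos^2\theta) = |s|^2\sin^2\theta$. An equivalent route, which I may give as a check, is Pythagoras: $|s|^2 = |\mstar x|^2 + |s-\mstar x|^2$ by the orthogonality just noted, and $|\mstar x| = |s||\cos\theta|$.

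The only step requiring care is the degenerate case $s = 0$, where $\theta$ is undefined. There $\mstar = 0$ and the minimum is $0$, so the identity holds under any convention for $\theta$, and I will state this explicitly. Beyond that there is no real obstacle. The one point worth making explicit is that the inner product is the real part of $s\overline{x}$ rather than the complex product. This distinction is exactly what makes the optimum over $m \in \Rp$ differ from the unconstrained complex ratio $s/x$, which has zero residual.
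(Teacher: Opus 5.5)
Your proposal is correct and follows the paper's proof: you expand $|s - mx|^2$ as a real quadratic in $m$, set its derivative to zero to get $\mstar$, substitute back, and rewrite the result using $\cos\theta = \Real(s\overline{x})/(|s||x|)$. Your separate treatment of the case $s = 0$, where $\theta$ is undefined, is a small addition that the paper leaves implicit.
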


\begin{IEEEproof}
$|s - mx|^2 = m^2|x|^2 - 2m\Real(s\overline{x}) + |s|^2$ is a real quadratic in $m$ with positive leading coefficient; setting its derivative to zero gives $\mstar$, and substituting gives $|s|^2 - \Real(s\overline{x})^2/|x|^2 = |s|^2(1 - \cos^2\theta)$, where $\cos\theta = \Real(s\overline{x})/(|s||x|)$.
\end{IEEEproof}

Summed over bins and frames, the ceiling of the whole class expressed as a signal-to-distortion ratio is
\begin{equation}
\SDRstar = -10\log_{10} \frac{\sum_{t,f} |s|^2 \sin^2\theta}{\sum_{t,f} |s|^2},
\label{eq:sdrstar}
\end{equation}
an energy-weighted average of $\sin^2\theta$ alone.

A bin where $x = 0$ while $s \neq 0$ falls outside Proposition~\ref{prop:ceiling}, since $\mstar$ is then undefined and no member of the class can output anything but zero there. We adopt the convention that such a bin contributes its whole energy $|s|^2$ to the numerator of \eqref{eq:sdrstar} as well as to its denominator, which is the value obtained by continuity from $\sin^2\theta = 1$, and we fix $\mstar = 0$ there. The case does not arise for generic material and is recorded only so that \eqref{eq:sdrstar} is a definition and not an almost-everywhere statement.

Proposition~\ref{prop:ceiling} is the projection theorem in dimension one, and its residual is one leg of a right triangle (Fig.~\ref{fig:geometry}). Three consequences are used throughout. The residual $|s|^2\sin^2\theta$ depends on the data alone, so no amount of training, parameters or compute moves it, and the ceiling is falsifiable independently of any baseline. The statement concerns a class of estimators and not a way of building one, so it applies to discriminative and generative systems alike: the ideal ratio mask, the oracle Wiener filter, every sigmoid-output mask network and every generative separator whose final step multiplies the mixture by an estimated real ratio lies in one of the three subclasses of \eqref{eq:class}, below the same ceiling. And the ceiling is a statistic of the angular disagreement between source and mixture, reportable per corpus and per frame length with no system involved.

\begin{remark}[The Angle as Inter-Source Interference]
\label{rem:sintheta}
The angle is not a property of the source alone but a measure of its interference with what accompanies it. Write $n = x - s$ for the sum of the other sources in the bin and $\Delta\varphi = \angle s - \angle n$ for their phase difference. Then the residual leg of Proposition~\ref{prop:ceiling} is the quadrature part of the interference term, normalised by the mixture,
\begin{equation}
|s|\sin\theta = \frac{|\Imag(s\bar n)|}{|x|} = \frac{|s|\,|n|\,|\sin\Delta\varphi|}{|x|},
\label{eq:sinint}
\end{equation}
and, with $r = |n|/|s|$ the interference-to-source amplitude ratio,
\begin{equation}
\sin^2\theta = \frac{r^2\sin^2\Delta\varphi}{1 + r^2 + 2r\cos\Delta\varphi}.
\label{eq:sinratio}
\end{equation}
Geometrically $|\Imag(s\bar n)|$ is twice the area of the triangle with vertices $0$, $s$ and $x$, so \eqref{eq:sinint} reads the residual as the height of that triangle over the side carried by the mixture (Fig.~\ref{fig:interference}a). Three readings follow, and each is a statement about the material, not about an estimator.
\end{remark}

First, the residual vanishes exactly when the two contributions are collinear in the bin, in phase or in antiphase, and its maximum over the phase difference lies strictly beyond quadrature: \eqref{eq:sinratio} is zero at $\Delta\varphi = 0$ and at $\Delta\varphi = \pi$, and maximal at $\cos\Delta\varphi = -\min(r,1/r)$, which approaches quadrature as the ratio becomes extreme in either direction and antiphase as the two amplitudes approach each other (Fig.~\ref{fig:interference}b). The case $|n| = |s|$ is the single exception, and it reflects the convention already fixed above, not a discontinuity of the formula: there \eqref{eq:sinratio} reduces to $\sin^2\theta = \sin^2(\Delta\varphi/2)$, which climbs to one at antiphase, since the mixture itself vanishes, the bin carries no direction to project on, and the whole energy of the source is lost instead of none of it.

Second, the residual is bounded by the weaker of the two contributions,
\begin{equation}
|s|\sin\theta \le \min(|s|,|n|), \qquad \text{i.e.} \qquad \sin^2\theta \le \min(1,r^2),
\label{eq:sinbound}
\end{equation}
with equality at the maximising phase difference above. The proof is one line: both $0$ and $x$ lie on $\Rp x$, so the distance from $s$ to that line is at most $\min(|s - 0|, |s - x|)$. The bound is the quantitative form of a familiar fact. A bin dominated by the source of interest, $r \ll 1$, is safe for the format whatever the phases do, since the residual cannot exceed $r$ times the source amplitude; a bin where the interference matches or exceeds the source can lose the source entirely, and does so at a phase difference that moves towards quadrature as $r$ grows. The ceiling of Proposition~\ref{prop:ceiling} is therefore an energy-weighted count of how much of a mixture sits in bins of the second kind.

Third, $\sin^2\theta = 1 - \rho^2$ with $\rho = \Real(s\bar x)/(|s||x|)$ the instantaneous in-phase correlation between source and mixture, so the residual of \eqref{eq:sdrstar} is the exact instantaneous counterpart of the $(1 - \gamma^2)\sigma_s^2$ left by a Wiener filter at coherence $\gamma$, with one bin of one frame in place of an expectation. This is the sense in which the format is not merely constrained but constrained by a quantity the field already reports: the same non-circularity diagnostics measured in Section~\ref{sec:results:diagnostics}, and in particular the quadrature correlation $\Imag c_{sx}$, are what \eqref{eq:sinint} makes responsible for the residual.

The two restrictions of \eqref{eq:class} are proper ones, since $\mstar$ is neither bounded by one nor non-negative: it exceeds one whenever $|s|\cos\theta > |x|$, that is as soon as the two sources partially cancel in the bin, and it is negative whenever $\theta$ exceeds $\pi/2$. The two subclasses therefore carry ceilings of their own, and those are obtained from Proposition~\ref{prop:ceiling} without further argument, the criterion being a convex scalar quadratic whose constrained minimiser is the projection of $\mstar$ onto the admissible interval.

\begin{corollary}[Ceilings of the Subclasses]
\label{cor:subceilings}
Let $I \subseteq \Rp$ be a closed interval and $\mstar_I = \operatorname{clip}_I(\mstar)$ the point of $I$ nearest $\mstar$. For a fixed bin with $x \neq 0$,
\begin{equation}
\min_{m \in I} |s - m x|^2 = |s|^2\sin^2\theta + |x|^2 \left( \mstar - \mstar_I \right)^2 .
\label{eq:subceiling}
\end{equation}
In particular the ceilings of $\Mclass_{[0,1]}$ and $\Mclass_{+}$ are given by \eqref{eq:sdrstar} with the numerator taken from \eqref{eq:subceiling} at $I = [0,1]$ and $I = [0,\infty)$, and $\SDRstar_{[0,1]} \leq \SDRstar_{+} \leq \SDRstar$.
\end{corollary}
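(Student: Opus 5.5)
The plan is to rewrite the scalar quadratic from the proof of Proposition~\ref{prop:ceiling} around its unconstrained minimiser. For $x \neq 0$, completing the square gives
\begin{equation*}
|s - m x|^2 = |x|^2 (m - \mstar)^2 + |s|^2 - \frac{\Real(s\overline{x})^2}{|x|^2} = |x|^2 (m - \mstar)^2 + |s|^2\sin^2\theta ,
\end{equation*}
where the constant term is exactly the residual already computed in Proposition~\ref{prop:ceiling}. The identity is checked by expanding both sides with $\mstar|x|^2 = \Real(s\overline{x})$.

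With this form, minimising over $m \in I$ amounts to minimising $(m - \mstar)^2$ over the closed interval $I$. The second term does not depend on $m$, and $|x|^2 > 0$. The minimiser is therefore the point of $I$ nearest $\mstar$, namely $\mstar_I = \operatorname{clip}_I(\mstar)$. This point exists and is unique because $I$ is closed, convex and nonempty. Substituting $m = \mstar_I$ gives \eqref{eq:subceiling}.

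The statements about the subclasses follow by aggregation. Membership in $\Mclass_{[0,1]}$ or $\Mclass_{+}$ constrains each $m[f]$ independently, and the total squared error is a sum over bins and frames. The class optimum is therefore attained bin by bin, and the numerator of \eqref{eq:sdrstar} is replaced by the sum of \eqref{eq:subceiling} at $I = [0,1]$ or $I = [0,\infty)$. Bins with $x = 0$ keep the convention fixed earlier: they contribute $|s|^2$ in every class, since then $m$ has no effect.

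The ordering comes from monotonicity of the minimum over nested feasible sets. Since $[0,1] \subseteq [0,\infty) \subseteq \Rp$, the per-bin minima are nondecreasing in that order, so the summed numerators are too. The denominator $\sum |s|^2$ is common to all three ceilings and $-10\log_{10}$ is decreasing, which gives $\SDRstar_{[0,1]} \le \SDRstar_{+} \le \SDRstar$. Alternatively, $|\mstar - \mstar_I|$ is the distance from $\mstar$ to $I$, and this distance can only grow as $I$ shrinks.

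None of these steps is a real obstacle. The only points needing care are that $I$ is nonempty, so that the projection is defined, and that the $x = 0$ convention is applied uniformly across the three classes.
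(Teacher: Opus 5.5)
Your proof is correct and follows the paper's argument. You complete the square to obtain the exact identity $|s-mx|^2 = |s|^2\sin^2\theta + |x|^2(m-\mstar)^2$, minimise the second term over the closed interval at its nearest point $\mstar_I$, and order the three ceilings by the nesting $[0,1]\subset[0,\infty)\subset\Rp$; your remarks on bin-wise aggregation, the common denominator and the uniform treatment of bins with $x=0$ fill in details the paper leaves implicit.
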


\begin{IEEEproof}
Completing the square in the quadratic of Proposition~\ref{prop:ceiling} gives $|s - mx|^2 = |s|^2\sin^2\theta + |x|^2(m - \mstar)^2$ for every $m$, an exact identity. The map $m \mapsto (m - \mstar)^2$ is strictly convex with minimum at $\mstar$, so its minimiser over a closed interval is the nearest point of that interval, which is $\mstar_I$, and substituting gives \eqref{eq:subceiling}. The second term is non-negative and increases when $I$ shrinks, which orders the three ceilings since $[0,1] \subset [0,\infty) \subset \Rp$.
\end{IEEEproof}

The penalty term of \eqref{eq:subceiling} deserves reading in the two cases where it fires, because each has a physical meaning. Where $\mstar$ is negative the best non-negative mask is zero, the estimate of that bin is zero and the residual is the whole energy $|s|^2$ of the source: a bin in which source and mixture disagree in phase by more than $\pi/2$ is lost entirely to any non-negative mask. Where $\mstar$ exceeds one the best bounded mask is one, the estimate is the mixture itself, and \eqref{eq:subceiling} then reads $|s - x|^2 = |s'|^2$, the energy of the \emph{other} source $s' = x - s$. The bins in which the two sources partially cancel therefore lose exactly the energy of the interference, which is the quantity a mask bounded by one is unable to exceed by construction. The three ceilings are measured side by side in Section~\ref{sec:results:ceiling}, and the gaps between them quantify what the sigmoid output layer of a mask network forfeits before any learning takes place.

Nothing in Proposition~\ref{prop:ceiling} or in Corollary~\ref{cor:subceilings} uses the number of sources, and a mixture of more than two raises a constraint the two-source case hides. When every source of $x = \sum_i s_i$ is estimated by a real mask, the masks are asked to form a partition of unity, $\sum_i m_i = 1$, since anything else either loses or duplicates part of the mixture; this is what the softmax output layer of a multi-source mask network enforces by construction. The constraint turns out to be inactive at the optimum, and to be broken by the very restrictions of \eqref{eq:class} that a network applies in order to enforce it.

\begin{corollary}[Partition of Unity at the Optimum]
\label{cor:partition}
Let $x = \sum_{i=1}^{J} s_i$ in a bin with $x \neq 0$, let $\theta_i = \angle(s_i,x)$, and estimate each source by a member of $\Mclass_1$, $\hat{s}_i = m_i x$. Minimising $|s_i - m_i x|^2$ over $m_i \in \Rp$ separately for each $i$ gives
\begin{equation}
\sum_{i=1}^{J} \mstar_i = 1,
\qquad
\min_{m_i \in \Rp} |s_i - m_i x|^2 = |s_i|^2 \sin^2\theta_i,
\label{eq:partition}
\end{equation}
so the $J$ unconstrained optima already partition the mixture and each residual is the one Proposition~\ref{prop:ceiling} gives for that source alone. The same holds when the real gain is replaced by a complex one, the class $\Mclass_2$ defined in Section~\ref{sec:problem:classes}, where the optimal complex gain on source $i$ is the exact quotient $g_i = s_i/x$ and these quotients sum to one as well, a fact restated in its own right as Proposition~\ref{prop:degenerate} below. Under a restriction of each $m_i$ to a closed interval $I \subsetneq \Rp$ the property is no longer automatic: $\sum_i \operatorname{clip}_I(\mstar_i) = 1$ holds for every bin when $J = 2$ and $I = [0,1]$, and fails in general for $I = [0,\infty)$ at $J = 2$ and for $I = [0,1]$ as soon as $J \geq 3$.
\end{corollary}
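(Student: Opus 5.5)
The plan is to apply Proposition~\ref{prop:ceiling} to each source separately and then exploit linearity of the real part. For each $i$ the minimiser is $\mstar_i = \Real(s_i\overline{x})/|x|^2$ and the minimum is $|s_i|^2\sin^2\theta_i$; the second half of \eqref{eq:partition} is therefore immediate. For the first half, sum the minimisers and use $\sum_i s_i = x$ with real-linearity of $u \mapsto \Real(u\overline{x})$:
\begin{equation*}
\sum_{i=1}^{J} \mstar_i = \frac{\Real\bigl(\sum_i s_i\,\overline{x}\bigr)}{|x|^2} = \frac{\Real(x\overline{x})}{|x|^2} = 1 .
\end{equation*}
The complex-gain statement follows the same way. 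The unconstrained minimiser of $|s_i - g x|^2$ over $g \in \Cp$ is $g_i = s_i/x$, with zero residual, and $\sum_i g_i = x/x = 1$. I will only note this here and defer to Proposition~\ref{prop:degenerate}.

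For the clipped cases, I start with $J=2$ and $I=[0,1]$. Here $\mstar_2 = 1 - \mstar_1$, and $m \mapsto 1-m$ maps $[0,1]$ onto itself and commutes with the nearest-point map onto it. Hence $\operatorname{clip}_{[0,1]}(\mstar_2) = 1 - \operatorname{clip}_{[0,1]}(\mstar_1)$, and the sum is $1$ in every bin. I will check the three cases explicitly: $\mstar_1 < 0$ gives $(0,1)$, $\mstar_1 > 1$ gives $(1,0)$, and the interior case gives $(\mstar_1, 1-\mstar_1)$.

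The failures need concrete counterexamples, since the claim is that the property fails in general. For $I = [0,\infty)$ with $J=2$, any bin with $\mstar_1 < 0$ works: then $\mstar_2 = 1 - \mstar_1 > 1$, so the clipped sum is $0 + (1-\mstar_1) > 1$. Such a bin exists, for instance $x = 1$, $s_1 = -1$, $s_2 = 2$, which are partially cancelling sources. For $I=[0,1]$ with $J \geq 3$, I would take $x = 1$, $s_1 = 2$, $s_2 = s_3 = -1/2$, and any remaining sources zero. The unclipped gains are $(2,-1/2,-1/2)$, which sum to $1$, while the clipped gains are $(1,0,0)$, which also sum to $1$. This shows that the counterexample must be chosen with care, so I will instead use $s_1 = 2$, $s_2 = -2$, $s_3 = 1$. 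The unclipped gains are $(2,-2,1)$ and the clipped gains are $(1,0,1)$, summing to $2 \neq 1$.

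The main obstacle is stating precisely what ``fails in general'' means. The reason it fails is that clipping is not affine and does not commute with the sum once the symmetry $m \mapsto 1-m$ is unavailable, which happens for an unbounded interval or for more than two summands. With that in mind, exhibiting one explicit bin per case is sufficient. Everything else is one line of linearity.
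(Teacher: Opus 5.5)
Your proposal is correct and follows essentially the same route as the paper: Proposition~\ref{prop:ceiling} applied per source, real-linearity of $u \mapsto \Real(u\overline{x})$ for the sum, the quotient $s_i/x$ for $\Mclass_2$, and explicit counterexamples for the clipped cases. The differences are cosmetic. You phrase the case $J=2$, $I=[0,1]$ through the reflection $m \mapsto 1-m$ commuting with clipping, where the paper checks the three cases directly. You also realise your counterexamples as concrete bins and pad with zero sources to cover every $J \geq 3$, where the paper gives an admissible gain triple such as $(1.5, 1.2, -1.7)$. In the write-up, drop the abandoned first attempt $(2,-1/2,-1/2)$.
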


\begin{IEEEproof}
No constraint couples the $m_i$, so the joint minimiser is the vector of individual minimisers and Proposition~\ref{prop:ceiling} applies to each source with the same mixture, which gives the residuals and $\mstar_i = \Real(s_i\overline{x})/|x|^2$. Summing over $i$ and using $\sum_i s_i = x$ gives $\sum_i \mstar_i = \Real(x\overline{x})/|x|^2 = 1$, and the same computation with $\sum_i s_i/x = 1$ settles $\Mclass_2$. For $J = 2$ and $I = [0,1]$ write $\mstar_2 = 1 - \mstar_1$: if $\mstar_1 \in [0,1]$ then so is $\mstar_2$ and neither is clipped; if $\mstar_1 > 1$ then $\mstar_2 < 0$ and the clipped pair is $(1,0)$; if $\mstar_1 < 0$ the clipped pair is $(0,1)$. The sum is one in all three cases. For $I = [0,\infty)$ and $J = 2$, take $\mstar_1 < 0$, so the clipped pair is $(0, 1 - \mstar_1)$ and sums to $1 - \mstar_1 > 1$. For $I = [0,1]$ and $J = 3$, take $(\mstar_1,\mstar_2,\mstar_3) = (1.5,\,1.2,\,-1.7)$, admissible since the three sum to one, whose clipped image $(1,1,0)$ sums to two.
\end{IEEEproof}

At the optimum of the unconstrained class the partition follows from the mixing identity, so enforcing it explicitly adds nothing and the per-source ceilings of \eqref{eq:sdrstar} remain valid one source at a time. What bites is the interaction between the partition and the restrictions of \eqref{eq:class}. The two-source bounded case is benign, the clipping of one gain being compensated exactly by the clipping of the other. Non-negative gains break the partition already at two sources, and the bounded case breaks at three, in the bins where two sources satisfy $|s_j|\cos\theta_j > |x|$ and the clipping of both cannot be compensated by the third. A multi-source separator with a bounded output layer therefore faces a genuine choice in those bins, between a partition that does not sum to the mixture and a renormalisation that leaves the per-source optimum, and Corollary~\ref{cor:subceilings} quantifies the second option bin by bin. The measurements of this paper are all made with two sources, and the experimental question of a hierarchical prior over more than two sources, raised again in Section~\ref{sec:model:mixture}, is left open.

\begin{figure}[pos=tbp]
\centering
\begin{tikzpicture}[scale=2.10,>=stealth,line join=round]

\begin{scope}
  \coordinate (O) at (0,0);
  \coordinate (X) at (1.5,0.35);
  \coordinate (S) at (0.95,1.05);
  \coordinate (P) at ($(O)!0.755532!(X)$);
  \draw[gray!40] (-0.25,0) -- (1.9,0);
  \draw[gray!40] (0,-0.2) -- (0,1.35);
  \node[gray!60,anchor=north east,font=\scriptsize] at (1.9,0) {$\Re$};
  \node[gray!60,anchor=north east,font=\scriptsize] at (0,1.35) {$\Im$};
  \draw[gray!55,dashed] (O) -- ($(X)!2.05!(O)$);
  \node[gray!70,anchor=west,font=\scriptsize] at (1.7,0.47) {$\mathbb{R}x$};
  \draw[very thick,->] (O) -- (X) node[anchor=north west,font=\small] {$x$};
  \draw[very thick,->] (O) -- (S) node[anchor=south east,font=\small] {$s$};
  \draw[thick,densely dotted,->] (O) -- (P);
  \node[anchor=north east,font=\scriptsize] at ($(P)+(-0.01,-0.03)$) {$m^\star x$};
  \draw[thick,red] (P) -- (S);
  \node[red,anchor=west,font=\scriptsize] at ($(P)!0.5!(S)+(0.04,0)$) {$|s|\sin\theta$};
  \coordinate (Ra) at ($(P)!0.11cm!(S)$);
  \coordinate (Rb) at ($(P)!0.11cm!(O)$);
  \draw[red,line width=0.4pt] (Ra) -- ($(Ra)+(Rb)-(P)$) -- (Rb);
  \draw[line width=0.5pt] (0.34,0.079) arc[start angle=13.1,end angle=47.9,radius=0.35];
  \node[font=\scriptsize] at (0.46,0.25) {$\theta$};
  \fill (O) circle (0.018);
  \node[anchor=north,font=\scriptsize] at (0.85,-0.45) {(a) the ceiling of the class};
\end{scope}

\begin{scope}[shift={(2.75,0)}]
  \coordinate (O) at (0,0);
  \coordinate (X) at (1.5,0.35);
  \coordinate (S1) at (0.95,1.05);
  \coordinate (P) at ($(O)!0.755532!(X)$);
  \draw[gray!40] (-0.25,0) -- (1.9,0);
  \draw[gray!40] (0,-0.2) -- (0,1.35);
  \node[gray!60,anchor=north east,font=\scriptsize] at (1.9,0) {$\Re$};
  \node[gray!60,anchor=north east,font=\scriptsize] at (0,1.35) {$\Im$};
  \draw[gray!55,dashed] (O) -- ($(X)!1.30!(O)$);
  \node[gray!70,anchor=south east,font=\scriptsize] at (1.88,0.42) {$\mathbb{R}x$};
  \draw[very thick,->] (O) -- (S1) node[anchor=south east,font=\small,inner sep=1pt] {$s_1$};
  \draw[very thick,->] (S1) -- (X);
  \node[anchor=west,font=\small,inner sep=2pt] at (1.28,0.76) {$s_2$};
  \draw[very thick,->] (O) -- (X);
  \node[anchor=north west,font=\small,inner sep=2pt] at (1.54,0.31) {$x$};
  \draw[thick,densely dotted,->] (O) -- (P);
  \draw[thick,densely dotted,->] (P) -- (X);
  \fill (P) circle (0.022);
  \node[anchor=north,font=\scriptsize,inner sep=2pt] at ($(O)!0.42!(X)$) {$m^\star_1 x$};
  \node[anchor=north,font=\scriptsize,inner sep=2pt] at ($(P)!0.55!(X)$) {$m^\star_2 x$};
  \draw[thick,red] (P) -- (S1);
  \coordinate (Ra) at ($(P)!0.11cm!(S1)$);
  \coordinate (Rb) at ($(P)!0.11cm!(O)$);
  \draw[red,line width=0.4pt] (Ra) -- ($(Ra)+(Rb)-(P)$) -- (Rb);
  \fill (O) circle (0.018);
  \node[anchor=north,font=\scriptsize,align=center] at (0.85,-0.45) {(b) $m^\star_1 + m^\star_2 = 1$};
\end{scope}

\end{tikzpicture}
\caption{One time-frequency bin of the plane $\Rp^2 = \Cp$. (a) A real mask can only output points of the line $\Rp x$, so its best estimate is the orthogonal projection $\mstar x$ and its residual is the leg $|s|\sin\theta$ orthogonal to the mixture. (b) The two sources of a mixture project onto the same line, and since $s_1 + s_2 = x$ the two projections meet head to tail exactly at $x$: the optimal gains partition the unit, which is Corollary~\ref{cor:partition}. The single red leg is also the content of Remark~\ref{rem:shared}, the two residuals differing by a sign.}
\label{fig:geometry}
\end{figure}

\begin{figure}[pos=tbp]
\centering
\begin{tikzpicture}[scale=2.00,>=stealth,line join=round]
  \coordinate (O) at (0,0);
  \coordinate (X) at (1.90,1.07);
  \coordinate (S) at (0.55,0.95);
  \coordinate (P) at ($(O)!0.4335!(X)$);

  \draw[gray!40,->] (-0.18,0) -- (2.12,0) node[anchor=west,font=\scriptsize,black!70] {$\Re$};
  \draw[gray!40,->] (0,-0.18) -- (0,1.35) node[anchor=south,font=\scriptsize,black!70] {$\Im$};

  \fill[blue!7] (O) -- (S) -- (X) -- cycle;

  \draw[dashed,gray!55] ($(O)!-0.10!(X)$) -- ($(O)!1.14!(X)$);
  \node[anchor=south west,font=\scriptsize,gray!70] at ($(O)!1.14!(X)$) {$\Rp x$};

  \draw[very thick,->] (O) -- (X) node[anchor=north west,font=\small] {$x$};
  \draw[very thick,->] (O) -- (S) node[anchor=south east,font=\small] {$s$};
  \draw[very thick,->,black!55] (S) -- (X) node[midway,anchor=south west,font=\small] {$n$};

  \draw[very thick,red!75!black,->] (S) -- (P);
  \node[anchor=east,font=\scriptsize,red!75!black] at ($(S)!0.55!(P)$) {$|s|\sin\theta$};
  \draw[red!75!black,line width=0.35pt]
    ($(P)!0.11cm!(S)$) -- ++($($(P)!0.11cm!(O)$)-(P)$) -- ($(P)!0.11cm!(O)$);

  \draw[dashed,black!45,->] (O) -- (0.598,0.053);
  \draw[black!70,line width=0.4pt] (0.30,0.0267) arc [start angle=5.08, end angle=59.93, radius=0.301];
  \node[font=\scriptsize,anchor=west] at (0.34,0.20) {$\Delta\varphi$};

  \node[anchor=north,font=\scriptsize] at (0.95,-0.40) {(a) the residual as a triangle height};
\end{tikzpicture}
\hfill
\begin{tikzpicture}
\begin{axis}[
  width=6.4cm, height=4.6cm, scale only axis,
  xmin=0, xmax=180, ymin=0, ymax=1.08,
  xtick={0,45,90,135,180}, xticklabels={$0$,$45$,$90$,$135$,$180$},
  ytick={0,0.25,0.5,0.75,1},
  xlabel={$\Delta\varphi$ (degrees)}, ylabel={$\sin^2\theta$},
  label style={font=\scriptsize}, tick label style={font=\scriptsize},
  legend style={font=\scriptsize, at={(0.03,0.97)}, anchor=north west,
                draw=gray!50, fill=white, fill opacity=0.85, text opacity=1,
                row sep=-1pt, inner sep=2pt},
  legend cell align=left,
  axis lines=left, every axis plot/.append style={line width=0.8pt},
  clip mode=individual,
]
  \draw[dashed,gray!60] (axis cs:0,0.0625) -- (axis cs:180,0.0625);
  \draw[dashed,gray!60] (axis cs:0,0.25)   -- (axis cs:180,0.25);
  \draw[dashed,gray!60] (axis cs:0,1)      -- (axis cs:180,1);

  \addplot[black!45,   domain=0:179.4, samples=400]
    {0.0625*sin(x)^2/(1.0625+0.5*cos(x))};   \addlegendentry{$r=0.25$}
  \addplot[blue!60!black, domain=0:179.4, samples=400]
    {0.25*sin(x)^2/(1.25+cos(x))};           \addlegendentry{$r=0.5$}
  \addplot[red!75!black, domain=0:180, samples=400]
    {sin(x/2)^2};                            \addlegendentry{$r=1$}
  \addplot[teal!70!black, domain=0:179.4, samples=400]
    {4*sin(x)^2/(5+4*cos(x))};               \addlegendentry{$r=2$}
\end{axis}
\node[anchor=north,font=\scriptsize] at (3.2,-0.95) {(b) the residual against the phase difference};
\end{tikzpicture}
\caption{Physical reading of $\sin\theta$ in one bin, with $n = x - s$ the sum of the other sources and $\Delta\varphi$ their phase difference with the source of interest. (a) The residual $|s|\sin\theta$ is the height of the triangle $(0,s,x)$ over the side carried by the mixture, that is the area $\tfrac{1}{2}|\Imag(s\bar n)|$ divided by $\tfrac{1}{2}|x|$, which is \eqref{eq:sinint}. (b) The residual against $\Delta\varphi$ from \eqref{eq:sinratio}, for four amplitude ratios $r = |n|/|s|$. The dashed lines are the bound $\min(1,r^2)$ of \eqref{eq:sinbound}, reached at $\cos\Delta\varphi = -\min(r,1/r)$: a bin dominated by the source is safe whatever the phases do, a bin where the interference matches the source can lose it entirely.}
\label{fig:interference}
\end{figure}

\subsection{One Gain for a Whole Spectrum}
\label{sec:problem:vector}

Proposition~\ref{prop:ceiling} is stated bin by bin because that is the format in use, but the projection it rests on does not depend on the dimension, and stating it once in $\Cp^F$ makes visible what the per-bin statement is a special case of.

\begin{lemma}[Projection Onto the Mixture Direction]
\label{lem:vector}
Let $\mathbf{s},\mathbf{x} \in \Cp^F$ with $\mathbf{x} \neq 0$, and regard $\Cp^F$ as $\Rp^{2F}$ with the real inner product $\langle \mathbf{u},\mathbf{v}\rangle = \Real(\mathbf{v}^{\mathsf{H}}\mathbf{u})$. The minimiser of $\|\mathbf{s} - m\mathbf{x}\|^2$ over $m \in \Rp$ is
\begin{equation}
M^\star = \frac{\Real\!\left(\mathbf{x}^{\mathsf{H}}\mathbf{s}\right)}{\|\mathbf{x}\|^2},
\qquad
\min_{m \in \Rp} \|\mathbf{s} - m\mathbf{x}\|^2 = \|\mathbf{s}\|^2 \sin^2\Theta,
\label{eq:vectorproj}
\end{equation}
with $\Theta$ the angle between $\mathbf{s}$ and $\mathbf{x}$ in $\Rp^{2F}$.
\end{lemma}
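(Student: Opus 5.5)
The plan is to repeat the argument of Proposition~\ref{prop:ceiling} verbatim in $\Rp^{2F}$, since nothing in it used the dimension. I first check that the real inner product $\langle \mathbf{u},\mathbf{v}\rangle = \Real(\mathbf{v}^{\mathsf{H}}\mathbf{u})$ is the Euclidean inner product of the stacked vectors. Writing $\mathbf{u} = \mathbf{a} + i\mathbf{b}$ and $\mathbf{v} = \mathbf{c} + i\mathbf{d}$ gives $\Real(\mathbf{v}^{\mathsf{H}}\mathbf{u}) = \mathbf{c}^{\mathsf{T}}\mathbf{a} + \mathbf{d}^{\mathsf{T}}\mathbf{b}$, which is the dot product of the stacked vectors. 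Consequently $\|\mathbf{u}\|^2 = \langle\mathbf{u},\mathbf{u}\rangle$ is the complex norm, and the angle $\Theta$ is the usual Euclidean angle, defined by $\cos\Theta = \langle\mathbf{s},\mathbf{x}\rangle/(\|\mathbf{s}\|\|\mathbf{x}\|)$ when $\mathbf{s}\neq 0$.

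Next I expand the criterion as a real quadratic in $m$ by bilinearity:
\begin{equation*}
\|\mathbf{s} - m\mathbf{x}\|^2 = m^2\|\mathbf{x}\|^2 - 2m\,\Real(\mathbf{x}^{\mathsf{H}}\mathbf{s}) + \|\mathbf{s}\|^2 .
\end{equation*}
The leading coefficient $\|\mathbf{x}\|^2$ is positive because $\mathbf{x}\neq 0$, so the quadratic has a unique minimiser, found by setting the derivative to zero; this gives $M^\star$. To get the minimum value I complete the square, exactly as in the proof of Corollary~\ref{cor:subceilings}:
\begin{equation*}
\|\mathbf{s} - m\mathbf{x}\|^2 = \|\mathbf{s}\|^2 - \frac{\Real(\mathbf{x}^{\mathsf{H}}\mathbf{s})^2}{\|\mathbf{x}\|^2} + \|\mathbf{x}\|^2 (m - M^\star)^2 .
\end{equation*}
Evaluating at $m = M^\star$ leaves the first two terms, which equal $\|\mathbf{s}\|^2(1-\cos^2\Theta) = \|\mathbf{s}\|^2\sin^2\Theta$.

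The only point needing care is the degenerate case $\mathbf{s}=0$, where $\Theta$ is undefined. There the minimum is $0$, attained at $M^\star=0$, so the identity holds under any convention for $\Theta$. I will note this in one line rather than treat it as a real obstacle.

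Finally I would remark, without claiming it is needed for the proof, why the per-bin statement is a special case. When $F=1$ the lemma reduces to Proposition~\ref{prop:ceiling}. The per-bin ceiling corresponds to letting $m$ vary with $f$; since the criterion then separates across bins, its optimum is at least as good as the single-gain residual $\|\mathbf{s}\|^2\sin^2\Theta$.
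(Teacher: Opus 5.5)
Your proof is correct and follows the paper's route: the paper also reruns the quadratic argument of Proposition~\ref{prop:ceiling}, with $\|\cdot\|$ in place of $|\cdot|$ and $\langle\mathbf{s},\mathbf{x}\rangle = \Real(\mathbf{x}^{\mathsf{H}}\mathbf{s})$ in place of $\Real(s\overline{x})$; your check that this is the stacked Euclidean product and your aside on $\mathbf{s} = 0$ are harmless details the paper leaves implicit. One small correction to your closing remark: the per-bin residual is no larger because a constant gain is one admissible choice of per-bin gains (inclusion), not because the criterion separates across bins, which only tells you how to compute the per-bin optimum.
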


\begin{IEEEproof}
Identical to Proposition~\ref{prop:ceiling} with $|\cdot|$ replaced by $\|\cdot\|$ and $\Real(s\overline{x})$ by $\langle \mathbf{s},\mathbf{x}\rangle$: the objective is a real quadratic in $m$ with leading coefficient $\|\mathbf{x}\|^2 > 0$, and the residual is the squared norm of the component of $\mathbf{s}$ orthogonal to $\mathbf{x}$.
\end{IEEEproof}

One granularity is thus a special case of another, and the format sits in a hierarchy of three nested constraints, all of them projections onto a line and all of them with a residual governed by an angle: one real gain for the whole excerpt, one per frame, one per bin. Each refinement subdivides the space into more directions and can only lower the residual, so the per-bin optimum $\mstar$ is the finest member of the family and \eqref{eq:sdrstar} is the corresponding ceiling. What the comparison isolates is the contribution of the representation itself: reading $\Theta$ against the distribution of $\theta$ separates what the format loses because it is a real gain from what it recovers by being allowed to vary across bins, and that recovery is the only benefit the time-frequency representation confers on this class.

\subsection{A Hierarchy of Operator Classes}
\label{sec:problem:classes}

The second question, what an estimator must relinquish in order to lie outside $\Mclass_1$, calls for a look at operators in place of estimates. Writing the two on the same space is what makes them comparable: on the stacked vector of $\Rp^d$ a real-linear map is a matrix $\mathbf{A} \in \Rp^{d \times d}$, read as $F \times F$ blocks of size $2 \times 2$, the block $(f,g)$ acting from bin $g$ to bin $f$. Constraining that block structure gives a chain of four classes,
\begin{equation}
\Mclass_1 \subset \Mclass_2 \subset \Mclass_3 \subset \Mclass_4,
\label{eq:chain}
\end{equation}
whose signatures and geometric readings are listed in Table~\ref{tab:classes}. The smallest, $\Mclass_1$, is the real-gain format \eqref{eq:class} itself, each block a scalar multiple of the identity, hence a homothety of the plane. In $\Mclass_2$ each block is a direct similitude, the two-parameter conformal case, angle-preserving and applying one common phase rotation, which is where complex masking and complex ratio masking sit,
\begin{equation}
\Mclass_2 = \left\{ \hat{\mathbf{s}} : \hat{s}[f] = z[f]\,x[f],\ z[f] \in \Cp \right\},
\label{eq:class2}
\end{equation}
each block reading $\left[\begin{smallmatrix} a[f] & -b[f] \\ b[f] & a[f] \end{smallmatrix}\right]$ for $z[f] = a[f] + \mathrm{i}b[f]$. In $\Mclass_3$ each block is an arbitrary element of $\Rp^{2\times2}$, that is of the closure of $\mathrm{GL}(2,\Rp)$, the invertible blocks being the generic case and the singular ones kept so that the class stays convex and contains $\Mclass_2$; such a block stretches the plane by different amounts along different directions and possibly reflects it, the eccentricity of the image ellipse reading off the non-circularity of the underlying covariance; this is the widely linear case of Section~\ref{sec:related:widelylinear},
\begin{equation}
\Mclass_3 = \left\{ \hat{\mathbf{s}} : \hat{\mathbf{s}}[f] = \mathbf{A}_f\,\mathbf{x}[f],\ \mathbf{A}_f \in \Rp^{2\times2} \right\},
\label{eq:class3}
\end{equation}
with $\mathbf{x}[f] = [\Real x[f]; \Imag x[f]] \in \Rp^2$ the stacked real and imaginary parts of bin $f$, and $\hat{\mathbf{s}}[f]$ likewise, no block tying one bin to another. In $\Mclass_4$ the off-diagonal blocks are free as well, so the image in one bin depends on the content of the others,
\begin{equation}
\Mclass_4 = \left\{ \hat{\mathbf{s}} : \hat{\mathbf{s}} = \mathbf{A}\,\bfx,\ \mathbf{A} \in \Rp^{d \times d} \right\},
\label{eq:class4}
\end{equation}
the full matrix of \eqref{eq:chain}, block $(f,g)$ free to couple bin $g$ into the estimate of bin $f$. Allowing a constant offset gives the affine version of each class, which breaks homogeneity and moves the image of the unit circle off the origin.

Two properties of \eqref{eq:chain} are used below. Each inclusion is strict and each step is minimal in the sense that matters here, since it adds exactly one of the three structures that a prior can forgo, and Section~\ref{sec:analysis:outside} establishes that correspondence one to one: non-zero means, non-circularity and inter-frequency coupling. Where the chain stops is therefore decided by what a prior can forfeit instead of by the algebra of block structures, and $\Mclass_4$ is the largest class a prior of the form of Section~\ref{sec:model} can reach. The chain is not an exhaustive classification of the subclasses of $\Rp^{d \times d}$: coupled complex-linear maps, for one, lie outside it, and $\Mclass_2$ appears because complex masking occupies it.

The last column of Table~\ref{tab:classes} makes the position of an estimator readable, and Fig.~\ref{fig:unitcircle} draws it: locating an estimator means naming the smallest class of \eqref{eq:chain} that contains it, and the image of the unit circle under one fitted block decides that on its own, a concentric circle for the two masking classes according to whether it carries a common rotation, an ellipse for $\Mclass_3$, and an ellipse whose axes move with the content of another bin for $\Mclass_4$. An image displaced off the origin belongs to the affine variant of whichever class it otherwise matches. The witness is a property of the operator, so it is available before any estimate is formed, where the scalar witnesses of Section~\ref{sec:analysis:criterion} remain confounded with the conditioning of the model.

\begin{figure}[pos=tbp]
\centering
\begin{tikzpicture}[scale=1.32,>=stealth,line join=round,
  every node/.style={font=\scriptsize},
  ax/.style={gray!45,line width=0.4pt},
  ref/.style={gray!60,dashed,line width=0.5pt},
  img/.style={black,line width=0.9pt},
  ghost/.style={gray!35,line width=0.6pt},
  dir/.style={gray!70,line width=0.5pt},
  hit/.style={black,line width=0.7pt}]

\begin{scope}[shift={(0,0)}]
  \draw[ax] (-1.3,0) -- (1.3,0);
  \draw[ax] (0,-1.3) -- (0,1.3);
  \draw[ref] (0,0) circle (1);
  \draw[img] (0,0) circle (0.72);
  \draw[dir,->] (0,0) -- (0.766,0.643) node[anchor=south west,inner sep=1pt] {$u$};
  \draw[hit,->] (0,0) -- (0.5515,0.4629);
  \fill (0.5515,0.4629) circle (0.035);
  \node[anchor=west,inner sep=2pt] at (0.62,0.40) {$\mathbf{A}u$};
  \node[anchor=north] at (0,-1.5) {(a) $\Mclass_1$, real mask};
\end{scope}

\begin{scope}[shift={(3.1,0)}]
  \draw[ax] (-1.3,0) -- (1.3,0);
  \draw[ax] (0,-1.3) -- (0,1.3);
  \draw[ref] (0,0) circle (1);
  \draw[img] (0,0) circle (0.72);
  \draw[dir,->] (0,0) -- (0.766,0.643) node[anchor=south west,inner sep=1pt] {$u$};
  \draw[hit,->] (0,0) -- (0.2463,0.6766);
  \fill (0.2463,0.6766) circle (0.035);
  \node[anchor=south east,inner sep=2pt] at (0.21,0.73) {$\mathbf{A}u$};
  \draw[gray!70,line width=0.5pt,->] (0.50,0.42) arc[start angle=40,end angle=70,radius=0.655];
  \node[anchor=west,inner sep=1.5pt] at (0.56,0.60) {$\varphi$};
  \node[anchor=north] at (0,-1.5) {(b) $\Mclass_2$, complex mask};
\end{scope}

\begin{scope}[shift={(6.2,0)}]
  \draw[ax] (-1.3,0) -- (1.3,0);
  \draw[ax] (0,-1.3) -- (0,1.3);
  \draw[ref] (0,0) circle (1);
  \draw[img,cm={1.05,0.25,-0.35,0.55,(0,0)}] (0,0) circle (1);
  \draw[dir,->] (0,0) -- (0.766,0.643) node[anchor=south west,inner sep=1pt] {$u$};
  \draw[hit,->] (0,0) -- (0.5792,0.5452);
  \fill (0.5792,0.5452) circle (0.035);
  \node[anchor=west,inner sep=2pt] at (0.66,0.46) {$\mathbf{A}u$};
  \node[anchor=north] at (0,-1.5) {(c) $\Mclass_3$, widely linear};
\end{scope}

\begin{scope}[shift={(9.3,0)}]
  \draw[ax] (-1.3,0) -- (1.3,0);
  \draw[ax] (0,-1.3) -- (0,1.3);
  \draw[ref] (0,0) circle (1);
  \draw[ghost,cm={0.75,0.50,-0.20,0.95,(0,0)}] (0,0) circle (1);
  \draw[ghost,cm={1.20,-0.15,-0.55,0.40,(0,0)}] (0,0) circle (1);
  \draw[img,cm={1.05,0.25,-0.35,0.55,(0,0)}] (0,0) circle (1);
  \draw[dir,->] (0,0) -- (0.766,0.643) node[anchor=south west,inner sep=1pt] {$u$};
  \node[anchor=north] at (0,-1.5) {(d) $\Mclass_4$, coupled};
\end{scope}

\end{tikzpicture}
\caption{The witness of the chain \eqref{eq:chain}: the image (solid) of the unit circle (dashed) under one $2 \times 2$ block of each class, with one marked direction $u$ and its image $\mathbf{A}u$. In (a) the image is a concentric circle and every direction is fixed. In (b) it is again a concentric circle, but all directions rotate by one common angle $\varphi$, which is the two-parameter conformal case. In (c) it is an ellipse, so the applied rotation depends on the direction. In (d) the block is the same as in (c), drawn against two alternatives it takes when the content of the other bins changes, the coupling acting on the axes of the ellipse. An image displaced off the origin marks the affine variant.}
\label{fig:unitcircle}
\end{figure}

\begin{table}[tbp]
\renewcommand{\arraystretch}{1.1}
\caption{The chain $\Mclass_1 \subset \Mclass_2 \subset \Mclass_3 \subset \Mclass_4$ of real-linear operator classes on stacked spectra. Each class is reached from the model of Section~\ref{sec:model} by relaxing one assumption on the prior, and reinstating that assumption returns the estimator to the class below.}
\label{tab:classes}
\centering
\setlength{\tabcolsep}{4pt}
\begin{tabular}{p{1.7in}p{1.35in}p{1.55in}c}
\hline
\textbf{Class} & \textbf{Signature of $\mathbf{A}$} & \textbf{Image of the unit circle} & \textbf{Par./bin} \\
\hline
$\Mclass_1$, real mask & block-diag., blocks $m[f]\mathbf{I}_2$ & concentric circle, every direction fixed & 1 \\
$\Mclass_2$, complex mask & block-diag., blocks $\left[\begin{smallmatrix} a & -b \\ b & a \end{smallmatrix}\right]$ & concentric circle, one common rotation & 2 \\
$\Mclass_3$, widely linear, bin-wise & block-diag., arbitrary $2\times2$ & ellipse, rotation depends on direction & 4 \\
$\Mclass_4$, widely linear, coupled & full matrix & ellipse whose axes depend on other bins & $4F$ \\
\hline
\end{tabular}
\end{table}

\subsection{Two Notions of Ceiling and the Cascade of the Chain}
\label{sec:problem:cascade}

Asking for the best member of a class admits two readings, and they do not measure the same thing. In the first the operator is refitted in every frame with both the source and the mixture known, which is the reading Proposition~\ref{prop:ceiling} uses and the one the oracle literature of Section~\ref{sec:related:oracles} always takes. In the second a single operator is fixed once and judged in expectation over the frames of the material, which is the reading a deployed separator lives under, since no separator is handed the source it is asked to recover. The first reading is the one that gives $\Mclass_1$ a ceiling, and the first thing to say about it is that it gives the rest of the chain none.

\begin{proposition}[Degeneracy of the Per-Frame Chain]
\label{prop:degenerate}
Fix a frame in which $x[f] \neq 0$ for every bin. Then
\begin{equation}
\min_{\hat{\mathbf{s}} \in \Mclass_2} \|\mathbf{s} - \hat{\mathbf{s}}\|^2 = 0,
\label{eq:degenerate}
\end{equation}
attained bin-wise by the complex gain $g[f] = s[f]/x[f]$, and the same holds for $\Mclass_3$ and $\Mclass_4$ by inclusion.
\end{proposition}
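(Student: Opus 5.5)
The plan is to exhibit a member of $\Mclass_2$ with zero error and then pass the conclusion up the chain \eqref{eq:chain} by inclusion. Since the squared error $\|\mathbf{s} - \hat{\mathbf{s}}\|^2$ is non-negative, a single estimator in $\Mclass_2$ attaining zero settles \eqref{eq:degenerate}: the minimum exists, equals zero, and no optimisation argument is needed beyond that lower bound.

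The witness is the one named in the statement. Because $x[f] \neq 0$ in every bin by hypothesis, the quotient $g[f] = s[f]/x[f]$ is a well-defined complex number, so $\hat{s}[f] = g[f]\,x[f]$ has the form required by \eqref{eq:class2}. Then $\hat{s}[f] = s[f]$ in every bin, and the frame error $\sum_f |s[f] - g[f]x[f]|^2$ vanishes term by term. Equivalently, in the block language of Table~\ref{tab:classes}, writing $g[f] = a[f] + \mathrm{i}b[f]$, the similitude block $\left[\begin{smallmatrix} a[f] & -b[f] \\ b[f] & a[f] \end{smallmatrix}\right]$ carries the stacked vector $\mathbf{x}[f]$ to $\mathbf{s}[f]$. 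I will state this reading too, so that the witness is visibly an operator of the class and not only an estimate. It is the one-bin fact that the direct similitudes act transitively on the nonzero vectors of the plane.

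For $\Mclass_3$ and $\Mclass_4$, the inclusions $\Mclass_2 \subset \Mclass_3 \subset \Mclass_4$ hold at the level of operators. A similitude block is in particular an arbitrary $2\times2$ block, and a block-diagonal matrix is in particular a full matrix of $\Rp^{d\times d}$. The same estimator therefore lies in each larger class, so each minimum is at most zero, hence equal to zero.

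There is no real obstacle here. The only point needing care is the hypothesis $x[f] \neq 0$. If some bin had $x[f] = 0$ and $s[f] \neq 0$, every member of $\Mclass_2$ and $\Mclass_3$ would output zero there, and zero error would fail for those classes. This is the same degenerate situation as the convention fixed after \eqref{eq:sdrstar}, and I will note it so that the role of the assumption is explicit. Within $\Mclass_4$ one could still hope to reconstruct such a bin from the other bins, but the proposition does not need this. I will also remark that the witness depends on $\mathbf{s}$ through every bin of the frame. That is exactly why the per-frame reading gives the chain no ceiling above $\Mclass_1$, and it motivates the fixed-operator reading that follows.
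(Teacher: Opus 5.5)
Your proof is correct and follows the paper's argument: $x[f] \neq 0$ makes $g[f] = s[f]/x[f]$ an admissible gain in \eqref{eq:class2}, substituting it gives $\hat{s}[f] = s[f]$ in every bin, and $\Mclass_3$ and $\Mclass_4$ inherit the zero by inclusion. Your extra remarks are accurate but not needed for the proof; they cover the similitude-block reading, the non-negativity of the error, and the failure for $\Mclass_2$ and $\Mclass_3$ in a bin with $x[f] = 0$ and $s[f] \neq 0$.
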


\begin{IEEEproof}
$\Mclass_2$ allows any $g[f] \in \Cp$ in \eqref{eq:class2}, and $x[f] \neq 0$ makes $s[f]/x[f]$ admissible; substituting gives $\hat{s}[f] = s[f]$ in every bin.
\end{IEEEproof}

Read frame by frame, the whole barrier of mask-based separation lives in the single step $\Mclass_1 \to \Mclass_2$, and the residual $|s|^2\sin^2\theta$ of Proposition~\ref{prop:ceiling} is attributable in its entirety to one missing real parameter per bin, the phase. Everything above that step is free at oracle granularity, so the widely linear degrees of freedom of $\Mclass_3$ and the inter-frequency coupling of $\Mclass_4$ cannot be assessed under the first reading at all: they add exactly nothing there, the first reading having already exhausted what is available. For the upper classes of \eqref{eq:chain} to carry information the operator has to be held fixed across frames, which is the second reading, and under it the chain becomes a cascade of orthogonal projections whose steps are informative and computable in closed form.

Fix a bin and regard the estimate as an element of the real Hilbert space of square-integrable functions of the frame, with inner product $\langle u,v \rangle = \Real\,\Ex[u\overline{v}]$. The expectation is taken over frames, under the measure the measurements of Section~\ref{sec:results} use: the empirical distribution of the $T$ frames of a single excerpt, so every quantity below is a statistic of that excerpt. Averaging over a corpus instead would leave the statements unchanged, an operator held fixed over one excerpt being the more favourable of the two to the fixed-operator reading. The four classes of \eqref{eq:chain} are then four nested closed subspaces of that space, of real dimensions $1$, $2$, $4$ and $4F$ per bin, which is the last column of Table~\ref{tab:classes} read as a dimension count and not as a parameter count. The best fixed operator of each class is the orthogonal projection of $s$ onto the corresponding subspace, the four residuals $R_1 \geq R_2 \geq R_3 \geq R_4$ are the squared distances to those subspaces, and Pythagoras makes the differences telescope, each one being the squared norm of the component the wider class adds. Write $P_x = \Ex|x|^2$ for the power of the mixture in that bin, $\tilde{P}_x = \Ex[x^2]$ for its pseudo-power, $c_{sx} = \Ex[s\overline{x}]$ and $\tilde{c}_{sx} = \Ex[sx]$ for the two cross-statistics of source and mixture, and $\rho = \tilde{P}_x / P_x$ for the circularity quotient.

\begin{proposition}[The Four Fixed-Operator Residuals]
\label{prop:cascade}
Let $P_x > 0$. The residuals of the projections onto the first three classes are
\begin{equation}
R_1 = \Ex|s|^2 - \frac{(\Real c_{sx})^2}{P_x},
\qquad
R_2 = \Ex|s|^2 - \frac{|c_{sx}|^2}{P_x},
\label{eq:r12}
\end{equation}
and, when $|\rho| < 1$, $R_3 = \Ex|s|^2 - \Real(\overline{g}c_{sx} + \overline{h}\tilde{c}_{sx})$ with $(g,h)$ the solution of
\begin{equation}
\begin{pmatrix} P_x & \overline{\tilde{P}_x} \\ \tilde{P}_x & P_x \end{pmatrix}
\begin{pmatrix} g \\ h \end{pmatrix}
=
\begin{pmatrix} c_{sx} \\ \tilde{c}_{sx} \end{pmatrix}.
\label{eq:r3system}
\end{equation}
The residual of the fourth is $R_4 = \Ex|s|^2 - \mathbf{r}^{\mathsf{H}}\mathbf{R}_{zz}^{-1}\mathbf{r}$, where $\mathbf{z} = [\mathbf{x}; \overline{\mathbf{x}}] \in \Cp^{2F}$, $\mathbf{R}_{zz} = \Ex[\mathbf{z}\mathbf{z}^{\mathsf{H}}]$ and $\mathbf{r} = \Ex[\mathbf{z}\overline{s}]$. The first two steps of the cascade are
\begin{equation}
R_1 - R_2 = \frac{(\Imag c_{sx})^2}{P_x},
\qquad
R_2 - R_3 = \frac{|\tilde{c}_{sx} - \rho c_{sx}|^2}{P_x\,(1 - |\rho|^2)}.
\label{eq:gaps}
\end{equation}
\end{proposition}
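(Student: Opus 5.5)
The plan is to treat every residual as the squared distance from $s$ to a finite-dimensional subspace of the real Hilbert space with inner product $\langle u,v\rangle = \Real\,\Ex[u\overline{v}]$. We solve the normal equations in each case and read the projected energy $\|Ps\|^2 = \langle s, Ps\rangle$, which gives $R_k = \Ex|s|^2 - \langle s, P_k s\rangle$.

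\textbf{First two classes.} $\Mclass_1$ in one bin is the real span of $x$, so the one-dimensional case of Lemma~\ref{lem:vector}, with expectation in place of the frame sum, gives the projection $(\Real c_{sx}/P_x)\,x$. Its residual is $R_1$, since $\langle s,x\rangle = \Real c_{sx}$ and $\|x\|^2 = P_x$. $\Mclass_2$ is the real span of $x$ and $\mathrm{i}x$. These are orthogonal, because $\Real\,\Ex[x\overline{\mathrm{i}x}] = \Real(-\mathrm{i}P_x) = 0$, and of equal norm. Moreover $\langle s,\mathrm{i}x\rangle = \Real(\overline{\mathrm{i}}\,c_{sx}) = \Imag c_{sx}$. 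Pythagoras then gives $R_2 = \Ex|s|^2 - ((\Real c_{sx})^2 + (\Imag c_{sx})^2)/P_x$, and the first gap in \eqref{eq:gaps} is the contribution of $\mathrm{i}x$ alone.

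\textbf{Third class.} A real $2\times2$ block acting on $x$ is exactly a widely linear map $gx + h\overline{x}$ with $g,h \in \Cp$, four real parameters. So $\Mclass_3$ is the real span of $x,\mathrm{i}x,\overline{x},\mathrm{i}\overline{x}$. Orthogonality of $s - gx - h\overline{x}$ to $x$ and $\mathrm{i}x$ (and likewise for $\overline{x}$) is equivalent to the complex conditions $\Ex[(s - gx - h\overline{x})\overline{x}] = 0$ and $\Ex[(s - gx - h\overline{x})x] = 0$. Expanding these gives exactly \eqref{eq:r3system}. Its determinant is $P_x^2(1-|\rho|^2) > 0$, so the system is uniquely solvable. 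The projected energy is $\Real\,\Ex[s\,\overline{(gx+h\overline{x})}] = \Real(\overline{g}c_{sx} + \overline{h}\tilde{c}_{sx})$, which gives $R_3$.

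For the gap $R_2 - R_3$, a Gram--Schmidt step avoids inverting the system. Decompose the span of $\overline{x}$ and $\mathrm{i}\overline{x}$ against the span of $x$ and $\mathrm{i}x$: the component of $\overline{x}$ orthogonal to $\Mclass_2$ is $w = \overline{x} - \overline{\rho}\,x$. Indeed $\Ex[\overline{x}\,\overline{x}] = \overline{\tilde P_x}$, so the complex coefficient is $\overline{\tilde P_x}/P_x = \overline{\rho}$, and the complex orthogonality conditions imply the real ones for both $w$ and $\mathrm{i}w$. Then $\Ex|w|^2 = P_x(1-|\rho|^2)$ and $\Ex[s\overline{w}] = \tilde{c}_{sx} - \rho c_{sx}$. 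Since $w$ and $\mathrm{i}w$ are orthogonal with equal norms, the added component has energy $|\tilde{c}_{sx}-\rho c_{sx}|^2/(P_x(1-|\rho|^2))$, as in \eqref{eq:gaps}.

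\textbf{Fourth class.} A full real matrix on $\bfx$ is a map $\mathbf{g}^{\mathsf{T}}\mathbf{x} + \mathbf{h}^{\mathsf{T}}\overline{\mathbf{x}} = \mathbf{w}^{\mathsf{H}}\mathbf{z}$ for the target bin. The same complex normal equations give $\mathbf{R}_{zz}\mathbf{w} = \mathbf{r}$, with projected energy $\mathbf{r}^{\mathsf{H}}\mathbf{R}_{zz}^{-1}\mathbf{r}$.

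\textbf{Main obstacle.} The delicate step is justifying that the real orthogonality conditions are equivalent to the complex ones: pairing each generator $u$ with $\mathrm{i}u$ turns real and imaginary parts into one complex equation. This also needs $\mathbf{R}_{zz}$ to be invertible, which the statement tacitly assumes; otherwise we read $\mathbf{R}_{zz}^{-1}$ as a pseudo-inverse.
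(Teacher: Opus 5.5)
Your proposal is correct and follows the paper's proof step for step: normal equations on each nested subspace, the same Gram--Schmidt direction $\overline{x} - \overline{\rho}\,x$ for the step $R_2 - R_3$, and the augmented system $\mathbf{R}_{zz}\mathbf{w} = \mathbf{r}$ for $\Mclass_4$. Your pairing of each generator $u$ with $\mathrm{i}u$ makes explicit what the paper assumes when it speaks of spans over $\Cp$, and your caveat on $\mathbf{R}_{zz}$ is apt: the real bins at zero and Nyquist frequency make it structurally singular (as the paper itself notes in Section~\ref{sec:results:cascade}), and the pseudo-inverse reading is legitimate because $\mathbf{r} = \Ex[\mathbf{z}\overline{s}]$ is orthogonal to $\ker\mathbf{R}_{zz}$ and so lies in its range.
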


\begin{IEEEproof}
Each residual is the squared distance from $s$ to a finite-dimensional subspace, so it is obtained from the normal equations of that subspace. For $\Mclass_1$ the subspace is spanned by $x$ over $\Rp$, and minimising $\Ex|s - mx|^2 = m^2 P_x - 2m\Real c_{sx} + \Ex|s|^2$ over $m \in \Rp$ gives $m = \Real c_{sx} / P_x$ and the first expression of \eqref{eq:r12}. For $\Mclass_2$ the subspace is spanned by $x$ over $\Cp$, that is by $x$ and $ix$ over $\Rp$, and the normal equation $\Ex[(s - gx)\overline{x}] = 0$ gives $g = c_{sx}/P_x$ and the second expression, whence the first step of \eqref{eq:gaps} by subtraction. For $\Mclass_3$ the subspace is spanned by $x$ and $\overline{x}$ over $\Cp$, and the two normal equations $\Ex[(s - gx - h\overline{x})\overline{x}] = 0$ and $\Ex[(s - gx - h\overline{x})x] = 0$ are \eqref{eq:r3system}, whose determinant $P_x^2 - |\tilde{P}_x|^2$ is non-zero exactly when $|\rho| < 1$; the residual of a projection being $\Ex|s|^2$ minus the inner product of $s$ with its projection gives $R_3$. Its gap to $R_2$ is read off the one-dimensional complex direction the wider class adds, obtained by orthogonalising $\overline{x}$ against $x$: the vector $e = \overline{x} - \overline{\rho}\,x$ satisfies $\Ex[e\overline{x}] = 0$ with $\Ex|e|^2 = P_x(1 - |\rho|^2)$ and $\Ex[s\overline{e}] = \tilde{c}_{sx} - \rho c_{sx}$, and the squared norm of the projection of $s$ onto it is the second step of \eqref{eq:gaps}. For $\Mclass_4$ the subspace is spanned over $\Rp$ by the $4F$ elements $\{x[g], ix[g], \overline{x[g]}, i\overline{x[g]}\}_g$, that is over $\Cp$ by the entries of $\mathbf{z}$, and the normal equations are the augmented system $\mathbf{R}_{zz}\mathbf{w} = \mathbf{r}$, whose solution gives the stated residual. Monotonicity of the four is inclusion of the subspaces.
\end{IEEEproof}

The three steps of the cascade name the three structures the chain adds, in the order Table~\ref{tab:classes} lists them, and each is a statistic of the material alone. The first step is the quadrature correlation $\Imag c_{sx}$ between source and mixture, which is the statistical trace of the phase relation. The second is the non-circularity of the mixture measured against the source through $\tilde{c}_{sx} - \rho c_{sx}$, which vanishes exactly when the augmented cross-statistics are consistent with a circular law. The third is the inter-frequency coupling, which has no scalar summary because it is the whole of the augmented Gram $\mathbf{R}_{zz}$. And the first of the three has a sign that is fixed in advance for the material this paper works on.

\begin{corollary}[The Phase Adds Nothing to a Fixed Operator]
\label{cor:nophase}
If the two sources are mutually uncorrelated in the sense $\Ex[s\overline{s'}] = 0$ with $s' = x - s$, then $c_{sx} = \Ex|s|^2$ is real and $R_1 = R_2$.
\end{corollary}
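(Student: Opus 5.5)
The plan is a direct computation followed by an appeal to \eqref{eq:gaps}. First I will expand the cross-statistic using the mixing identity: since $x = s + s'$, we have $\overline{x} = \overline{s} + \overline{s'}$, and linearity of the expectation gives
\begin{equation*}
c_{sx} = \Ex[s\overline{x}] = \Ex[s\overline{s}] + \Ex[s\overline{s'}] = \Ex|s|^2 + \Ex[s\overline{s'}].
\end{equation*}
The hypothesis $\Ex[s\overline{s'}] = 0$ kills the second term, so $c_{sx} = \Ex|s|^2$, which is a non-negative real number. In particular $\Imag c_{sx} = 0$.

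Next I will invoke Proposition~\ref{prop:cascade}. Its first step in \eqref{eq:gaps} reads $R_1 - R_2 = (\Imag c_{sx})^2/P_x$, and this vanishes, so $R_1 = R_2$. Equivalently, one can compare the two expressions in \eqref{eq:r12} directly: $(\Real c_{sx})^2 = |c_{sx}|^2$ once $c_{sx}$ is real. Either route requires the standing assumption $P_x > 0$ of Proposition~\ref{prop:cascade}, which I will state explicitly. If $P_x = 0$, then $x = 0$ almost surely under the empirical measure, both subspaces reduce to $\{0\}$, and $R_1 = R_2 = \Ex|s|^2$ trivially.

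There is no real obstacle here. The only point needing care is the interpretation: the expectation is the empirical mean over the frames of the excerpt, as fixed above Proposition~\ref{prop:cascade}. The hypothesis is therefore that the \emph{empirical} complex cross-correlation of the two sources vanishes in the bin, and it is that same measure under which $c_{sx}$ is computed. I will remark that the conclusion only needs the weaker condition $\Imag\Ex[s\overline{s'}] = 0$, since $c_{sx} = \Ex|s|^2 + \Ex[s\overline{s'}]$ is real exactly when the cross term is. This identifies the quadrature cross-correlation of the sources as the sole carrier of the first step of the cascade.
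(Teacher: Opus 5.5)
Your proposal is correct and matches the paper's proof: you expand $c_{sx}$ through $\overline{x} = \overline{s} + \overline{s'}$, let the hypothesis remove the cross term, and read $R_1 = R_2$ off the first step of \eqref{eq:gaps}. Your two extra remarks are also correct: the degenerate case $P_x = 0$ gives $R_1 = R_2 = \Ex|s|^2$ trivially, and $\Imag\Ex[s\overline{s'}] = 0$ alone already suffices for $R_1 = R_2$.
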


\begin{IEEEproof}
$c_{sx} = \Ex[s(\overline{s} + \overline{s'})] = \Ex|s|^2 + \Ex[s\overline{s'}] = \Ex|s|^2$, which is real and non-negative, so $\Imag c_{sx} = 0$ and the first step of \eqref{eq:gaps} vanishes.
\end{IEEEproof}

The two readings therefore disagree about the phase as sharply as they can. Per frame it accounts for everything, since it is the only parameter separating the exact ceiling of $\Mclass_1$ from exact reconstruction by Proposition~\ref{prop:degenerate}; to a fixed operator on uncorrelated sources it accounts for exactly nothing. What the phase carries is per-instance information, inaccessible to any operator held fixed across frames and therefore reachable only by an estimator that adapts to the frame it is given, which is what a prior is for and what Section~\ref{sec:model} builds. This is also what gives Proposition~\ref{prop:pull} its scope: the quadratic criterion holds the estimator in the narrowest class of the chain whatever class the prior would authorise, and Corollary~\ref{cor:nophase} says that from there the second and cheapest class is not even distinguishable from it.

\begin{remark}[Reach of the Pythagorean Reading]
\label{rem:duallyflat}
Proposition~\ref{prop:cascade} is the self-dual case of Pythagorean projection in a dually flat geometry \cite{amari2000methods,dessein2013}, the potential $\tfrac{1}{2}\|\cdot\|^2$ being its own Legendre dual and the four classes of \eqref{eq:chain} affine constraints, which is why \eqref{eq:gaps} telescopes with equality; Cardoso's decomposition of the Kullback-Leibler mismatch runs the same argument on laws and in divergence \cite{cardoso2003}, and the excess of \eqref{eq:excess} is the Bregman information of the oracle gain under the posterior \cite{banerjee2005}. The exactness does not survive an extension, the subclasses $\Mclass_{[0,1]}$ and $\Mclass_{+}$ being convex without being affine and the manifold of Gaussian mixtures with free component parameters being neither $e$- nor $m$-flat.
\end{remark}

\begin{remark}[Affine Variants]
\label{rem:affine}
Allowing a constant offset in each class amounts to projecting the centred variable $s - \Ex s$ onto the span of the centred $x - \Ex x$, so each residual of Proposition~\ref{prop:cascade} is replaced by the same expression computed on covariances in place of second moments. The two coincide when $\Ex s = \Ex x = 0$, which holds for the material of Section~\ref{sec:results} to numerical precision, so the affine variants are not reported separately. The quantity that vanishes here is the marginal mean of a bin over the frames of an excerpt, not the component means of the prior of Section~\ref{sec:model:prior}, which are non-zero by construction and are the first of the three assumptions the chain relaxes.
\end{remark}

\begin{figure}[pos=tbp]
\centering
\begin{tikzpicture}[scale=1,>=stealth,line join=round]
  \draw[gray!55] (-0.35,0) -- (-0.35,5.5);
  \foreach \v/\y in {8/0.9, 10/1.8, 12/2.7, 14/3.6, 16/4.5, 18/5.4}{
    \draw[gray!55] (-0.45,\y) -- (-0.35,\y);
    \node[gray!70,anchor=east,font=\scriptsize] at (-0.5,\y) {\v};
  }
  \node[gray!70,rotate=90,anchor=south,font=\scriptsize] at (-1.15,2.7) {SDR of the best fixed operator (dB)};
  \foreach \x/\lab in {0/{$\mathcal{M}_1$}, 1.4/{$\mathcal{M}_2$}, 2.8/{$\mathcal{M}_3$}, 4.2/{$\mathcal{M}_4$}}{
    \draw[gray!25] (\x,0) -- (\x,5.5);
    \node[anchor=north,font=\scriptsize] at (\x,-0.05) {\lab};
  }
  \node[anchor=north,font=\scriptsize,gray!70] at (0,-0.45) {1};
  \node[anchor=north,font=\scriptsize,gray!70] at (1.4,-0.45) {2};
  \node[anchor=north,font=\scriptsize,gray!70] at (2.8,-0.45) {4};
  \node[anchor=north,font=\scriptsize,gray!70] at (4.2,-0.45) {$4F$};
  \draw[dashed,gray!70] (-0.35,4.779) -- (4.6,4.779);
  \node[anchor=south east,font=\scriptsize,gray!70] at (4.6,4.779) {per-frame ceiling of $\mathcal{M}_1$, $L=1024$};
  \draw[dashed,gray!70] (-0.35,3.9735) -- (4.6,3.9735);
  \node[anchor=south east,font=\scriptsize,gray!70] at (4.6,3.9735) {per-frame ceiling of $\mathcal{M}_1$, $L=256$};
  \draw[thick] (0,0.549) -- (1.4,0.549) -- (2.8,0.558) -- (4.2,1.764);
  \foreach \x/\y in {0/0.549, 1.4/0.549, 2.8/0.558, 4.2/1.764}{
    \fill (\x,\y) circle (1.7pt);
  }
  \draw[thick,dotted] (2.8,0.558) -- (4.2,2.70);
  \draw (4.2,2.70) circle (1.7pt);
  \node[anchor=west,font=\scriptsize] at (4.32,1.764) {$L=1024$};
  \draw[thick,gray!60] (0,0.486) -- (1.4,0.486) -- (2.8,0.495) -- (4.2,1.008);
  \foreach \x/\y in {0/0.486, 1.4/0.486, 2.8/0.495, 4.2/1.008}{
    \fill[gray!60] (\x,\y) circle (1.7pt);
  }
  \draw[gray!60] (4.2,1.053) circle (1.7pt);
  \node[anchor=west,font=\scriptsize,gray!70] at (4.32,0.85) {$L=256$};
\end{tikzpicture}
\caption{The cascade of \eqref{eq:chain} under the fixed-operator reading, measured on MUSDB18 (Section~\ref{sec:results:cascade}); the row of figures under the class names is the number of real parameters per bin. Filled markers are the four residuals of Proposition~\ref{prop:cascade} as signal-to-distortion ratios, with $\Mclass_4$ corrected for in-sample optimism; the hollow marker is the uncorrected $\Mclass_4$. The dashed lines are the per-frame ceiling \eqref{eq:sdrstar} of the smallest class at the same frame length, which the whole chain of fixed operators stays several decibels below.}
\label{fig:cascade}
\end{figure}

\section{Signal Model and Proposed Prior}
\label{sec:model}

Section~\ref{sec:problem} left the second question with an answer in terms of operator classes and no estimator attached to it. This section supplies one. The construction is generative: a prior is fitted on each isolated source, and the estimator is the posterior mean it induces, so where that estimator sits in the chain \eqref{eq:chain} is a consequence of the law of the prior and not a choice made at the output. The notation is that of Section~\ref{sec:problem:setup} throughout, and the construction is the closed-form Gaussian-mixture posterior of Benaroya \emph{et al.} \cite{benaroya2006} taken on stacked real and imaginary spectra instead of on power spectra, which is the single departure the whole of Section~\ref{sec:analysis} rests on.

\subsection{Prior}
\label{sec:model:prior}

Each source carries a Gaussian mixture prior on its stacked spectrum,
\begin{equation}
p_i(\bfs_i) = \sum_{k=1}^{K_i} \pi_{i,k}\,\mathcal{N}\!\left(\bfs_i;\, \mub_{i,k},\, \Sig_{i,k}\right),
\qquad \Sig_{i,k} \succ 0,
\label{eq:prior}
\end{equation}
with $K_i$ components on source $i$, written $K$ when the two counts are equal, as they are in every experiment reported here, and never to be confused with the excerpt count $n$ of the tables of Section~\ref{sec:results}. The prior is fitted by expectation-maximisation \cite{dempster1977em} on isolated source material, independently per source, following \cite[Ch.~4]{baelde2019thesis}. The components are not zero-mean, not circular in the sense of \cite{neeser1993proper}, and not bin-wise independent.

Each of those three properties is a departure from the phase-invariant priors of Section~\ref{sec:related:generative}: no constraint ties $\Sig_{rr}$ to $\Sig_{ii}$ nor forces $\Sig_{ri}$ to be antisymmetric, and the covariance is full, so frequencies are coupled. The third property has a direct physical reading, since the partials of one harmonic sound rise and fall together, so an off-diagonal block between two bins of the same series carries information about the source. Read against Table~\ref{tab:classes}, the three are exactly the three relaxations that separate $\Mclass_1$ from $\Mclass_4$, and Section~\ref{sec:analysis:outside} makes that correspondence exact. What differs from the filters of Section~\ref{sec:related:widelylinear}, which reach comparable classes by design, is the origin of the structure and the axis it runs along: there the operator is prescribed from second-order statistics of the observation and the coupling is across frames, here it is the posterior mean of a generative prior fitted on isolated sources, so the non-circularity is learned and the coupling is across frequency.

The covariance structure is therefore not an implementation detail but the knob that selects the class, and the selection makes a demand on data that has to be stated before any measurement is read. A full covariance on the stacked vector carries $d(d+1)/2$ free parameters per component against $d$ for a diagonal one, so what decides whether $\Mclass_4$ can be estimated in place of merely written down is the number of training frames per dimension available to each component, a quantity fixed by the frame length through $d = 2F$ and reported with every configuration in Section~\ref{sec:protocol:corpus}. Shortening the analysis window lowers $d$ and raises that ratio at equal frame count, against a less sparse representation and hence a different ceiling, so the two are traded against each other instead of chosen independently. The constraint is not fatal to the argument, because a diagonal covariance on $[\Real;\Imag]$ forfeits $\Mclass_4$ alone: it still allows $\Sig_{rr} \neq \Sig_{ii}$, hence a different gain on the real and on the imaginary part, hence a phase rotation that depends on the direction, which is $\Mclass_3$.

\subsection{Mixture Density}
\label{sec:model:mixture}

The observation is the sum of two independent sources, and Gaussians are closed under convolution, so the mixture inherits the form of the priors it is built from. Marginalising over the pair of active components gives a Gaussian mixture over the Cartesian product of the two component sets,
\begin{equation}
p(\bfx) = \sum_{k_1,k_2} \pi_{1,k_1}\pi_{2,k_2}\,\mathcal{N}\!\left(\bfx;\, \mub_{1,k_1} + \mub_{2,k_2},\, \Sig_{k_1k_2}\right),
\label{eq:mixdensity}
\end{equation}
with $\Sig_{k_1k_2} = \Sig_{1,k_1} + \Sig_{2,k_2}$: weights multiply, means and covariances add.

The sum therefore carries $K_1 K_2$ terms. The same closure extends the model beyond two sources: a synthetic combined source stands for any subset of them, and a hierarchical binary split then has an exact posterior at every node, since \eqref{eq:mixdensity} says the prior of a subset is again a Gaussian mixture. The hierarchy is thus an exact reparametrisation of the posterior, and the approximation enters only if one fits a mixture of $K$ components directly on a combined source instead of carrying the product of \eqref{eq:mixdensity}, at which point the error of that substitution and the influence of the order of the split become questions of their own. The geometry of the multi-source case is settled by Corollary~\ref{cor:partition}, which needs no prior; the measurements below all use two sources.

\subsection{Exact Posterior}
\label{sec:model:posterior}

Conditioning \eqref{eq:mixdensity} on the observation keeps the mixture form, each pair of components contributing one Gaussian whose mean is a Wiener correction of the component mean and whose weight is the responsibility of the pair. Writing $\Gk = \Sig_{1,k_1}\Sig_{k_1k_2}^{-1}$ for the gain of the pair,
\begin{align}
p(\bfs_1 \mid \bfx) &= \sum_{k_1,k_2} \phi_{k_1k_2}(\bfx)\,\mathcal{N}\!\left(\bfs_1;\, \tilde{\mub}_{k_1k_2}(\bfx),\, \tilde{\Sig}_{k_1k_2}\right), \label{eq:posterior}\\
\tilde{\mub}_{k_1k_2}(\bfx) &= \mub_{1,k_1} + \Gk\left(\bfx - \mub_{1,k_1} - \mub_{2,k_2}\right), \label{eq:postmean}\\
\tilde{\Sig}_{k_1k_2} &= \Sig_{1,k_1} - \Gk \Sig_{1,k_1}, \label{eq:postcov}\\
\phi_{k_1k_2}(\bfx) &\propto \pi_{1,k_1}\pi_{2,k_2}\,\mathcal{N}\!\left(\bfx;\, \mub_{1,k_1} + \mub_{2,k_2},\, \Sig_{k_1k_2}\right). \label{eq:resp}
\end{align}
The posterior covariances do not depend on the observation, only the responsibilities and the component means do.

\subsection{Estimator}
\label{sec:model:estimator}

The estimator is the mean of \eqref{eq:posterior}, which averages the pairwise Wiener estimates against the responsibilities and therefore depends on the observation twice, through the estimates and through the weights:
\begin{equation}
\hat{\mathbf{s}}_1 = \Ex\!\left[\bfs_1 \mid \bfx\right] = \sum_{k_1,k_2}\phi_{k_1k_2}(\bfx)\,\tilde{\mub}_{k_1k_2}(\bfx),
\qquad \hat{\mathbf{s}}_2 = \bfx - \hat{\mathbf{s}}_1.
\label{eq:estimator}
\end{equation}
That the conditional expectation under a Gaussian mixture prior takes the form of a responsibility-weighted combination of pairwise Wiener filters is due to Benaroya \emph{et al.} \cite[Sec.~II-B and III-B]{benaroya2006}.

\begin{remark}[The Two Sources Carry One Measurement]
\label{rem:shared}
Since $\hat{\mathbf{s}}_1 + \hat{\mathbf{s}}_2 = \bfx = \bfs_1 + \bfs_2$, the two error signals are equal up to sign and $\mathrm{SDR}_1 - \mathrm{SDR}_2 = 10\log_{10}(\|\mathbf{s}_1\|^2/\|\mathbf{s}_2\|^2)$. The same collapse holds for the ceiling, where it is a property of the class: in a bin, $\mstar_1 + \mstar_2 = 1$ by Corollary~\ref{cor:partition}, hence $s_2 - \mstar_2 x = -(s_1 - \mstar_1 x)$. Every interval reported below therefore has as many independent samples as there are test excerpts, and the agreement of the two source columns is read as a check on the implementation.
\end{remark}

The second source is recovered by subtraction instead of by a symmetric regression, so the sum constraint acts as a hard constraint and not as a penalty. Two further consequences of \eqref{eq:posterior} are used later: the posterior is a fully normalised density that can be evaluated exactly at any point (Section~\ref{sec:analysis:yardstick}), and its mean is an explicit function of $\bfx$ whose algebraic form is the object of Section~\ref{sec:analysis:outside}.

The one departure from \cite{benaroya2006} and its successors is the support of the prior. A mixture with unconstrained mean and unconstrained covariance on $[\Real;\Imag]$ is a prior on phase, learned by expectation-maximisation, with no signal model attached. Everything in Section~\ref{sec:analysis} follows from that single change.

On complexity, the gain $\Gk$ depends on the component pair and not on the frame, so the Cholesky factorisation of $\Sig_{k_1k_2}$ is computed once per pair and applied to the whole block of frames, at $O(K_1 K_2 d^3)$ once and $O(K_1 K_2 d^2)$ per frame, the quadratic form of $\phi$ reusing the same factor. The estimator is not competitive on runtime, one-step generative separation and real-time discriminative systems being both faster today, and no claim made here depends on runtime: every figure reported below is an accuracy.

\section{Theoretical Analysis}
\label{sec:analysis}

This section answers the second question of Section~\ref{sec:problem:setup} completely and settles in advance what a measurement of the third can and cannot show. It first locates the estimator of Section~\ref{sec:model} in the chain \eqref{eq:chain} and identifies, one by one, the assumptions on the prior that place it there, so that leaving $\Mclass_1$ is shown to be a property of a law, not a design choice. It then proves that the squared-error criterion works against that departure, the conditional mean returning to the line whenever the phase posterior is symmetric about the mixture direction, and identifies the resulting excess error with a posterior variance. The two remaining subsections bound the reach of those statements: how the estimator behaves away from the material the prior was fitted on, and what survives replacing the Gaussian components by an arbitrary scale mixture.

\subsection{The Class Containing the Estimator}
\label{sec:analysis:outside}

The estimator \eqref{eq:estimator} is not a linear operator, so locating it in \eqref{eq:chain} means locating the operators it is assembled from. That is what the next proposition does, and the answer is the largest class of the chain.

\begin{proposition}[Form of the Estimator]
\label{prop:form}
The estimator \eqref{eq:estimator} satisfies
\begin{equation}
\hat{\mathbf{s}}_1 = \sum_{k_1,k_2}\phi_{k_1k_2}(\bfx)\left[\Gk\,\bfx + \mathbf{c}_{k_1k_2}\right],
\label{eq:affineform}
\end{equation}
with $\mathbf{c}_{k_1k_2} = (\mathbf{I} - \Gk)\mub_{1,k_1} - \Gk\mub_{2,k_2}$: a convex combination, with observation-dependent weights, of affine widely linear operators with full inter-frequency coupling. Each term of the sum, taken at fixed responsibilities, is therefore in $\Mclass_4$, affine variant, for generic priors; the estimator itself is a mixture of such terms whose weights depend on the observation, hence not a linear operator on $\bfx$.
\end{proposition}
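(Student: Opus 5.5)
The plan is to expand the posterior mean \eqref{eq:postmean} algebraically and then read off the block structure of the pairwise gain $\Gk$. Substituting \eqref{eq:postmean} into \eqref{eq:estimator} and distributing $\Gk$ over $\bfx - \mub_{1,k_1} - \mub_{2,k_2}$ gives, for each pair,
\[
\tilde{\mub}_{k_1k_2}(\bfx) = \Gk\,\bfx + (\mathbf{I} - \Gk)\mub_{1,k_1} - \Gk\mub_{2,k_2},
\]
which is the bracket of \eqref{eq:affineform} with the stated $\mathbf{c}_{k_1k_2}$. The responsibilities \eqref{eq:resp} are non-negative and normalised to sum to one over the pairs, so the sum is a convex combination. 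This settles the displayed identity.

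The second step locates each term in the chain \eqref{eq:chain}. At fixed responsibilities, $\bfx \mapsto \Gk\bfx + \mathbf{c}_{k_1k_2}$ is an affine real-linear map on $\Rp^d$ with matrix $\Gk = \Sig_{1,k_1}(\Sig_{1,k_1}+\Sig_{2,k_2})^{-1} \in \Rp^{d\times d}$. It therefore lies in the affine variant of $\Mclass_4$ by definition \eqref{eq:class4}, and this inclusion needs no genericity.

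The content of ``for generic priors'' is the claim that this is the smallest class containing the term. I would argue it by exhibiting that the conditions for membership in a smaller class are non-generic. Membership in $\Mclass_3$ requires every off-diagonal $2\times2$ block of $\Gk$ to vanish. Membership in $\Mclass_2$ additionally requires each diagonal block to commute with the rotation $\left[\begin{smallmatrix}0&-1\\1&0\end{smallmatrix}\right]$. Membership in $\Mclass_1$ requires each block to be scalar. Each of these is a set of polynomial equations in the entries of $(\Sig_{1,k_1},\Sig_{2,k_2})$ after clearing the determinant. A polynomial condition either holds identically or fails off a Lebesgue-null closed set of the cone of positive definite pairs. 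So it suffices to find one pair violating each condition.

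For $\Mclass_4 \setminus \Mclass_3$, I would take $\Sig_{2,k_2} = \mathbf{I}$ and $\Sig_{1,k_1}$ with a single small off-diagonal coupling between two bins. Then $\Gk = \Sig_1(\Sig_1+\mathbf{I})^{-1}$ has a non-zero cross-bin block, seen to first order in the coupling. The same kind of witness with $\Sig_{rr}\neq\Sig_{ii}$ in one bin gives a non-conformal block, which excludes $\Mclass_2$ and hence $\Mclass_1$. A non-zero $\mathbf{c}_{k_1k_2}$, for instance from $\mub_{1,k_1} \neq 0$ with $\Gk \neq \mathbf{I}$, makes the affine part genuinely present.

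The final step is non-linearity of the estimator as a whole. The weights $\phi_{k_1k_2}(\bfx)$ are ratios of Gaussian densities with distinct means or covariances. Whenever at least two pairs have distinct affine maps, the combination is not affine in $\bfx$. I would show this by letting $\bfx$ go to infinity along a ray, where the responsibilities concentrate on different pairs, so the estimator agrees asymptotically with different affine maps and no single affine map can match it.

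I expect the main obstacle to be making ``generic'' precise without heavy machinery. The plan is to rely on the polynomial-null-set argument, with a single explicit witness for each condition, rather than on a characterisation of the exceptional set.
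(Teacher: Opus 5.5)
Your first two steps are the paper's proof: substitute \eqref{eq:postmean} into \eqref{eq:estimator}, collect the terms in $\bfx$, and note that $\Gk$ is an unconstrained real $d\times d$ matrix. Your polynomial-null-set reading of ``generic'', with one explicit witness per condition, is a sound and more careful version of what the paper only asserts. It is also the right way to treat the offset. The paper's aside that $\mathbf{c}_{k_1k_2}$ vanishes only if both means do is too strong. Indeed $\mathbf{c}_{k_1k_2} = \Sig_{2,k_2}\Sig_{k_1k_2}^{-1}\mub_{1,k_1} - \Sig_{1,k_1}\Sig_{k_1k_2}^{-1}\mub_{2,k_2}$. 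Since $\Sig_{2,k_2}\Sig_{k_1k_2}^{-1}\Sig_{1,k_1} = \Sig_{1,k_1}\Sig_{k_1k_2}^{-1}\Sig_{2,k_2}$, the choice $\mub_{i,k_i} = \Sig_{i,k_i}\mathbf{v}$ gives $\mathbf{c}_{k_1k_2} = \mathbf{0}$ with non-zero means. Your witness does work, but not because of the proviso you state. $\mathbf{I} - \Gk = \Sig_{2,k_2}\Sig_{k_1k_2}^{-1}$ is always invertible, so $\Gk \neq \mathbf{I}$ is automatic, and what you actually need is $\mub_{2,k_2} = \mathbf{0}$.

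The gap is in your last step. The paper treats it as evident, and it fails as you set it up.

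First, two pairs with distinct affine maps do not force non-affineness, because distinct pairs need not carry distinct Gaussians. The pairs $(k_1,k_2)$ and $(k_1',k_2')$ share their mixture density whenever both the mean sums and the covariance sums agree, and their responsibilities are then in constant ratio. Identical priors on the two sources give a concrete case. By exchangeability $\hat{\mathbf{s}}_1 = \bfx/2$ exactly, yet for $\Sig_{1,k}\neq\Sig_{1,k'}$ the pairs $(k,k')$ and $(k',k)$ carry the distinct gains $\Sig_{1,k}(\Sig_{1,k}+\Sig_{1,k'})^{-1}$ and $\Sig_{1,k'}(\Sig_{1,k}+\Sig_{1,k'})^{-1}$.

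Second, the ray argument breaks on an open set of priors. Along $t\mathbf{u}$ the surviving pair is the one minimising $\mathbf{u}^{\top}\Sig_{k_1k_2}^{-1}\mathbf{u}$. If one $\Sig_{k_1k_2}$ dominates all others in the Loewner order (take $K_1 = 1$ and $\Sig_{2,2}\succ\Sig_{2,1}$), the same pair wins on every ray. The estimator is then asymptotic to a single affine map in every direction, and no contradiction arises.

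What does work is clearing the denominator. Suppose $\hat{\mathbf{s}}_1 = \mathbf{B}\bfx + \mathbf{b}$, and multiply by $p(\bfx)$ of \eqref{eq:mixdensity}. This gives $\sum_{k_1,k_2}\pi_{1,k_1}\pi_{2,k_2}\,\mathcal{N}(\bfx;\mub_{1,k_1}+\mub_{2,k_2},\Sig_{k_1k_2})\,[(\Gk-\mathbf{B})\bfx + \mathbf{c}_{k_1k_2}-\mathbf{b}] \equiv \mathbf{0}$. A vanishing sum of polynomial multiples of Gaussians with pairwise distinct (mean, covariance) has all its polynomial coefficients zero, since the exponents differ by non-constant quadratics. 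Hence the estimator is affine exactly when every group of pairs sharing a density has $\pi$-weighted average map equal to $\mathbf{B}\bfx+\mathbf{b}$. That condition is non-generic once $K_1K_2\geq 2$; with a single pair the estimator is just the affine Wiener filter. This, and not the unconditional ``hence'' of the statement, is what can be proved about the last clause.
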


\begin{IEEEproof}
Substitute \eqref{eq:postmean} into \eqref{eq:estimator} and collect the terms in $\bfx$. The gain $\Gk = \Sig_{1,k_1}\Sig_{k_1k_2}^{-1}$ is a full real $2F \times 2F$ matrix, and $\mathbf{c}_{k_1k_2}$ vanishes only if both component means do.
\end{IEEEproof}

\begin{proposition}[Collapse Onto the Line]
\label{prop:collapse}
Suppose every prior component is zero-mean, bin-wise independent and circular, that is $\mub_{i,k} = \mathbf{0}$ and $\Sig_{i,k}$ block-diagonal with blocks $\tfrac{1}{2}v_{i,k}[f]\mathbf{I}_2$. Then every $\Gk$ is block-diagonal with blocks
\begin{equation}
\frac{v_{1,k_1}[f]}{v_{1,k_1}[f] + v_{2,k_2}[f]}\,\mathbf{I}_2,
\label{eq:wiener}
\end{equation}
so the estimator is a responsibility-weighted mixture of Wiener masks, $\hat{\mathbf{s}}_1 \in \Mclass_{[0,1]}$, and by Proposition~\ref{prop:ceiling} it cannot exceed $\SDRstar$.
\end{proposition}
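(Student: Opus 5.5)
The argument is a direct substitution into the posterior of Section~\ref{sec:model:posterior}, followed by a reading of the result against the chain \eqref{eq:chain}. First, the pair covariance $\Sig_{k_1k_2} = \Sig_{1,k_1} + \Sig_{2,k_2}$ is block-diagonal with blocks $\tfrac12(v_{1,k_1}[f] + v_{2,k_2}[f])\mathbf{I}_2$. Its inverse is therefore block-diagonal with blocks $2(v_{1,k_1}[f]+v_{2,k_2}[f])^{-1}\mathbf{I}_2$; this needs $v_{1,k_1}[f]+v_{2,k_2}[f] > 0$, which $\Sig_{i,k} \succ 0$ guarantees. Multiplying by $\Sig_{1,k_1}$ block by block, the factors $\tfrac12$ and $2$ cancel and give \eqref{eq:wiener}. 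Scalar multiples of $\mathbf{I}_2$ commute, and block-diagonal matrices multiply block-wise, so no cross-bin term can appear.

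Second, I substitute into Proposition~\ref{prop:form}. With $\mub_{i,k} = \mathbf{0}$ every offset $\mathbf{c}_{k_1k_2}$ vanishes, so each pairwise term is $\Gk\bfx$. In bin $f$ this reads $w_{k_1k_2}[f]\,x[f]$ with the Wiener ratio $w_{k_1k_2}[f] = v_{1,k_1}[f]/(v_{1,k_1}[f]+v_{2,k_2}[f]) \in (0,1)$. The estimator in bin $f$ is then $m[f]\,x[f]$ with $m[f] = \sum_{k_1,k_2}\phi_{k_1k_2}(\bfx)\,w_{k_1k_2}[f]$. The responsibilities \eqref{eq:resp} are non-negative and sum to one, so $m[f]$ is a convex combination of numbers in $(0,1)$ and lies in $[0,1]$. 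Every $\phi_{k_1k_2}$ is real, so $m[f]$ is real even though it depends on the whole spectrum $\bfx$ and not only on $x[f]$.

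Third, I check that this is enough for membership in $\Mclass_{[0,1]}$, and this is the step I expect to need the most care. The definition \eqref{eq:class} is stated per estimate: for the given frame it only asks that some real $m[f]$ exist with $\hat s[f] = m[f]x[f]$. It places no requirement that the gain be a bin-local function, so a gain computed from the whole spectrum is admissible, exactly as for an oracle or a network mask. Once membership holds, Proposition~\ref{prop:ceiling} bounds the error in every bin below by $|s|^2\sin^2\theta$ for whatever real gain is used. Summing over bins and frames gives an error at least the numerator of \eqref{eq:sdrstar}, so the estimator's SDR is at most $\SDRstar$. Corollary~\ref{cor:subceilings} even gives the sharper bound $\SDRstar_{[0,1]}$. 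Bins with $x[f] = 0$ are handled by the convention already fixed after \eqref{eq:sdrstar}.
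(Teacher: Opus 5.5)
Your proposal is correct and follows the paper's own proof: scalar diagonal blocks invert and multiply blockwise to give the Wiener ratio, and a convex combination of gains in $[0,1]$ with real responsibilities stays in $[0,1]$. The points you make explicit are exactly what the paper's three-line proof leaves implicit: the offsets $\mathbf{c}_{k_1k_2}$ vanish under zero means, $v_{i,k}[f] > 0$ follows from $\Sig_{i,k} \succ 0$, and \eqref{eq:class} is read per estimate, so a real gain that depends on the whole spectrum is admissible.
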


\begin{IEEEproof}
Under the stated structure the inverse of a block-diagonal matrix with scalar blocks is block-diagonal with reciprocal scalar blocks, and the product of two such matrices is block-diagonal with the product of the scalars; a convex combination of gains in $[0,1]$ stays in $[0,1]$.
\end{IEEEproof}

The two propositions read together give the correspondence announced in Section~\ref{sec:problem:classes}, since each of the three assumptions reinstated by Proposition~\ref{prop:collapse} removes exactly one step of the chain \eqref{eq:chain}, and each is a property of the prior, not of the algorithm. Non-zero component means make the map affine instead of homogeneous, so the image circle moves off the origin. Non-circularity, that is $\Sig_{rr} \neq \Sig_{ii}$ or $\Sig_{ri}$ not antisymmetric, makes the diagonal blocks differ from $m\mathbf{I}_2$, so the circle becomes an ellipse and phase is rotated by a direction-dependent amount, which is $\Mclass_3$. A full covariance makes the off-diagonal blocks non-zero, so the ellipse in one bin depends on the content of the others, which is $\Mclass_4$. Reinstating all three collapses every block to a scalar and returns the estimator to $\Mclass_1$, where it coincides with the Wiener filter.

Recovering the Wiener reduction of \cite{benaroya2006} as the degenerate case is what makes the argument a mechanism: the estimator is confined to $\Mclass_1$ exactly under the assumptions the classical Gaussian priors make, and leaves it exactly when they are dropped. Two predictions follow and each is testable on its own. Replacing $[\Real;\Imag]$ by magnitude alone must bring the estimator back under $\mstar$, a prior on power spectra \cite{baelde2019rare} being phase-invariant; if it does not, the explanation by non-circularity is false and this analysis is wrong. Replacing the full covariance by a diagonal one on $[\Real;\Imag]$ must \emph{not} bring it back, that restriction giving up $\Mclass_4$ and keeping $\Mclass_3$. Section~\ref{sec:results} reports both restrictions: three configurations carry a diagonal covariance, so the smallest class of \eqref{eq:chain} containing their estimator is $\Mclass_3$ by construction, and one carries a full covariance, so its estimator reaches $\Mclass_4$. In neither case is it $\Mclass_1$.

\begin{remark}[Outside the Class Is Not Above the Ceiling]
\label{rem:necessary}
Being outside the class is necessary for crossing the ceiling, but it is not sufficient: an estimator outside $\Mclass_1$ can be arbitrarily poor, and most are. Proposition~\ref{prop:form} establishes the possibility, and the measurement of Section~\ref{sec:results} decides whether it is realised. The comparison is deliberately asymmetric, since $\mstar$ is an oracle computed from the true source whereas the estimator sees only the mixture and a prior fitted on other material. A failure to cross therefore indicates that the prior is not estimated well enough, and leaves the ceiling of the class untouched.
\end{remark}

\subsection{Pull of the Squared-Error Criterion}
\label{sec:analysis:criterion}

Section~\ref{sec:analysis:outside} settles which class contains the operator; this section settles where the output lands, and the two answers differ. An estimator built from a prior outside the three assumptions carries an operator in $\Mclass_4$, yet the value it returns can still lie in $\Mclass_1$, because the squared-error criterion pulls it there. Two statements carry that. The first identifies the set the criterion pulls towards, a third reading of the ceiling in which the real gain is allowed to depend on the observation and on nothing else. The second states the pull itself for an arbitrary prior and identifies the shortfall from the ceiling as a posterior variance. Together they are what makes leaving $\Mclass_1$ and minimising squared error competing demands.

\begin{lemma}[The Adaptive Real Mask]
\label{lem:adaptive}
Fix a bin with $\Ex|s|^2 < \infty$ and $x \neq 0$ almost surely, and let
\begin{equation*}
\mathcal{S} = \left\{ m(x)\,x \ : \ m : \Cp \to \Rp \ \text{measurable},\ \Ex|m(x)x|^2 < \infty \right\}
\end{equation*}
be the set of estimates that stay on the mixture line while being free to depend on the observation. Then $\mathcal{S}$ is a closed subspace of $L^2$, the orthogonal projection of $s$ onto it is
\begin{equation}
\Pi_{\mathcal{S}}\,s = \bar m\,x,
\qquad
\bar m = \Ex\!\left[\mstar \mid x\right],
\label{eq:adaptive}
\end{equation}
and the residual it leaves is
\begin{equation}
\Ex\!\left|s - \bar m x\right|^2 = \Ex\!\left[|s|^2\sin^2\theta\right] + \Ex\!\left[|x|^2\Var\!\left(\mstar \mid x\right)\right].
\label{eq:adaptiveresidual}
\end{equation}
\end{lemma}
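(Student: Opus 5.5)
The plan is to work in the real Hilbert space $L^2$ of square-integrable complex random variables with the inner product $\langle u,v\rangle = \Real\,\Ex[u\overline{v}]$ of Section~\ref{sec:problem:cascade}. There I prove three things in turn: that $\mathcal{S}$ is a closed subspace, that $\bar m x$ satisfies the normal equations of that subspace, and that the residual splits by completing the square, bin-wise, exactly as in Corollary~\ref{cor:subceilings}.

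\textbf{Closedness.} Linearity over $\Rp$ is immediate. For closedness I reduce to a weighted $L^2$ space. The map $m \mapsto m(x)x$ is an isometry from
\[
L^2\bigl(|x|^2\,dP_x\bigr),
\]
the real measurable functions of $x$ square-integrable against the measure with density $|x|^2$ relative to the law of $x$, onto $\mathcal{S}$. Since $x \neq 0$ almost surely, this map is injective up to null sets. A weighted $L^2$ space is complete, so $\mathcal{S}$ is the isometric image of a complete space. It is therefore complete, hence closed.

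This is the step I expect to require the most care. The difficulty is measure-theoretic only: I must check that a limit of elements $m_n(x)x$ is again of the form $m(x)x$ with $m$ measurable. The isometry argument handles this, and the hypothesis $x\neq0$ almost surely is exactly what makes $m$ recoverable as $m = \Real(\hat s\,\overline{x})/|x|^2$.

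\textbf{Identifying the projection.} I check the normal equations against every $m(x)x \in \mathcal{S}$. The inner product is
\[
\Real\,\Ex\bigl[(s - \bar m x)\,m(x)\overline{x}\bigr] = \Ex\bigl[m(x)\bigl(\Real(s\overline{x}) - \bar m |x|^2\bigr)\bigr].
\]
Now $\Real(s\overline{x}) = \mstar |x|^2$ by Proposition~\ref{prop:ceiling}, with the convention $\mstar=0$ at $x=0$ (a null event here). Conditioning on $x$ and pulling out the $x$-measurable factor $m(x)|x|^2$ leaves $\Ex[\mstar \mid x] - \bar m = 0$. A conditioning caveat needs attention: $\mstar$ need not be integrable, so I would work with $\Ex[\mstar|x|^2 \mid x] = \Ex[\Real(s\overline{x})\mid x]$, which is finite because $|s\overline{x}| \le |s||x|$. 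I then define $\bar m$ as this quantity divided by $|x|^2$. Next, $\bar m x \in \mathcal{S}$ because $|\bar m x| \le \Ex[|s|\mid x]$ by Jensen, which is square-integrable. Finally, applying the orthogonality to bounded truncations of $m$ and passing to the limit justifies the normal equations for every element of $\mathcal{S}$.

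\textbf{Residual.} I use the exact identity from the proof of Corollary~\ref{cor:subceilings},
\[
|s - m x|^2 = |s|^2\sin^2\theta + |x|^2(m - \mstar)^2,
\]
taken pointwise with $m = \bar m$. Taking expectations and conditioning the second term on $x$ gives $\Ex\bigl[|x|^2\,\Ex[(\mstar - \bar m)^2 \mid x]\bigr]$. Since $\bar m$ is the conditional mean, the inner expectation is $\Var(\mstar \mid x)$, which gives \eqref{eq:adaptiveresidual}.
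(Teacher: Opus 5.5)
Your proof is correct, and it differs from the paper's in two of its three steps.

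For closedness, the paper takes $m_n(x)x \to z$ in $L^2$, extracts an almost surely convergent subsequence, and reads $m = z/x$ off as a real, $\sigma(x)$-measurable limit. You instead exhibit $\mathcal{S}$ as the isometric image of the complete space $L^2(|x|^2\,dP_x)$. That is shorter, and it does not need $x \neq 0$, because the weight already discards the set where $x$ vanishes; the hypothesis is really used only to define $\mstar$ and $\theta$.

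For the projection, the paper conditions the completed-square identity on $x$ and minimises over the constant $m(x)$ pointwise. This yields the minimiser and the residual \eqref{eq:adaptiveresidual} in a single computation, and the paper only then identifies the minimiser with the orthogonal projection. You verify the normal equations directly and recover the residual from the same identity afterwards. The paper's route is more economical and shows at a glance why the excess is a conditional variance. Yours makes explicit what the paper leaves implicit: that $\bar m x$ is square-integrable (there it follows from comparing with $m = 0$), and that $\bar m$ is well defined as $\Real(\Ex[s \mid x]\,\overline{x})/|x|^2$ even if $\mstar$ is not integrable.

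Your truncation step can be dropped. Since $s - \bar m x$ and $m(x)x$ both lie in $L^2$, their product is integrable by Cauchy--Schwarz. Conditioning on $x$ with $\overline{m(x)x}$ pulled out then leaves $m(x)\bigl(\Real(\Ex[s \mid x]\,\overline{x}) - \bar m|x|^2\bigr) = 0$ at once.
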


\begin{IEEEproof}
Stability under real linear combinations is immediate, since $m_1(x)x + \lambda m_2(x)x = (m_1 + \lambda m_2)(x)\,x$ for $\lambda \in \Rp$. Closedness follows because $x \neq 0$ almost surely, so a sequence $m_n(x)x$ converging in $L^2$ to some $z$ has a subsequence converging almost surely, along which $m_n(x) \to z/x$ pointwise; the limit is therefore real, measurable with respect to $\sigma(x)$, and $z$ belongs to $\mathcal{S}$. For the projection, the completion of the square used in the proof of Corollary~\ref{cor:subceilings} is a pointwise identity, so conditioning it on $x$, under which $m(x)$ is a constant, gives
\begin{equation*}
\Ex\!\left[|s - m(x)x|^2 \mid x\right] = \Ex\!\left[|s|^2\sin^2\theta \mid x\right] + |x|^2\,\Ex\!\left[(m(x) - \mstar)^2 \mid x\right].
\end{equation*}
Only the second term depends on $m$, and for each value of $x$ it is minimised over the constant $m(x)$ at the conditional mean $\bar m = \Ex[\mstar \mid x]$, with minimum $|x|^2\Var(\mstar \mid x)$. The minimisation being pointwise in $x$, the same choice minimises the unconditional error, which gives \eqref{eq:adaptive} and \eqref{eq:adaptiveresidual}; the minimiser of a quadratic over a subspace is the orthogonal projection onto it.
\end{IEEEproof}

Writing $\Mclass_i^{\mathrm{fix}}$ for the estimates a single fixed operator of $\Mclass_i$ produces, in the sense of Section~\ref{sec:problem:cascade}, the lemma inserts $\mathcal{S}$ into the picture as follows:
\begin{equation*}
\Mclass_1^{\mathrm{fix}} \subset \Mclass_2^{\mathrm{fix}} \subset \Mclass_3^{\mathrm{fix}} \subset \Mclass_4^{\mathrm{fix}} \subset L^2\!\left(\sigma(x)\right),
\qquad
\Mclass_1^{\mathrm{fix}} \subset \mathcal{S} \subset L^2\!\left(\sigma(x)\right),
\end{equation*}
every term being a closed subspace of $L^2$ and every optimum an orthogonal projection of $s$ onto it, the largest being the conditional mean $\Ex[s \mid x]$. In both branches $x$ is the frame vector and not a single bin, the lemma being applied bin by bin so that $\mathcal{S}$ lives in the same space as the four fixed classes. The two branches meet only at the bottom and are otherwise incomparable: a fixed complex gain leaves the line without adapting to the frame, an adaptive real gain adapts without ever leaving it, and the measurement of Section~\ref{sec:results:cascade} says which of the two matters more on real material. The lemma also makes $\bar m$ the optimum of the adaptive real masks, the format every deployed mask-based separator implements, its gain being computed from the observation. Proposition~\ref{prop:pull} then reads as a coincidence condition, saying when the projection onto the whole of $L^2(\sigma(x))$ already falls inside $\mathcal{S}$.

\begin{proposition}[Pull Onto the Line]
\label{prop:pull}
Fix a bin, write $s = r\,e^{\mathrm{i}(\arg x + \theta)}$ with $r = |s|$ and $\theta = \angle(s,x)$, and let the conditional law of $(r,\theta)$ given $x$ be symmetric under $\theta \mapsto -\theta$. Then
\begin{equation}
\Ex\!\left[s \mid x\right] = \frac{\Ex[r\cos\theta \mid x]}{|x|}\,x = \Ex\!\left[\mstar \mid x\right] x,
\label{eq:pull}
\end{equation}
so the minimum mean-square estimate lies in $\Mclass_1$ exactly, its gain $\bar m = \Ex[\mstar \mid x]$ is the posterior mean of the oracle gain of Proposition~\ref{prop:ceiling}, and
\begin{equation}
\Ex\!\left[\left|s - \bar m x\right|^2 \mid x\right] = \Ex\!\left[\left|s - \mstar x\right|^2 \mid x\right] + |x|^2\Var\!\left(\mstar \mid x\right).
\label{eq:excess}
\end{equation}
Without the symmetry assumption the weaker statement still holds: $\left|\Ex[s\mid x]\right| \leq \Ex[\,|s| \mid x]$, with equality only if the posterior is supported on a single ray from the origin.
\end{proposition}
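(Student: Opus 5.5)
The plan is to work in the rotated coordinate attached to the mixture. Fix $x \neq 0$ and write $u = x/|x|$, so that $s = r e^{\mathrm{i}\theta} u$ and
\begin{equation*}
\Ex[s \mid x] = u\,\Ex[r\cos\theta \mid x] + \mathrm{i}\,u\,\Ex[r\sin\theta \mid x].
\end{equation*}
Here $u$ is $\sigma(x)$-measurable and comes out of the conditional expectation. The symmetry hypothesis says that $(r,\theta)$ and $(r,-\theta)$ have the same conditional law given $x$. Hence $\Ex[r\sin\theta \mid x] = \Ex[-r\sin\theta \mid x]$, and this quadrature term vanishes. Integrability is immediate, since $|r\sin\theta| \le r = |s|$ and $\Ex|s|^2 < \infty$.

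What remains is $u\,\Ex[r\cos\theta\mid x] = \bigl(\Ex[r\cos\theta \mid x]/|x|\bigr)\,x$. By Proposition~\ref{prop:ceiling}, $\mstar = \Real(s\overline{x})/|x|^2 = r\cos\theta/|x|$. Pulling the measurable factor $1/|x|$ inside gives $\Ex[\mstar\mid x]\,x$, which is \eqref{eq:pull}. Membership in $\Mclass_1$ follows because the gain $\bar m$ is real and bin-wise.

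For \eqref{eq:excess} I would not redo any algebra. I would reuse the pointwise identity from the proof of Corollary~\ref{cor:subceilings},
\begin{equation*}
|s - mx|^2 = |s|^2\sin^2\theta + |x|^2(m-\mstar)^2,
\end{equation*}
taken at $m = \bar m$ and at $m = \mstar$, where the second term vanishes. Conditioning on $x$, with $\bar m$ and $|x|$ constant, gives
\begin{equation*}
\Ex[|s-\bar m x|^2\mid x] = \Ex[|s-\mstar x|^2 \mid x] + |x|^2\,\Ex[(\bar m - \mstar)^2\mid x].
\end{equation*}
The last expectation is $\Var(\mstar\mid x)$ because $\bar m$ is the conditional mean. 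This is exactly the conditional form of Lemma~\ref{lem:adaptive}, and I would cite it as such.

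For the weaker statement, the inequality $|\Ex[s\mid x]| \le \Ex[|s|\mid x]$ is conditional Jensen for the convex function $|\cdot|$ on $\Cp$. For equality, write $\Ex[s\mid x] = \lambda e^{\mathrm{i}\alpha}$ with $\lambda \ge 0$. Then
\begin{equation*}
\lambda = \Ex[\Real(s e^{-\mathrm{i}\alpha}) \mid x] \le \Ex[|s|\mid x].
\end{equation*}
Equality forces $\Real(se^{-\mathrm{i}\alpha}) = |s|$ almost surely under the posterior, so $s \in \Rp_{\ge 0}e^{\mathrm{i}\alpha}$ almost surely, a single ray. The trivial case $\lambda = 0$ with equality forces $s = 0$, which is also supported on a ray.

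The one delicate point is that this is a conditional, almost-sure statement. The direction $\alpha$ depends on $x$, and the equality in Jensen must be made almost sure conditionally. I would handle this with a regular conditional distribution of $s$ given $x$, which exists because $s$ takes values in the Polish space $\Cp$. The argument then runs for each fixed $x$ on an ordinary probability measure, and the symmetry hypothesis is read as a property of that kernel. This measure-theoretic bookkeeping is the main obstacle; the algebra is routine.
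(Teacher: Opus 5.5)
Your proposal is correct and follows the paper's proof. The symmetry kills the odd moment $\Ex[r\sin\theta \mid x]$, your pointwise completion-of-square identity is exactly the paper's bias--variance decomposition (the cross term vanishes because $s - \mstar x$ is orthogonal to $x$), and the last claim is Jensen for the norm. Your explicit equality case, $\Real(s e^{-\mathrm{i}\alpha}) = |s|$ almost surely under the posterior, fills in detail the paper leaves implicit.
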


\begin{IEEEproof}
The first display is the symmetry killing the odd moment $\Ex[r\sin\theta \mid x]$, and $\mstar = (|s|/|x|)\cos\theta$ by Proposition~\ref{prop:ceiling}. The second is the bias-variance decomposition of the scalar $\mstar$ transported by the isometry $m \mapsto mx$ of $\Rp$ into $\Rp x$, using that $s - \mstar x$ is orthogonal to $x$ so the cross term vanishes. The inequality is Jensen for the norm.
\end{IEEEproof}

\begin{figure}[pos=tbp]
\centering
\begin{tikzpicture}[scale=2.30,>=stealth,line join=round,
  every node/.style={font=\scriptsize},
  lob/.style={gray!75,line width=0.6pt},
  pt/.style={gray!60,fill=gray!60},
  lead/.style={gray!60,line width=0.35pt},
  mix/.style={very thick,->}]

\begin{scope}
  \begin{scope}[rotate=13.13]
    \draw[gray!55,dashed] (-0.20,0) -- (1.92,0);
    \node[gray!70,anchor=south west,inner sep=1pt] at (1.62,0.02) {$\mathbb{R}x$};
    \draw[lob,rotate around={20:(0.95,0.44)}] (0.95,0.44) ellipse (0.30 and 0.15);
    \draw[lob,rotate around={-20:(0.95,-0.44)}] (0.95,-0.44) ellipse (0.30 and 0.15);
    \foreach \px/\py in {0.76/0.39, 0.95/0.49, 1.14/0.47, 0.76/-0.39, 0.95/-0.49, 1.14/-0.47}
      \fill[pt] (\px,\py) circle (0.020);
    \draw[gray!60,line width=0.4pt,<->] (1.34,0.34) -- (1.34,-0.34);
    \node[gray!70,anchor=north west,inner sep=1.5pt] at (1.38,0.34) {$\theta \mapsto -\theta$};
    \draw[mix] (0,0) -- (1.5403,0);
    \node[anchor=north west,inner sep=2pt,font=\small] at (1.56,-0.03) {$x$};
    \fill (0.95,0) circle (0.030);
    \draw[lead] (0.93,-0.03) -- (0.62,-0.28);
    \node[anchor=north east,inner sep=1pt] at (0.63,-0.26) {$\bar m x$};
  \end{scope}
  \fill (0,0) circle (0.016);
  \node[anchor=north,align=center] at (0.80,-0.92) {(a) symmetric phase posterior:\\ the mean stays in $\Mclass_1$};
\end{scope}

\begin{scope}[shift={(3.35,0)}]
  \begin{scope}[rotate=13.13]
    \draw[gray!55,dashed] (-0.20,0) -- (1.92,0);
    \node[gray!70,anchor=south west,inner sep=1pt] at (1.62,0.02) {$\mathbb{R}x$};
    \draw[lob,rotate around={20:(0.95,0.44)}] (0.95,0.44) ellipse (0.30 and 0.15);
    \draw[lob,gray!45,densely dotted,rotate around={-20:(0.95,-0.44)}] (0.95,-0.44) ellipse (0.17 and 0.09);
    \foreach \px/\py in {0.76/0.39, 0.95/0.49, 1.14/0.47, 0.86/0.33, 1.05/0.36}
      \fill[pt] (\px,\py) circle (0.020);
    \fill[gray!40] (0.95,-0.44) circle (0.020);
    \draw[mix] (0,0) -- (1.5403,0);
    \node[anchor=north west,inner sep=2pt,font=\small] at (1.56,-0.03) {$x$};
    \draw[red,line width=0.8pt] (0.95,0.24) -- (0.95,0);
    \fill (0.95,0.24) circle (0.030);
    \node[anchor=east,inner sep=3pt] at (0.90,0.16) {$\Ex[s \mid x]$};
  \end{scope}
  \fill (0,0) circle (0.016);
  \node[anchor=north,align=center] at (0.80,-0.92) {(b) asymmetric phase posterior:\\ the mean leaves $\Mclass_1$};
\end{scope}

\end{tikzpicture}
\caption{The mechanism of Proposition~\ref{prop:pull} in one bin, drawn in the frame where the mixture line $\Rp x$ is horizontal. Grey contours and dots stand for the posterior mass of $s$ given $x$. (a) When that mass is invariant under the reflection $\theta \mapsto -\theta$ about the mixture direction, the odd moment $\Ex[r\sin\theta \mid x]$ cancels pairwise and the barycentre is forced onto the line, so the conditional mean is a real gain however wide the class containing the operator. (b) Thinning one lobe breaks the symmetry, the barycentre lifts off the line by the red leg, and the estimate leaves $\Mclass_1$. Asymmetry of the phase posterior is thus the only route out of the class for a posterior mean.}
\label{fig:collapse}
\end{figure}

The statement has an antecedent in single-channel speech enhancement. The minimum mean-square short-time spectral amplitude estimator of Ephraim and Malah \cite{ephraim1984mmse} applies a real gain to the noisy coefficient and keeps its phase, and Gerkmann and Krawczyk \cite{gerkmann2013phase} derive the amplitude estimate that is optimal once the phase is fixed to the observed one; both estimates lie on the line $\Rp x$, and in both cases the reason is a phase model symmetric about the observed phase. Proposition~\ref{prop:pull} is that mechanism stated without a parametric model: any prior at all, coupled across frequency and with arbitrary marginals, is pulled onto the line by the conditional mean as soon as its phase posterior carries the reflection symmetry, and the excess error is identified as the posterior variance of the oracle gain by \eqref{eq:excess}. What is new here is the generality and that accounting.

Proposition~\ref{prop:pull} has three consequences. First, a posterior mean is pulled onto the line by the very criterion that defines it, and departs from the line only to the extent that the prior makes the phase posterior asymmetric about the mixture direction, which is the contrast drawn in Fig.~\ref{fig:collapse}. The two implications run in one direction each. A prior that is zero-mean, circular and bin-wise independent gives a symmetric phase posterior, hence an estimate on the line. The symmetry is immediate under those three assumptions: conditionally on the component and on $x$, the law of $s$ is complex Gaussian with mean on $\Rp x$ and circular covariance, so it is invariant under the reflection about the mixture direction, and a mixture of such laws whose weights depend on $x$ alone inherits the invariance; Proposition~\ref{prop:collapse} then places the resulting gain in $\Mclass_{[0,1]}$. Conversely, breaking that symmetry requires non-zero means or non-circularity, so those two are \emph{necessary} for a posterior mean to leave the class; they are not sufficient, since a non-circular prior can still yield a phase posterior symmetric about the mixture direction, in which case Proposition~\ref{prop:pull} returns the mean to the line even though Proposition~\ref{prop:form} places the per-pair operator in $\Mclass_4$ and no smaller class of \eqref{eq:chain}. What Proposition~\ref{prop:form} settles is the structure of the operator, what Proposition~\ref{prop:pull} settles is the position of the output, and the second is the one the ceiling responds to. Second, by \eqref{eq:excess} the shortfall from the ceiling is a posterior variance, so it shrinks as the posterior sharpens, which depends on the quality of the prior and not on the inference procedure; averaged over the observation this is exactly the second term of \eqref{eq:adaptiveresidual}, so under the symmetry the conditional mean incurs the residual of the adaptive real mask and not one decibel more. Third, the maximum a posteriori point of \eqref{eq:posterior}, and a single draw from it, have no reason to lie on the line, so they should depart from the class visibly where the mean does not, with a larger squared error. Leaving the class and minimising squared error are thus conflicting requests, and Section~\ref{sec:results:readout} measures the conflict.

\begin{remark}[Scalar Witnesses and Their Limits]
\label{rem:certificates}
Three scalar witnesses map onto the three mechanisms above, each of them zero for a member of the corresponding class: the median phase rotation between estimate and mixture, zero for a real mask; the fraction of bins whose gain has modulus above one, zero on $\Mclass_{[0,1]}$; and the fraction of bins with negative real gain, zero on $\Mclass_{+}$. A non-zero value establishes that the estimate lies off the line, and nothing more. The empirical caution is stronger: across a sweep of covariance conditioning at fixed model class, all three witnesses fall monotonically as separation quality rises, so a badly conditioned fit produces large witnesses that measure estimation noise instead of a deliberate departure from the class. We therefore read them as certificates only at the best available operating point. The geometric certificate of Section~\ref{sec:problem:classes} is free of this defect, being a property of the fitted operator that is available before any estimate is formed.
\end{remark}

\subsection{Behaviour Away From the Training Support}
\label{sec:analysis:offsupport}

The two preceding sections describe an estimator whose prior is taken as given. A prior fitted on finite material is not given everywhere, and this section asks what \eqref{eq:estimator} does when the observation falls where the fit is thin. The estimator is a convex combination of affine maps, so its sensitivity splits into a regression term and a responsibility term; the lemma below bounds the first and shows it cannot be the mechanism of failure, which leaves the second, and that is the term the training volume acts on.

\begin{lemma}[Contraction of the Regression Term]
\label{lem:contraction}
For $\Sig_1, \Sig_2 \succ 0$ and $\mathbf{S} = \Sig_1 + \Sig_2$, the matrix $\mathbf{G} = \Sig_1\mathbf{S}^{-1}$ has all its eigenvalues in $(0,1)$, and $\|\mathbf{G}\mathbf{v}\|_{\mathbf{S}^{-1}} \leq \|\mathbf{v}\|_{\mathbf{S}^{-1}}$ for $\|\mathbf{v}\|_{\mathbf{S}^{-1}} = \|\mathbf{S}^{-1/2}\mathbf{v}\|_2$. In the Euclidean norm only the weaker bound $\|\mathbf{G}\|_2 \leq \sqrt{\kappa(\mathbf{S})}$ holds, and it cannot be tightened to one.
\end{lemma}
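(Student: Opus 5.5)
The plan is to move everything into the whitened coordinates defined by $\mathbf{S}^{-1/2}$, where $\mathbf{G}$ becomes symmetric.

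\textbf{Step 1: symmetrise.} Write
\begin{equation*}
\mathbf{S}^{-1/2}\mathbf{G}\mathbf{S}^{1/2} = \mathbf{S}^{-1/2}\Sig_1\mathbf{S}^{-1/2} =: \mathbf{B}.
\end{equation*}
Then $\mathbf{B}$ is symmetric and positive definite, because $\Sig_1 \succ 0$. It is also similar to $\mathbf{G}$, so the two share their eigenvalues, and these are real and positive.

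Next,
\begin{equation*}
\mathbf{I} - \mathbf{B} = \mathbf{S}^{-1/2}(\mathbf{S} - \Sig_1)\mathbf{S}^{-1/2} = \mathbf{S}^{-1/2}\Sig_2\mathbf{S}^{-1/2} \succ 0.
\end{equation*}
Hence every eigenvalue of $\mathbf{B}$ is below one, and the spectrum of $\mathbf{G}$ lies in $(0,1)$.

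\textbf{Step 2: contraction in the $\mathbf{S}^{-1}$ norm.} For any $\mathbf{v}$,
\begin{equation*}
\|\mathbf{G}\mathbf{v}\|_{\mathbf{S}^{-1}} = \|\mathbf{S}^{-1/2}\mathbf{G}\mathbf{S}^{1/2}\,\mathbf{S}^{-1/2}\mathbf{v}\|_2 = \|\mathbf{B}\mathbf{w}\|_2, \qquad \mathbf{w} = \mathbf{S}^{-1/2}\mathbf{v}.
\end{equation*}
Since $\mathbf{B}$ is symmetric, $\|\mathbf{B}\|_2$ equals its largest eigenvalue, which is less than one. This gives $\|\mathbf{G}\mathbf{v}\|_{\mathbf{S}^{-1}} \le \|\mathbf{w}\|_2 = \|\mathbf{v}\|_{\mathbf{S}^{-1}}$.

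\textbf{Step 3: Euclidean bound.} From $\mathbf{G} = \mathbf{S}^{1/2}\mathbf{B}\mathbf{S}^{-1/2}$ and submultiplicativity,
\begin{equation*}
\|\mathbf{G}\|_2 \le \|\mathbf{S}^{1/2}\|_2\,\|\mathbf{B}\|_2\,\|\mathbf{S}^{-1/2}\|_2 \le \sqrt{\lambda_{\max}(\mathbf{S})/\lambda_{\min}(\mathbf{S})} = \sqrt{\kappa(\mathbf{S})}.
\end{equation*}

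\textbf{Step 4: the bound cannot be tightened to one.} This is the only step that needs real work: exhibiting a counterexample. I would use a $2\times2$ example with $\Sig_1$ nearly rank one and oblique, and $\Sig_2$ ill-conditioned along the axes. Concretely:
\begin{itemize}
\item take $\Sig_1 = \mathbf{u}\mathbf{u}^{\mathsf{T}} + \epsilon\mathbf{I}$ with $\mathbf{u} = (1,1)/\sqrt{2}$;
\item take $\Sig_2 = \mathrm{diag}(\delta, 1)$, with $\delta$ and $\epsilon$ small.
\end{itemize}
Apply $\mathbf{G}$ to $\mathbf{v} = \mathbf{S}e_1$. Then $\mathbf{G}\mathbf{v} = \Sig_1 e_1$, while $\|\mathbf{v}\|_2$ is about $\|(1/2, 1/2)\|$ in the limit.

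If this first choice only reaches ratio one in the limit, I would switch to evaluating $\|\mathbf{G}^{\mathsf{T}}\|_2 = \|\mathbf{S}^{-1}\Sig_1\|_2$ on a vector along $\mathbf{u}$. The mechanism is that $\mathbf{S}^{-1}$ amplifies the weak first axis. Small parameters $\delta \to 0$ make $\|\mathbf{S}^{-1}\Sig_1\mathbf{u}\|$ grow like $1/\sqrt{\delta}$ relative to $\|\mathbf{u}\|$. That shows $\|\mathbf{G}\|_2$ exceeds one, indeed unboundedly, consistent with the $\sqrt{\kappa}$ scaling.

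I expect the main obstacle to be choosing the numbers so the norm is visibly above one. To settle it, I would fix explicit values, say $\delta = 10^{-2}$ and $\epsilon = 10^{-2}$, and compute $\|\mathbf{G}\|_2$ for that explicit matrix.
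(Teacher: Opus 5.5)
Your Steps 1--3 are the paper's proof: conjugate by $\mathbf{S}^{1/2}$ to the symmetric matrix $\mathbf{S}^{-1/2}\Sig_1\mathbf{S}^{-1/2}$, use $\mathbf{0}\prec\mathbf{B}\prec\mathbf{I}$, and apply submultiplicativity for the Euclidean estimate. The gap is in Step 4. As written it does not establish that the bound cannot be tightened to one, and the justification it offers is false.

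\emph{The first test vector.} As you suspected, $\mathbf{S}e_1$ gives the ratio $\|\Sig_1 e_1\|_2/\|(\Sig_1+\Sig_2)e_1\|_2<1$ for every $\delta>0$, so it can never work.

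\emph{The fallback.} You claim that $\mathbf{S}^{-1}$ amplifies a weak first axis, so that the ratio grows like $1/\sqrt{\delta}$ without bound. But $\mathbf{S}$ is not weak along $e_1$, since $e_1^{\mathsf{T}}\mathbf{S}e_1=\tfrac12+\epsilon+\delta$. As $\epsilon,\delta\to 0$ it tends to $\left[\begin{smallmatrix}1/2&1/2\\1/2&3/2\end{smallmatrix}\right]$, which has determinant $\tfrac12$ and condition number $3+2\sqrt2$. Your own Step 3 therefore caps $\|\mathbf{G}\|_2$ near $1+\sqrt2$ on this family, which contradicts the claimed blow-up. Since you defer the explicit computation, the only argument you give for $\|\mathbf{G}\|_2>1$ is this incorrect one.

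The family is nevertheless a valid counterexample, for a different reason. In the limit, $\mathbf{S}^{-1}=\left[\begin{smallmatrix}3&-1\\-1&1\end{smallmatrix}\right]$ and $\mathbf{G}=\left[\begin{smallmatrix}1&0\\1&0\end{smallmatrix}\right]=\sqrt2\,\mathbf{u}e_1^{\mathsf{T}}$. Its eigenvalues are $1$ and $0$, but $\|\mathbf{G}\|_2=\sqrt2$. Your test $\|\mathbf{S}^{-1}\Sig_1\mathbf{u}\|_2=(1+\epsilon)\|\mathbf{S}^{-1}\mathbf{u}\|_2$ does certify this, tending to $\sqrt2$ rather than to infinity. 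What lifts the norm above the spectral radius is non-normality: the left and right singular directions $\mathbf{u}$ and $e_1$ are $45^\circ$ apart. It is not ill-conditioning of $\mathbf{S}$.

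At your values $\delta=\epsilon=10^{-2}$ the computation closes the step. One gets $\mathbf{G}\approx\left[\begin{smallmatrix}0.972&0.009\\0.934&0.028\end{smallmatrix}\right]$, with eigenvalues about $0.981$ and $0.019$ and $\|\mathbf{G}e_1\|_2\approx 1.348$.

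The paper instead exhibits one explicit instance, $\Sig_1=\mathrm{diag}(1,10^{-2})$ with $\mathbf{S}=\left[\begin{smallmatrix}1.01&0.09\\0.09&1\end{smallmatrix}\right]$, where $\|\mathbf{G}\|_2=1.002$. Once verified, your example clears one by a much wider margin.
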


\begin{IEEEproof}
$\mathbf{S}^{-1/2}\mathbf{G}\mathbf{S}^{1/2} = \mathbf{S}^{-1/2}\Sig_1\mathbf{S}^{-1/2} =: \mathbf{M}$ is symmetric with $\mathbf{0} \prec \mathbf{M} \prec \mathbf{I}$, since $\Sig_1 \prec \mathbf{S}$. Similar matrices share their spectrum, and $\|\mathbf{G}\mathbf{v}\|_{\mathbf{S}^{-1}} = \|\mathbf{M}\mathbf{S}^{-1/2}\mathbf{v}\|_2 \leq \|\mathbf{S}^{-1/2}\mathbf{v}\|_2$. The Euclidean bound follows from $\mathbf{G} = \mathbf{S}^{1/2}\mathbf{M}\mathbf{S}^{-1/2}$ and $\|\mathbf{M}\|_2 \leq 1$. For the counter-example, $\Sig_1 = \mathrm{diag}(1, 10^{-2})$ with $\mathbf{S} = \left[\begin{smallmatrix} 1.01 & 0.09 \\ 0.09 & 1\end{smallmatrix}\right]$, which leaves $\Sig_2 = \mathbf{S} - \Sig_1 \succ 0$, gives eigenvalues $0.998$ and $0.010$ but $\|\mathbf{G}\|_2 = 1.002$.
\end{IEEEproof}

Two observations therefore give estimates that differ by at most
\begin{equation}
\begin{split}
\left\|\hat{\mathbf{s}}_1(\bfx) - \hat{\mathbf{s}}_1(\bfx')\right\| \leq{}& \max_{k_1k_2}\left\|\Gk\right\| \left\|\bfx - \bfx'\right\| \\
&+ \max_{k_1k_2}\left\|\tilde{\mub}_{k_1k_2}(\bfx')\right\| \left\|\boldsymbol{\phi}(\bfx) - \boldsymbol{\phi}(\bfx')\right\|_1,
\end{split}
\label{eq:twoterm}
\end{equation}
a regression term, non-expansive for each component pair in the metric that pair's own covariance induces, by Lemma~\ref{lem:contraction}, and bounded in the Euclidean one by $\sqrt{\kappa(\Sig_{k_1k_2})}$, and a responsibility term.

The estimator degrades gracefully instead of catastrophically when the observation moves away from where the prior was fitted, and \eqref{eq:twoterm} says where the residual risk sits. The responsibility term is where off-support behaviour actually lives, and the geometry says why: $\boldsymbol{\phi}$ is a softmax of quadratic forms, so it partitions observation space into cells bounded by quadric surfaces, one cell per dominant component pair. Inside a cell the estimator is close to a single affine operator and the bound is governed by Lemma~\ref{lem:contraction}; crossing a wall swaps one affine operator for another whose offset $\mathbf{c}_{k_1k_2}$ may be far away.

This yields a falsifiable prediction. Under a continuously increasing deformation of the observed spectrum, the error should move slowly while the responsibilities are stable, and break where the arg max of $\boldsymbol{\phi}$ switches; Section~\ref{sec:results:offsupport} runs that test. A smooth curve with no break would show that the responsibility term is not the mechanism, and would refute the present analysis. The same reading explains why the off-support regime is data-hungry, not broken: what fails first is the assignment and not the regression, and the assignment improves with the number of cells correctly populated, hence with the amount of training material. Section~\ref{sec:results:deficit} measures that data limitation directly, at a fixed model class.

\subsection{The Exact Posterior as a Reference for Samplers}
\label{sec:analysis:yardstick}

Contemporary generative separators define a posterior implicitly and reach it by sampling, whether by annealed Langevin dynamics on a prior fitted per source \cite{jayaram2020basis}, by score matching of a mixing diffusion \cite{scheibler2023diffsep}, by diffusion posterior sampling \cite{chung2023dps} or by flow matching \cite{lipman2023flow}, and the quality of the approximation is normally assessed only through the end metric. The model of Section~\ref{sec:model} is a rare case where the posterior of a genuine separation problem is known exactly, in a normalised closed form, at any dimension and for any number of components, so it can serve as an analytic reference on which approximate samplers are measured directly instead of by proxy. Three quantities come without Monte Carlo error on the reference side. The mean is exact, so the bias of a sampler's estimate is measurable separately from its variance. The posterior covariance is exact and observation-independent by \eqref{eq:postcov}, so a sampler that collapses or over-disperses is caught even when its mean is right, which the end metric cannot show. And $\log p(\bfs_1 \mid \bfx)$ being available pointwise, any candidate sample can be scored under the true posterior, giving a one-sided Monte Carlo estimate of the Kullback-Leibler divergence from the sampler to the truth, the divergence between two Gaussian mixtures having no closed form.

The value of this reference is bounded by the realism of the prior. A Gaussian mixture on stacked spectra is a modest generative model of music, so the benchmark measures the inferential error of a sampler on a problem whose answer is known, and leaves its modelling quality out of scope.

\subsection{Invariance to the Component Law}
\label{sec:analysis:scalemix}

An estimator that falls short of the ceiling raises the question of whether the Gaussian mixture is the wrong family. Since finite Gaussian mixtures are dense in the space of densities, the question is one of parametric efficiency, not of expressiveness: a heavy-tailed marginal is representable with a number of components that grows with the reach of the tail, and every component devoted to the tail is one not devoted to structure. Replacing the family leaves the analysis untouched, because the results above never used Gaussianity where one might expect them to.

\begin{proposition}[Invariance to Gaussian Scale Mixtures]
\label{prop:scalemix}
Replace the prior \eqref{eq:prior} by $\bfs_i \mid \lambda_i, k_i \sim \mathcal{N}(\mub_{i,k_i},\, \lambda_i\Sig_{i,k_i})$ with $\lambda_i \sim \nu_i$ on $(0,\infty)$: inverse-gamma gives Student components \cite{yoshii2016student}, a positive stable law of index $\alpha/2$ gives symmetric alpha-stable ones \cite{leglaive2017alphastable}, a point mass gives \eqref{eq:prior} back. Then, conditionally on $(\lambda_1,\lambda_2)$, the mixture is Gaussian with covariance $\lambda_1\Sig_{1,k_1} + \lambda_2\Sig_{2,k_2}$, the posterior \eqref{eq:posterior} holds with
\begin{equation}
\Gk(\lambda) = \lambda_1\Sig_{1,k_1}\left(\lambda_1\Sig_{1,k_1} + \lambda_2\Sig_{2,k_2}\right)^{-1},
\label{eq:gainlambda}
\end{equation}
and Propositions~\ref{prop:form}, \ref{prop:collapse}, \ref{prop:pull} and Lemma~\ref{lem:contraction} hold word for word, with sums over component pairs replaced by integrals over the scales.
\end{proposition}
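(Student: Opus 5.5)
The plan is to condition on the scales first and then reduce every claim to the Gaussian case already treated. Fix $(k_1,k_2,\lambda_1,\lambda_2)$. Given these, $\bfs_1$ and $\bfs_2$ are independent Gaussians with means $\mub_{i,k_i}$ and covariances $\lambda_i\Sig_{i,k_i}$, which are positive definite since $\lambda_i>0$. The derivation of \eqref{eq:mixdensity}--\eqref{eq:resp} therefore applies verbatim with $\Sig_{i,k_i}$ replaced by $\lambda_i\Sig_{i,k_i}$. This gives the conditional mixture covariance $\lambda_1\Sig_{1,k_1}+\lambda_2\Sig_{2,k_2}$, the gain \eqref{eq:gainlambda}, and a Gaussian conditional posterior with mean $\tilde{\mub}_{k_1k_2}(\bfx;\lambda)$ and covariance $\tilde{\Sig}_{k_1k_2}(\lambda)$.

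Next, marginalise over the scales by Bayes' rule on the joint latent $(k_1,k_2,\lambda_1,\lambda_2)$. The posterior of $\bfs_1$ becomes a mixture over $k$ and an integral over $\lambda$ of these Gaussians. The weights are $\phi(\bfx;k,\lambda)\propto\pi_{1,k_1}\pi_{2,k_2}\,\mathcal{N}(\bfx;\mub_{1,k_1}+\mub_{2,k_2},\lambda_1\Sig_{1,k_1}+\lambda_2\Sig_{2,k_2})$ with respect to $\nu_1\otimes\nu_2$, and they are normalised to a probability measure on the latent space. This is the form of \eqref{eq:posterior} with sums replaced by integrals.

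The only technical care, and the step I expect to be the real obstacle, is integrability. Three things must be checked: the normaliser $p(\bfx)$ is finite and positive, the posterior mean is finite, and Fubini applies for exchanging the expectation with the integral over scales. For the normaliser, the Gaussian density in $\bfx$ is bounded by $(2\pi)^{-d/2}\det(\lambda_1\Sig_1+\lambda_2\Sig_2)^{-1/2}$, which may blow up as $\lambda\to 0$. So I would argue directly that $p(\bfx)=\int p(\bfx\mid\lambda)\,d\nu$ is a marginal density, hence finite for almost every $\bfx$, and positive everywhere. For the mean, I would bound $\|\tilde{\mub}(\bfx;\lambda)\|$ by $\|\mub_1\|+\|\mathbf{G}(\lambda)\|\,\|\bfx-\mub_1-\mub_2\|$. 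Lemma~\ref{lem:contraction} then gives $\|\mathbf{G}(\lambda)\|\le\sqrt{\kappa(\lambda_1\Sig_1+\lambda_2\Sig_2)}\le\sqrt{\max(\kappa(\Sig_1),\kappa(\Sig_2))\cdot\kappa'}$, which is uniform in $\lambda$. To get this uniformity I would use a bound that is homogeneous in $\lambda$: the condition number of $\lambda_1\Sig_1+\lambda_2\Sig_2$ is bounded by $\max(\lambda_{\max}(\Sig_1),\lambda_{\max}(\Sig_2))/\min(\lambda_{\min}(\Sig_1),\lambda_{\min}(\Sig_2))$, since the sum is a positive combination. With $\tilde{\mub}$ bounded in this way, the mean is finite and Fubini is justified.

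Each proposition is then transferred pointwise in $\lambda$. Proposition~\ref{prop:form} holds because, for each $(k,\lambda)$, the term $\mathbf{G}(\lambda)\bfx+\mathbf{c}(\lambda)$ is affine widely linear, and a probability-weighted integral of such terms has the same form. For Proposition~\ref{prop:collapse}, the blocks of $\lambda_i\Sig_{i,k}$ are $\tfrac12\lambda_i v_{i,k}[f]\mathbf{I}_2$, so the gain blocks are $\lambda_1v_1/(\lambda_1v_1+\lambda_2v_2)\,\mathbf{I}_2\in[0,1]$, and a convex integral of such gains stays in $\Mclass_{[0,1]}$. Proposition~\ref{prop:pull} was proved for an arbitrary conditional law and never used Gaussianity, so it holds unchanged. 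Likewise the symmetry argument that follows it carries over, since each conditional Gaussian is reflection-invariant and the weights depend on $x$ only. Lemma~\ref{lem:contraction} applies to each pair $(\lambda_1\Sig_1,\lambda_2\Sig_2)$ of positive definite matrices. Finally, \eqref{eq:twoterm} carries over with the maximum over pairs taken as a supremum over $(k,\lambda)$; this supremum is finite by the uniform bound above, and $\|\cdot\|_1$ becomes total variation of the weight measures.
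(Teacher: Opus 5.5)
Your proposal is correct and follows the paper's own route: condition on $(k_1,k_2,\lambda_1,\lambda_2)$, read each cited proof with $\lambda_i\Sig_{i,k_i}$ in place of $\Sig_{i,k_i}$, and carry every conclusion through the mixing measure $\pi_{1,k_1}\pi_{2,k_2}\nu_1(\mathrm{d}\lambda_1)\nu_2(\mathrm{d}\lambda_2)$ by convexity of the classes; the paper simply omits the integrability checks you add. One small repair in those checks: bounded component means $\tilde{\mub}_{k_1k_2}(\bfx;\lambda)$ do not by themselves justify Fubini, which also needs $\sqrt{\operatorname{tr}\tilde{\Sig}_{k_1k_2}(\lambda)}$ to be integrable against the posterior weights, and that follows from $\tilde{\Sig}_{k_1k_2}(\lambda)\preceq\lambda_i\Sig_{i,k_i}$ for each $i$ together with the $(\lambda_1+\lambda_2)^{-d/2}$ bound on the Gaussian likelihood.
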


\begin{IEEEproof}
Proposition~\ref{prop:form} holds, since $\Mclass_4$ is convex and closed under continuous convex combination. Proposition~\ref{prop:collapse} holds, since scalar circular blocks give $\Gk(\lambda)$ the scalar blocks $\lambda_1 v_1/(\lambda_1 v_1 + \lambda_2 v_2)\mathbf{I}_2 \in [0,1]$ for every $\lambda$. Lemma~\ref{lem:contraction} holds, since $\lambda_i\Sig_i \succ 0$ and the lemma only ever used the form $\Sig_1'(\Sig_1' + \Sig_2')^{-1}$. Table~\ref{tab:classes} is untouched, being a statement about block structures and not about laws. Proposition~\ref{prop:pull} holds, having used no distributional assumption beyond a symmetry of the phase posterior. Each claim is the cited proof read with $\Sig_{i,k_i}$ replaced by $\lambda_i\Sig_{i,k_i}$ and the finite sum over $(k_1,k_2)$ replaced by the mixed measure $\pi_{1,k_1}\pi_{2,k_2}\nu_1(\mathrm{d}\lambda_1)\nu_2(\mathrm{d}\lambda_2)$; convexity of the four classes of Table~\ref{tab:classes} under such mixing is what carries them over.
\end{IEEEproof}

Only one property is lost, and it affects Section~\ref{sec:analysis:yardstick} alone, that section being the one which trades on exactness. The posterior remains exact conditionally on the scales, whereas marginally it becomes a continuous mixture whose log-density requires a quadrature in $\lambda$, of dimension two per component pair. The reference therefore degrades from exact in closed form to exact up to a scalar quadrature, which is still stronger than what a diffusion posterior offers, and a study using it should state which of the two it quotes. One boundary should be marked: a non-symmetric alpha-stable prior is closed under addition, so the structure of \eqref{eq:mixdensity} survives, but it has no closed-form density, so \eqref{eq:posterior}, \eqref{eq:estimator} and Section~\ref{sec:analysis:yardstick} all collapse. The Gaussian scale mixture is thus the equilibrium point, gaining heavier tails at one extra parameter per component while keeping every proposition above.

The proposition also settles, before any measurement, what a richer component law can and cannot change. Student components \cite{yoshii2016student} add one extra parameter per component and capture tails that would otherwise consume whole components, and they cannot change the \emph{structure} of the answer: being a Gaussian scale mixture, like the fractional-power and alpha-stable variants \cite{liutkus2015fractional,leglaive2017alphastable}, such a prior leaves the smallest containing class of \eqref{eq:chain} unchanged, still returns a posterior mean that Proposition~\ref{prop:pull} pulls onto the line under the same symmetry, and still has an excess error equal to a posterior variance. What it can change is the \emph{level}, a sharper prior having a smaller variance to incur, and by how much is an empirical question that no proposition settles. Section~\ref{sec:results:gate} measures, on the fit reported here, how much of the deficit a recalibration of the component law could reach at all.

\section{Experimental Conditions}
\label{sec:protocol}

The protocol is organised around three questions, not around a benchmark comparison. First, how far below the ceiling of Proposition~\ref{prop:ceiling} the fitted estimator sits, and whether that distance closes with the number of components or with the training volume; the configurations and the two arms below are built to separate the two. Second, where inside the class the estimator sits relative to a known member, which fixes whether the geometry of Section~\ref{sec:problem} binds at the level measured. Third, whether the deficit is the posterior variance Proposition~\ref{prop:pull} predicts, or an artefact of fitting; the witnesses and the closed-form gate settle this from the fitted model alone. Every configuration differs from the reference in exactly one respect, and every reported quantity is paired excerpt by excerpt against the ceiling of that same excerpt. The sections below follow the estimator through the chain of Table~\ref{tab:classes}: the diagonal covariance of Section~\ref{sec:protocol:model} places it in $\Mclass_3$ and is the default configuration from Section~\ref{sec:results:deficit} through Section~\ref{sec:results:readout}, Section~\ref{sec:results:cascade} involving no fitted model at all, and the full-covariance configuration exercising $\Mclass_4$ is introduced separately at the end of that same subsection.

\subsection{Corpus and Task}
\label{sec:protocol:corpus}

We use the full MUSDB18 corpus \cite{musdb18}, 100 training and 50 test tracks, and separate vocals from accompaniment. Both signals are reduced to mono by channel averaging at the native 44.1~kHz sampling rate. Priors are fitted one per source on the training tracks decoded in full, with a cap on the number of frames drawn at random from the pooled frames of the training set. Evaluation is on a 30 s excerpt of each test track, which is 2582 frames at $L = 1024$ and a hop of $L/2$.

Five configurations are measured in all. Three differ in the evaluation material alone and carry a diagonal covariance, and are described here: \texttt{in-support} at $L = 1024$ on 10 tracks, \texttt{held-out} at $L = 1024$ on 10 tracks of the test split, and \texttt{held-out}, $L = 256$ on 5 tracks. A fourth repeats the last of those three in support, on the same 5 tracks it was fitted on, so that the shorter frame length carries a generalisation gap of its own; a fifth differs from that pair in the covariance structure alone. The last two are described in Section~\ref{sec:protocol:model}, where the four cells they form with the third are read as a complete crossing, and their deficits are the twelve cells of Table~\ref{tab:cov}. The \texttt{in-support} configuration fits on 10 training tracks and evaluates on those same 10, which takes generalisation out of the measurement and gives the optimistic end of the deficit; the difference between the two configurations at equal setting is the generalisation gap, reported as such in the last two columns of Table~\ref{tab:cov}. The \texttt{held-out} configuration fits on all 100 training tracks and evaluates on 10 tracks of the test split. The \texttt{held-out}, $L = 256$ configuration repeats the second on 5 tracks with a four times shorter frame, which multiplies the number of frames per dimension by about four at equal frame count and is the strongest objection available on protocol grounds.

Each of those three is run at two training volumes, a cap of 20\,000 and a cap of 150\,000 frames per source, so that the effect of the number of components can be read against a controlled change of data volume instead of against a single operating point.

\subsection{Model Configuration}
\label{sec:protocol:model}

The estimator tested throughout Sections~\ref{sec:results:cascade}--\ref{sec:results:readout} is fitted with a diagonal covariance on $[\Real;\Imag]$, which places it in $\Mclass_3$, the smallest class of \eqref{eq:chain} containing it by construction (Section~\ref{sec:model:prior}). The covariance is diagonal in the three configurations above, with a floor of $10^{-6}$, 30 expectation-maximisation iterations, $K \in \{8,16,32\}$ components per source, one fit per triple of number of components, configuration and training volume, and a fixed seed. That choice is dictated by the frames available per dimension: the stacked dimension is $2F = 1026$ at $L = 1024$ and $258$ at $L = 256$, a full covariance carries $d(d+1)/2$ parameters per component against $d$ for a diagonal one, and the training volumes of Section~\ref{sec:protocol:corpus} put the former out of reach at the longer frame.

\begin{remark}[Which Knob Binds]\label{rem:pilot}
A pilot sweep on the 7 s preview edition of the corpus located the operative knob in the same regime, the training pool being what varies across its cells: fitted on 2 tracks of that edition, that is 1170 frames for 1026 dimensions, the estimator read $-17.7$~dB, and fitted on 25 of them, 14\,625 frames, between $+1.7$ and $+5.4$~dB, whereas moving the covariance floor from $10^{-6}$ to $10^{-3}$ at that second pool yielded $+0.17$~dB. These figures are not comparable with the results below, the material being different, but they show that the binding constraint is the volume of data and not the regularisation, which is what the two arms measure directly in Section~\ref{sec:results:deficit}.
\end{remark}

A fourth configuration exercises the last class of the chain. It carries a full covariance on $[\Real;\Imag]$, hence the inter-frequency coupling of Section~\ref{sec:analysis:outside}, and it is the only configuration whose estimator claims $\Mclass_4$. It is run at $L = 256$, where the stacked dimension $d = 258$ makes a full covariance tractable, on the same 5 training tracks, the same cap of 150\,000 frames, the same floor, the same number of iterations, the same numbers of components and the same seed as the diagonal $L = 256$ configuration, so that the two differ in the covariance structure alone. It is run in both regimes from a single fit per component count, the training pool being the same in the two, only the evaluation material changing: the 5 training tracks themselves in support, and 5 tracks of the test split held out. The diagonal $L = 256$ configuration is run in the same two regimes on the same material, from its own single fit per component count, so the two covariance structures and the two evaluation regimes form a complete crossing at this frame length, four cells per component count every pair of which is paired within a regime. Table~\ref{tab:deficit} reports the held-out diagonal cells alongside the other configurations, and Section~\ref{sec:results:cov} reads the 12 cells of the crossing jointly. The parameter count should be stated, since it is what the in-support cell is there to expose: a full covariance carries $d(d+1)/2 = 33\,411$ parameters per component against $258$ for a diagonal one, so at $K = 32$ each component sees about 4700 frames for 33\,411 free parameters in its covariance alone, two orders of magnitude short of identifiability.

\subsection{Reconstruction and Metrics}
\label{sec:protocol:metrics}

Resynthesis uses weighted overlap-add with the periodic Hann window as synthesis window, normalised by the sum of squared analysis windows, which is the canonical dual window of Section~\ref{sec:protocol:consistency}. That normalisation decays to zero over the first and last $L$ samples, where a modified spectrogram divided by it explodes while an unmodified one cancels exactly. The two edges are therefore a trap peculiar to modified spectrograms: a perfect-reconstruction test passes at over 100~dB while every oracle reads negative, the oracle ideal ratio mask falling to $-2.81$~dB, below the untreated mixture. We discard one full frame at each end, inside the metric and for every signal alike.

The reported quantity is the time-domain signal-to-distortion ratio $10\log_{10}(\|s\|^2 \mathbin{/} \|\hat s - s\|^2)$ on the trimmed waveform, without scale invariance, so that an estimator which gets the gain wrong is penalised. The scale-invariant variant \cite{leroux2019sdr} is deliberately not used: it would absolve an estimator of the gain error that the ceiling of Proposition~\ref{prop:ceiling} is precisely about, the optimal real gain being a statement about scale. We do not use the decomposition of \cite{vincent2006bss}, the class question being about total error, not its attribution.

\subsection{Transport of the Ceiling Through Resynthesis}
\label{sec:protocol:consistency}

Proposition~\ref{prop:ceiling} bounds the class in the time-frequency plane, whereas every figure reported below is a time-domain ratio measured after resynthesis. This section states what survives that transport, since the analysis and the measurement do not live in the same space, and the ceiling is quoted in the second.

The obstruction is that a masked spectrogram is in general inconsistent, in the sense that no time-domain signal has it as its short-time transform \cite{leroux2010consistency,leroux2013consistent}. Resynthesis therefore returns the signal whose transform is closest to the masked one instead of the masked one itself, and the residual measured in the time domain is not the residual computed bin by bin. The transport is an inequality in the direction we need, and that inequality is checked numerically rather than proved; two facts make it the expected outcome. With the periodic Hann window at hop $L/2$, the analysis operator $A$ and the synthesis operator $S$ built on the canonical dual window $w[n]/\sum_k w^2[n - kL/2]$ form a painless frame \cite{daubechies1986painless}, so $P = AS$ is the orthogonal projector onto the range of $A$ and satisfies $S(I - P) = 0$: the component of a modified spectrogram outside the consistent subspace is discarded by resynthesis with no effect on the result. And the projected estimate remains an admissible time-domain candidate, so no masked spectrogram is turned by resynthesis into a signal the class could not have produced; this is weaker than closure of $\Mclass_1$ under the projection, which does not hold. Under the verification reported next, the measured ceiling is a lower bound on the true time-domain ceiling of the class, so a crossing reported below cannot be an artefact of the transport, while a shortfall is genuine only up to that slack.

The frame is not tight, $\sum_k w^2[n - kL/2]$ oscillating in $[0.5, 1]$ at this hop, so the synthesis operator is not a contraction and the ordering has to be verified rather than assumed. The projector identities $\|P^2 - P\|$ and $\|DP - P^\top D\|$ hold at machine precision on the operators as implemented, and the ordering itself is checked by direct draw: no inversion of the two ceilings occurs in 4600 draws of excerpts from 100 to 800 frames, against 359 inversions at 10 frames, the short-excerpt failures being edge effects of the kind Section~\ref{sec:protocol:metrics} trims. All measurements below use excerpts two to three orders of magnitude longer than the failing regime, $2582$ frames at $L = 1024$ and $10\,334$ at $L = 256$ against the $10$ at which inversions appear.

The window choice enters here and nowhere else. Refitting one configuration with the symmetric window of the same length moves the estimator by 1.8~dB while leaving every oracle within 0.03~dB, the window being part of the feature the prior is fitted on where a mask acts on whatever transform it is given, so the periodic window is fixed throughout and no comparison below crosses that boundary.

\subsection{Basis of Comparison}
\label{sec:protocol:comparison}

Every number reported below is a deficit to the measured ceiling $\SDRstar$ of Proposition~\ref{prop:ceiling}, paired excerpt by excerpt, since excerpts differ in difficulty by more than the margin being measured. The ceiling is always the measured one, obtained after resynthesis, and never its analytic counterpart, for the reason given in Section~\ref{sec:protocol:consistency}. The ideal ratio mask and the oracle Wiener filter are computed on the same material and serve twice, as internal consistency checks of the oracle chain and as the second landmark inside the class, the one against which Section~\ref{sec:results:scope} locates the estimator. The ideal ratio mask is taken here in its magnitude form, $m^{\mathrm{irm}} = |s|/(|s| + |s'| + \varepsilon)$ per bin with $s'$ the other source, and the oracle Wiener filter in its power form, $|s|^2/(|s|^2 + |s'|^2 + \varepsilon)$; both are members of $\Mclass_{[0,1]}$ by construction and both keep the mixture phase. The constant $\varepsilon$ is a numerical guard, active only in bins where both sources vanish and where it sets the mask to zero instead of leaving it undefined; those bins contribute nothing to either term of the ratio, so the value of the guard does not affect any figure reported here. By Remark~\ref{rem:shared} the two sources of an excerpt carry a single measurement, so $n$ denotes the number of test excerpts and not twice that number.

One exclusion applies. A test excerpt in which one stem is silent poses no separation problem and has an infinite ceiling, so a single such excerpt displaces a mean by hundreds of decibels. We detect the case with a quantity already available, the signal-to-distortion ratio of the mixture taken as the estimate of one source, which equals the ratio of source energies in decibels. Excerpts above 60~dB are excluded, both sources at once so that the pairing stays symmetric. Healthy MUSDB excerpts lie below 31~dB, so the criterion is not a close call, and exactly one excerpt of the 10 drawn from the test split falls above it and is excluded, which is why the held-out rows of Table~\ref{tab:deficit} carry $n = 9$ against $n = 10$ in-support.

For reference, since Section~\ref{sec:results:deficit} reads the slope in $K$ against these derived quantities: the stacked dimension is $2F = 1026$ at $L = 1024$ and $258$ at $L = 256$, so the small arm of Section~\ref{sec:protocol:corpus} provides 19.5 frames per dimension in the two $L = 1024$ configurations and 77.5 at $L = 256$, and the large arm 144.3, 146.2 and 581.4 respectively, every cap being reached exactly except in-support, where the pooled training frames run out at 148\,075. The 7 s preview edition of the same corpus, at the training pools of those same three configurations, provides 5.7, 14.3 and 45.4, so the small arm raises those ratios by a factor of 1.4 to 3.4 and the large arm by a factor of 10 to 25, which is what makes the slope in $K$ separable from the slope in data volume. Nothing in this paragraph is a measurement.

\subsection{Class Witnesses and Prior Diagnostics}
\label{sec:protocol:witnesses}

The three scalar witnesses of Section~\ref{sec:analysis:criterion} are computed on the bins whose mixture magnitude exceeds a floor, below which the gain ratio is numerical noise. The four distributional statistics of Section~\ref{sec:results:diagnostics} are computed on isolated source spectra, before any fit, and each is reported against the value the same statistic takes on synthetic data satisfying the assumption \emph{at the same sample size}, a finite sample making perfect circularity read a few hundredths and not zero. Each statistic is computed twice, marginally and within the components of the fitted model, the second using the model's own hard assignment: a clustering on log-magnitudes does not align phase and would return a conditional circularity near zero by construction. Cells with fewer than 30 frames are dropped and counted.

\section{Results}
\label{sec:results}

\subsection{Ceiling of the Class on This Material}
\label{sec:results:ceiling}

The ceiling is a statistic of the material and not of any estimator, so the first thing to measure is what it reads on this corpus, how it moves with the frame length, and how much of it the bounded mask of every sigmoid output layer concedes. On the nine retained excerpts of the held-out configuration at $L = 1024$, the measured ceiling of $\Mclass_1$ is 16.62~dB pooled, against 15.49~dB for its analytic counterpart, the gap having the sign Section~\ref{sec:protocol:consistency} predicts. Pooled means the arithmetic mean of the two per-source figures in decibels, $16.62 = (12.88 + 20.36)/2$, which is a choice: averaging in decibels weights the two sources equally where averaging their error energies would let the accompaniment dominate, and every pooled figure in this paper is the decibel mean. Per source it is 12.88~dB for vocals and 20.36~dB for accompaniment, and the difference between the two, 7.48~dB, is exactly the ratio of source energies of Remark~\ref{rem:shared}, so it is one measurement seen twice. Since the oracle chain does not depend on any fit, these figures are identical in both training arms, and the same holds for the ideal ratio mask, 13.68~dB pooled. The ceiling moves with the frame length, as an energy-weighted average of $\sin^2\theta$ must: on the five-excerpt subset used for the frame-length comparison it reads 13.01 and 21.52~dB per source at $L = 1024$ against 10.58 and 19.09~dB at $L = 256$, short windows being less sparse. The drop is 2.43~dB on vocals and 2.43~dB on accompaniment, one displacement seen twice again. Ceilings from different frame lengths are therefore not comparable, and only within-configuration statements are made below.

The three subclasses of \eqref{eq:class} separate by little, which is itself the result. Corollary~\ref{cor:subceilings} evaluated on the same nine excerpts at $L = 1024$ gives 16.62, 15.98 and 15.37~dB pooled for $\Mclass_1$, $\Mclass_{+}$ and $\Mclass_{[0,1]}$, and their analytic counterparts 15.49, 15.02 and 14.59~dB, the two chains ordered as the corollary requires and each gap under one decibel. Per source the bounded mask loses 1.25~dB on vocals and 1.26~dB on accompaniment against the unconstrained real mask, so restricting a mask to $[0,1]$, which every sigmoid output layer does, removes about one and a quarter decibels of ceiling and no more. The loss grows as the window shortens: on the five-excerpt subset it is 1.30 and 1.29~dB at $L = 1024$ against 1.65 and 1.65~dB at $L = 256$, short windows being less sparse and therefore leaving more bins in which the two sources partially cancel or disagree in phase by more than $\pi/2$. Two consequences follow. The order of the separation being what it is, a system compared against the wrong subclass of \eqref{eq:class} is misplaced by about one decibel and not by 10, so the ceiling of the whole class is a usable landmark for a bounded mask as well; and the ideal ratio mask, at 13.68~dB pooled, sits 1.69~dB under the ceiling of its own class $\Mclass_{[0,1]}$ instead of 2.94~dB under the ceiling of $\Mclass_1$, which is the figure to quote when the question is what a bounded mask leaves on the table.

\begin{remark}[Where Deployed Separators Sit]
\label{rem:modern}
The ceiling above can be read against the systems of Section~\ref{sec:related:oracles}, with one caution. The vocal ceiling measured here is 12.88~dB on the nine retained held-out excerpts at $L = 1024$, while the mask-based entries of SiSEC 2018 have their vocal BSS-Eval SDR distributed about a median of some six decibels over the 50 tracks of the test set, as their box plots show \cite[Fig.~3]{stoter2018sisec}, the reference implementation of that family reporting 6.32~dB on the same test set \cite{stoter2019openunmix}. These are not the same statistic: ours is a projection residual pooled over nine excerpts, theirs a museval SDR over 50 tracks with its own allowance for a distortion filter, so the comparison supports an order of magnitude and not a difference. At that resolution a deployed magnitude-mask separator operates at about half of the ceiling of its own class in decibels, and the decibel and a quarter that clipping to $[0,1]$ removes is not what stands in its way. The time-domain entries of the same campaign lie outside the class and may pass that figure without contradicting anything here; one reading of their advantage is precisely that Proposition~\ref{prop:pull} does not bind them.
\end{remark}

\subsection{The Cascade of the Chain Under a Fixed Operator}
\label{sec:results:cascade}

\begin{table}[tbp]
\renewcommand{\arraystretch}{1.1}
\caption{The four residuals of Proposition~\ref{prop:cascade} on MUSDB18, pooled in decibels, with the three steps of the cascade. $\Mclass_4$ is reported raw and corrected for in-sample optimism by $2T/(2T - 2\,\mathrm{rank})$; the step $\Mclass_3 \to \Mclass_4$ is taken against the corrected figure. The last column is the per-frame ceiling \eqref{eq:sdrstar} of $\Mclass_1$ on the same excerpts, the quantity the whole chain of fixed operators is to be read against.}
\label{tab:cascade}
\centering
\setlength{\tabcolsep}{3pt}
\begin{tabular}{lcccccccc}
\hline
\textbf{Config.} & $n$ & $T$ & $\Mclass_1$ & $\Mclass_2$ & $\Mclass_3$ & $\Mclass_4$ & $\Mclass_4$ & $\SDRstar$ \\
 & & & & & & raw & corr. & \\
\hline
$L=1024$ & 9 & 2582 & 7.22 & 7.22 & 7.24 & 12.00 & 9.92 & 16.62 \\
$L=1024$ & 5 & 2582 & 7.95 & 7.95 & 7.97 & 12.42 & 10.35 & 17.27 \\
$L=256$ & 5 & 10334 & 7.08 & 7.08 & 7.10 & 8.34 & 8.24 & 14.83 \\
\hline
\end{tabular}
\vspace{2pt}
\par\footnotesize The three steps of the cascade, in decibels and row by row: $1{\to}2$ is 0.0013, 0.0016 and 0.0003, $2{\to}3$ is 0.020, 0.017 and 0.014, and $3{\to}4$ is 2.68, 2.38 and 1.14. The correction factor applied to $\Mclass_4$ is 1.61, 1.59 and 1.02.
\end{table}

Widening the linear class buys almost nothing until the coupling between bins is allowed, and even then it buys several times less than adaptivity to the frame does: this is the measurement that decides whether the estimator should be a wider filter or a prior. Table~\ref{tab:cascade} and Fig.~\ref{fig:cascade} measure the cascade of Proposition~\ref{prop:cascade} on the same excerpts, with one operator fitted per excerpt and per class on the frames of that excerpt. Every column of the table, the last one included, is the time-domain signal-to-distortion ratio of Section~\ref{sec:protocol:metrics} measured on the resynthesised output, the operator itself being fitted and applied in the time-frequency plane; the residuals of Proposition~\ref{prop:cascade} are spectral quantities, and what is tabulated is what they become after the transport of Section~\ref{sec:protocol:consistency}. The comparison of 6.70~dB below therefore holds between two figures measured in the same domain and on the same waveforms, and no statement in this section crosses the two domains. The prediction of Corollary~\ref{cor:nophase} is confirmed to the precision of the measurement: the step $\Mclass_1 \to \Mclass_2$ is 0.0013~dB pooled at $L = 1024$ and 0.0003~dB at $L = 256$, and its largest value over the 18 individual measurements is 0.0029~dB, so this is not an averaging artefact but a per-excerpt fact. The two sources of a musical mixture are close enough to uncorrelated for the quadrature correlation $\Imag c_{sx}$ to vanish, and a complex mask fixed across the frames of an excerpt therefore adds nothing over the real mask it contains, while per frame the same parameter accounts for the entire residual of the class by Proposition~\ref{prop:degenerate}. Non-circularity, the second step, yields 0.02~dB, of the same order as nothing. Only the third step yields anything, and it is the one no bin-wise format can reach: 1.14~dB at $L = 256$ and 2.68~dB at $L = 1024$ after correction.

Two cautions attach to that last figure, and both point the same way. The operator of $\Mclass_4$ is fitted in sample with $4F$ real parameters per bin against $2T$ real observations, so its raw residual is optimistic. The factor $2T/(2T - 2\,\mathrm{rank})$ we correct it by is a degrees-of-freedom heuristic and not an unbiasedness result: it is the ratio by which the in-sample residual of a least-squares fit of $2\,\mathrm{rank}$ effective real parameters on $2T$ real observations understates the out-of-sample one in the classical Gaussian case \cite[Sec.~7.5]{hastie2009}, transposed here to a fit whose noise is neither Gaussian nor independent across frames, so it is quoted as an order of magnitude of the optimism instead of as a corrected value. That factor is 1.61 at $L = 1024$, where the fit uses 1966 effective real parameters, twice a rank of 983, on 2582 frames, and 1.02 at $L = 256$, where it uses 504, twice a rank of 252, on 10334. The rank in question is that of the augmented Gram matrix of the mixture, of size $2F$ by $2F$ over the stacked spectra and their conjugates, taken as the number of its singular values above the numerical threshold of the least-squares solver, machine precision times $2F$ times the largest singular value; the deficiency against the $2F$ of a full rank is partly structural, the bins at zero and at the Nyquist frequency being real and therefore duplicated by the augmentation, and partly a matter of conditioning. No reading below rests on that threshold: moving the rank across the whole admissible range, from 900 to the full 1026, moves the corrected $\Mclass_4$ figure between 10.14 and 9.80~dB and the comparison drawn at the end of this section between 6.48 and 6.82~dB. The trustworthy figure is therefore the short-window one, 1.14~dB, and the long-window one is quoted with its correction shown and not hidden. The reading that survives both is the comparison with the last column of Table~\ref{tab:cascade}. The widest class of the chain, held fixed across frames, reaches 9.92~dB pooled at $L = 1024$ against a per-frame ceiling of 16.62~dB for the \emph{smallest} class, and the ideal ratio mask of Section~\ref{sec:results:ceiling} reaches 13.68~dB with one real parameter per bin. Adaptivity to the frame thus yields several times more than any widening of the linear class, 6.70~dB against 2.68~dB on this material, which is the measured form of the argument of Section~\ref{sec:problem:cascade} and the reason the estimator of Section~\ref{sec:model} is built from a prior instead of from a wider filter.

\subsection{Deficit Under the Ceiling and Effect of the Number of Components}
\label{sec:results:deficit}

\begin{table}[tbp]
\renewcommand{\arraystretch}{1.1}
\caption{Deficit to the measured ceiling $\SDRstar$, in dB, mean over test excerpts, paired track by track, at both training volumes. The deficit, or gap, of an estimator on an excerpt is $\SDRstar$ minus its own SDR on that excerpt, both in dB and both measured in the time domain after synthesis, so a positive figure is a shortfall and the two are used interchangeably below. The frame is $L = 1024$ unless stated. Lower is better, zero would be a crossing, and by Remark~\ref{rem:shared} $n$ is the number of excerpts and not twice that. Frames per dimension for each cell are given in Section~\ref{sec:protocol:corpus}.}
\label{tab:deficit}
\centering
\setlength{\tabcolsep}{2.5pt}
\begin{tabular}{lccccccc}
\hline
 & & \multicolumn{3}{c}{\textbf{20\,k arm}} & \multicolumn{3}{c}{\textbf{150\,k arm}} \\
\textbf{Configuration} & $n$ & $K{=}8$ & $K{=}16$ & $K{=}32$ & $K{=}8$ & $K{=}16$ & $K{=}32$ \\
\hline
in-support           & 10 & 10.89 & 10.49 &  9.99 & 10.84 & 10.35 & 9.84 \\
held-out             &  9 & 12.01 & 11.75 & 11.46 & 11.84 & 11.71 & 11.44 \\
held-out, $L{=}256$  &  5 &  9.82 &  9.88 &  9.81 &  9.63 &  9.85 & 9.83 \\
\hline
\end{tabular}
\end{table}

No setting of the fitted estimator comes close to the ceiling of the class it lives in, and the distance is stable enough across settings to be a property of the mechanism rather than of one operating point. Table~\ref{tab:deficit} is the result, for the diagonal-covariance estimator of $\Mclass_3$ tested throughout this subsection. Across three configurations, three component counts and two training volumes, the posterior mean sits 9.63 to 12.01~dB under the measured ceiling, with zero crossing in the 144 per-excerpt measurements the 18 cells contain, that is the 24 evaluation excerpts of the three configurations under each of the six combinations of component count and training volume. Those 144 figures are not independent observations: the same excerpts recur in every cell of a configuration, and the six cells of a configuration share their evaluation material entirely, so the count states the extent of the search for a crossing and carries no sampling guarantee. Four readings matter more than the level.

First, the slope in $K$ is real and it is far too shallow to be the answer. In-support, where the test excerpt belongs to the training set, the deficit falls by 0.49 then 0.51~dB per doubling in the large arm, and nine excerpts out of ten improve. Held out, the same factor four in $K$ yields 0.40~dB in the paired mean and 0.68~dB in the per-excerpt median, with 2 excerpts out of 9 moving the other way, one of them by 1.10~dB. Taking that slope at face value, 0.20~dB per doubling, closing the 11.44~dB that remain at $K = 32$ would take more than 50 further doublings, a component count of the order of $10^{19}$. The ceiling of the class is not reached by counting components, and this is the paper's central measurement.

Second, the training volume is the discriminator between a protocol objection and a structural one, and it answers structural. Multiplying the frames per dimension by 7.5x moves the deficit by 0.02 to 0.19~dB across the nine cells, toward the ceiling in eight of the nine and away from it by 0.02~dB in the ninth, held out at $L = 256$ and $K = 32$; at $K = 32$ held out the two arms differ by 0.02~dB, at 19.5 against 146.2 frames per dimension. Whatever binds this estimator, it is not the sample size of the fit, and the two knobs a reviewer would reach for first, more components and more data, jointly account for under half a decibel of 11.

Third, the short-frame configuration reaches 581.4 frames per dimension in the large arm, and the deficit there is 9.63, 9.85 and 9.83~dB, without an ordering in $K$. The comparison across frame lengths is not paired, the ceiling itself moving by 2.4~dB and the subset being five excerpts against nine, so only the absence of ordering is quoted as a result, being interior to one configuration.

Fourth, the witnesses locate the deficit where Proposition~\ref{prop:pull} puts it. Within an arm the added capacity moves the estimate around the line and not up it: held out, the median phase rotation grows from 0.86 to 1.16 degrees on the vocal estimate and from 0.10 to 0.27 degrees on the accompaniment over the factor four in $K$, while the share of bins above unit gain falls from 0.71 to 0.49\% on the vocal estimate and from 3.08 to 1.60\% on the accompaniment. Across arms the movement is unambiguous and it goes the other way: at $K = 8$ held out, 7.5x the data divides the phase rotation by 2.7 and the share of out-of-class gains by 3.5, and the deficit changes by 0.17~dB. A better-fitted posterior is a tighter posterior, its mean sits closer to the line it was already close to, and the error that survives is the variance itself, exactly what a deficit equal to a posterior variance predicts. Inexact inference would have moved the witnesses the other way, the estimate wandering further off the line as the fit improves.

\subsection{Effect of the Covariance Structure}
\label{sec:results:cov}

\begin{table}[tbp]
\renewcommand{\arraystretch}{1.1}
\caption{Covariance structure crossed with evaluation regime, $L = 256$, cap of 150\,000 frames (Section~\ref{sec:protocol:corpus}), deficit to the measured ceiling in dB, mean over the five excerpts of the cell and paired excerpt by excerpt. The two in-support columns share their excerpts and their oracles, measured ceiling 14.88~dB and ideal ratio mask 11.51~dB, and so do the two held-out columns, measured ceiling 14.83~dB and ideal ratio mask 11.46~dB, so the difference between two columns of the same regime is paired. The \emph{gap} columns are the generalisation gap, held out minus in support, at equal covariance structure. The held-out diagonal column repeats the $L = 256$ cells of Table~\ref{tab:deficit} at the cap of 150\,000 frames, the same 15 measurements read here against the full-covariance arm. Lower is better and zero would be a crossing.}
\label{tab:cov}
\centering
\begin{tabular}{ccccccc}
\hline
 & \multicolumn{2}{c}{\textbf{in support}} & \multicolumn{2}{c}{\textbf{held out}} & \multicolumn{2}{c}{\textbf{gap}} \\
$K$ & full & diagonal & full & diagonal & full & diagonal \\
\hline
 8 & 8.53 & 9.22 & 9.54 & 9.63 & 1.01 & 0.41 \\
16 & 8.01 & 8.59 & 9.38 & 9.85 & 1.37 & 1.26 \\
32 & 7.64 & 8.15 & 9.55 & 9.83 & 1.91 & 1.68 \\
\hline
\end{tabular}
\end{table}

The results above are obtained under a diagonal covariance, which places the estimator in $\Mclass_3$ and leaves the last class of \eqref{eq:chain} untested, coupling across frequencies being the assumption with the most room in it. Table~\ref{tab:cov} tests it. The comparison is run where it can be run, at $L = 256$, and it is run as a full crossing of the covariance structure with the evaluation regime, four cells per component count from fits sharing everything but the structure, each pair of cells of the same regime sharing its excerpts and its oracles to the hundredth of a decibel, so every column difference read below is paired.

In support, the extra capacity behaves as capacity should, and so does the diagonal fit it is compared against. The full-covariance deficit falls from 8.53 to 8.01 to 7.64~dB, monotonically in $K$ and by about half a decibel per doubling, and the diagonal one falls from 9.22 to 8.59 to 8.15~dB on the same excerpts, so the slope in $K$ on the material fitted is a property of the estimator and not of the coupling. Held out, the same fits give 9.54, 9.38 and 9.55~dB against 9.63, 9.85 and 9.83, and both orderings in $K$ disappear. The covariance structure therefore yields 0.69, 0.58 and 0.51~dB in support and 0.09, 0.47 and 0.28~dB held out, always in its favour, never as much as a decibel, and with no trend in the component count in either regime.

The crossing also settles which of the two factors the one to two decibels separating the best and worst cells belongs to, and the answer is the regime, not the structure. Read down the other axis, the generalisation gap is 1.01, 1.37 and 1.91~dB under a full covariance and 0.41, 1.26 and 1.68 under a diagonal one, growing with the component count under both, so memorisation is what more components produce and the coupling only adds to it, by 0.60, 0.11 and 0.23~dB without a trend of its own. The parameter count of Section~\ref{sec:protocol:model} explains that surcharge: each component carries far more covariance parameters than the roughly 4700 frames available at $K = 32$.

One pilot did cross its own ceiling, and it should be stated in full. Run at the same transform on a 20 s two-stem excerpt outside this corpus, fitted and evaluated on the same material, where the 7498 frames leave about 230 per component at $K = 32$ against 33\,411 free covariance parameters each, the full-covariance fit read 17.17 and 12.10~dB against a ceiling of 15.68 and 10.62 measured on that same material, about 1.5~dB \emph{above} it on both sources. Nothing in Proposition~\ref{prop:ceiling} is at risk there: the ceiling is exact for the class, so a figure above it is a figure obtained outside the class, and a covariance with more free parameters than frames to fit them on is exactly the configuration in which the fitted prior reproduces the excerpt instead of modelling it, phase included. What the pilot bounds is the validity of the ceiling as an \emph{empirical} landmark, which requires the fit not to interpolate the evaluation material. At the training volume used here the crossing does not reproduce in any of the 12 cells, and the parameter count says why.

The conclusion of the paper is therefore unchanged on the class that was missing: the ceiling is crossed in none of the 12 cells of this crossing, and in none of the 60 per-excerpt figures they contain, in support included. What does change is that the departure from the line is now unambiguous. Under a full covariance the median phase rotation of the estimate, averaged over the excerpts of a cell, is 3.6 to 4.4 degrees, against 0.33 to 0.52 degrees for the diagonal fit on the same excerpts at the same cap, in both regimes and at every component count, and 3.4 to 4.4\% of active bins carry a gain above one against 0.2 to 0.8\%. The estimator that exercises $\Mclass_4$ does leave the real line by an order of magnitude more than the one that does not, and it gains nothing by it. That is Proposition~\ref{prop:pull} measured on the class it was written for, and it is the strongest form in which this paper can state it: the freedom is available, it is taken, and the criterion gains nothing from it.

\subsection{Three Measurements of the Pull of Proposition~\ref{prop:pull}}
\label{sec:results:readout}

\begin{table}[tbp]
\renewcommand{\arraystretch}{1.1}
\caption{Three read-outs of the same fitted posteriors, in-support configuration, $n = 10$ excerpts, measured ceiling 16.24~dB. Deficit in dB, median phase rotation in degrees, gains above one and negative gains as \% of active bins.}
\label{tab:readout}
\centering
\setlength{\tabcolsep}{3pt}
\begin{tabular}{lcccccc}
\hline
\textbf{Read-out} & $K{=}8$ & $K{=}16$ & $K{=}32$ & phase & $|g|{>}1$ & $\Real g{<}0$ \\
\hline
posterior mean        & 10.84 & 10.35 &  9.84 & 0.3--0.6 & 0.8--1.0 & 0.8--1.0 \\
hard assignment       & 10.87 & 10.41 &  9.90 & 0.4--0.7 & 2.2--3.3 & 2.2--3.3 \\
one posterior draw    & 12.38 & 11.99 & 11.50 & 24--26   & 27--28   & 20--21 \\
\hline
\end{tabular}
\end{table}

The prediction of Proposition~\ref{prop:pull} is tested by changing the read-out alone, on the fitted models of the in-support configuration, so no retraining is involved: posterior mean, hard assignment, and one draw from \eqref{eq:posterior}. Hard assignment names the mode of the single component pair carrying the largest responsibility, $\tilde{\mub}_{k_1k_2}$ at the arg max of $\boldsymbol{\phi}(\bfx)$, and not the global mode of \eqref{eq:posterior}, which a mixture of Gaussians does not offer in closed form; the two coincide whenever the responsibilities are concentrated, which Table~\ref{tab:readout} shows them to be. Table~\ref{tab:readout} reports the three against the same 16.24~dB ceiling, with zero crossing in the 90 per-excerpt measurements the nine cells contain, three read-outs by three component counts over the 10 excerpts of that configuration.

The draw confirms the prediction with its sign. It leaves the class unambiguously, at 24 to 26 degrees of median phase rotation with 27 to 28\% of the bins above unit gain and 20\% at a negative gain, and it loses 1.54, 1.64 and 1.66~dB for leaving. That loss is the same to within 0.13~dB at all three component counts, an observation compatible with the excess term $|x|^2\Var(\mstar|x)$ of \eqref{eq:excess} being dominated by the variance inside the component, which is near-independent of $x$ and of $K$; the excess itself is a posterior variance and does move with the fit, as the predicted column of Table~\ref{tab:gate} shows. The mean, meanwhile, stays practically inside the class it has formally left, 0.3 to 0.6 degrees of rotation and around 1\% of bins above unit gain.

The hard-assignment read-out is a witness that is vacuous by construction: the mode of a Gaussian component is its mean, so it can differ from the posterior mean only through the mixing of responsibilities, and the two agreeing to 0.06~dB and 0.14 degrees at all three counts says the per-frame responsibilities are near-degenerate. The two do part company on the witnesses: the mode leaves the class two to three times as often as the mean for the same output level, which is the averaging at work. The comparison locates the deficit in the variance inside the component, not in the uncertainty about which component a frame belongs to; the draw is the only read-out that exposes that variance, the mean averaging it away. The table also validates itself: the posterior mean at $K = 32$ reproduces the deficit of the separator \eqref{eq:estimator} exactly at the two decimals reported, 9.84~dB.

\subsection{Share of the Deficit Attributable to Posterior Variance}
\label{sec:results:gate}

\begin{table}[tbp]
\renewcommand{\arraystretch}{1.1}
\caption{Calibration gate, in-support configuration, large arm, same 10 excerpts, no retraining. Realised is the spectral deficit actually measured, the excess of the estimator's spectral error over the residual of $\mstar$ on the same bins, in dB; predicted is the same quantity computed from the fitted posterior alone as the posterior variance of $\mstar$ weighted by $|x|^2$, before any reconstruction; gap is realised minus predicted, in dB. Ratio is the median over excerpts of realised to predicted error energy; within-pair is the share of the predicted variance carried by the within-pair term.}
\label{tab:gate}
\centering
\begin{tabular}{cccccc}
\hline
$K$ & realised & predicted & ratio (med.) & within-pair & gap \\
\hline
 8 & 9.87 & 6.87 & 2.12 & 97.7\,\% & 3.00 \\
16 & 9.42 & 6.71 & 1.95 & 96.8\,\% & 2.72 \\
32 & 8.93 & 6.37 & 1.84 & 96.1\,\% & 2.56 \\
\hline
\end{tabular}
\end{table}

Two quantities are compared here. The measured ceiling reports the residual $|s|^2\sin^2\theta$ of the best member of the class, a property of the material; the deficit tabulated below is what the estimator loses on top of it, which \eqref{eq:excess} identifies with the posterior variance of the oracle gain weighted by the mixture energy, $|x|^2\Var(\mstar \mid x)$, a property of the fitted prior. Proposition~\ref{prop:pull} is the bridge that makes the second predictable from the fitted model alone, before any reconstruction. The prediction is available in closed form because $\mstar = \Real(s\bar x)/|x|^2$ is a \emph{linear} functional of the stacked source at fixed $x$: the posterior of the gain is a scalar mixture with mean $\sum_i \phi_i\,\mathbf{a}^\top\tilde{\mub}_i$, and its variance decomposes by the law of total variance into a within-pair term, the one a scale mixture replaces wholesale, and a between-pair term, which a random scale does not touch. Table~\ref{tab:gate} compares the realised spectral deficit to that prediction on the same 10 excerpts, spectral on both sides so as to isolate the closed-form statement from the synthesis.

The prior accounts for about 70\% of the deficit honestly, 69.6\% at $K = 8$ and 71.4 at $K = 32$, and that share is a quotient of decibels, $6.87/9.87$ and $6.37/8.93$, not a share of error energy. Read in energy the same rows give a smaller figure, $(10^{0.687}-1)/(10^{0.987}-1) = 44.3$\% at $K = 8$, the decibel reading being the compressive one and the one quoted throughout; the residual overconfidence is a factor of two in energy that \emph{decreases} with $K$, the calibration gap falling from 3.00 to 2.72 to 2.56~dB. Three readings follow. The reading of Proposition~\ref{prop:pull} holds on this material: the deficit is within-component variance, 96 to 98\% of the predicted variance sitting in the within-pair term, and no excerpt in 30 rows falls below a ratio of one, the lowest being 1.42. The quartiles keep a thin-tail signature, the model underestimating its own error by a median factor of 15 to 127 on the quarter of bins where its predicted variance is smallest, while on the quarter where that variance is largest the same ratio of realised to predicted error falls to 0.46 and the error is overestimated instead, the tail deficiency of Section~\ref{sec:results:diagnostics} seen through the posterior. And the gap is what a recalibration of \emph{this} fit could recover, 2.56~dB of the 8.93 at $K = 32$ and shrinking with $K$: the decomposition is a definition of the gap on the fitted posteriors, the realised deficit in decibels minus the deficit the predicted variance accounts for, and not a decomposition of error energy, so a model whose predicted variance were exactly calibrated would land on the predicted column and no lower. It bounds nothing about other priors, whose predicted variance can be smaller; what Proposition~\ref{prop:scalemix} adds is that changing the component law to any Gaussian scale mixture transports the whole analysis unchanged, so a scale mixture is pulled onto the line by the same symmetry and its own deficit is again a posterior variance, however that variance is redistributed.

The domain of that statement is measured, not assumed. Decomposing the spectral error without a cross term, $\hat s = \hat m x + q$ and $s = \mstar x + r$ with $q, r \perp x$, gives $\sum|\hat s - s|^2 = \sum|x|^2(\hat m - \mstar)^2 + \sum|q-r|^2$, whose first term is exactly the closed form the table predicts, so the gate is a statement about the in-class part of the error alone. On the same 10 excerpts the time-domain deficits of Table~\ref{tab:deficit} run 0.91 to 0.97~dB above these spectral ones. Since the two columns differ in the reconstruction alone, no retraining and the same bins on both sides, the synthesis is the only candidate mechanism, and the plausible reading is that it falls on the severe side: overlap-add shaves the ceiling residual, whose off-line part $r$ is not consistent with any signal, while the in-class error $(\hat m - \mstar)x$ is itself a real mask and survives it. That reading is not measured here, no arm having isolated the two contributions, so the difference is reported as attributable to the synthesis, not as decomposed. The time-domain numbers quoted everywhere else are therefore the conservative ones.

\subsection{Diagnostics of the Prior}
\label{sec:results:diagnostics}

\begin{table}[tbp]
\renewcommand{\arraystretch}{1.1}
\caption{Distributional diagnostics on isolated source spectra, each against the value of the same statistic on synthetic data satisfying the assumption at the same sample size. Each statistic is to be read against the null immediately to its right, as a ratio, and the reading is given in the text; $K = 32$ for the conditional rows.}
\label{tab:diagnostics}
\centering
\setlength{\tabcolsep}{3pt}
\begin{tabular}{llcccccccc}
\hline
\textbf{Source} & \textbf{Scope} & $\rho$ & null & $\kappa$ & null & $\hat\alpha$ & null & $\gamma$ & null \\
\hline
voc. & marg. & 0.014 & 0.003 & 36.8 & 2.00 & 2.20 & 7.62 & $+0.384$ & 0.000 \\
voc. & comp. & 0.056 & 0.026 & 8.83 & 2.00 & 4.92 & 7.68 & $+0.141$ & 0.000 \\
acc. & marg. & 0.060 & 0.003 & 5.82 & 2.00 & 4.77 & 7.63 & $+0.282$ & 0.000 \\
acc. & comp. & 0.095 & 0.027 & 3.84 & 2.00 & 6.49 & 7.57 & $+0.089$ & $-0.001$ \\
\hline
\end{tabular}
\end{table}

Which assumption to relax is a measurement, and each of the three assumptions of Section~\ref{sec:model:prior} has a statistic that tests it directly on isolated source spectra, before any fit and before any separation. Circularity, $\rho = |\Ex[s^2]|/\Ex[|s|^2]$ per bin, $0$ for a circular law and $1$ for a real one, tests whether there is phase structure to exploit at all. Kurtosis, $\kappa = \Ex[|s|^4]/\Ex[|s|^2]^2$ per bin, $2$ for a circular complex Gaussian, tests the tails, hence where the components are consumed. The Hill estimator $\hat\alpha$ on the upper 5\% of $|s|$, finite for a stable law and unbounded and growing with the sample size for a Gaussian, tests the tail index. And $\gamma = \mathrm{corr}(z_i^2, z_j^2) - \mathrm{corr}(z_i,z_j)^2$ on adjacent bins after rank-gaussianising each marginal tests for a shared latent scale, that is for a scale mixture.

The last one is the discriminating statistic: a bivariate normal of correlation $r$ satisfies $\mathrm{corr}(z_1^2,z_2^2) = r^2$ exactly, so $\gamma$ vanishes for \emph{any} Gaussian copula whatever the marginals are, and the rank transform makes it invariant to any monotone reshaping of them. A positive $\gamma$ therefore cannot come from heavy marginals; it says the dependence itself is non-Gaussian, and a common scale variable is the mechanism that produces it.

Three readings of Table~\ref{tab:diagnostics}, in order of importance. Non-circularity is the weak point and it is what governs the measured deficit: inside the model's own partition, $\rho$ exceeds its noise floor by a factor of 2.2 on vocals and 3.5 on the accompaniment and no more, so the prior has little broken circularity to exploit where it counts, and by Proposition~\ref{prop:pull} that is precisely the quantity that decides whether the posterior mean leaves the line. The median phase rotation under one degree in Table~\ref{tab:readout} is that number seen from the output side. The measurement is made at the volume of the large arm, some 4600 frames per cell against a floor computed at the same sample size, so the small factor is a property of the material at this resolution instead of an artefact of a starved estimate. Tails are the massive violation, $\kappa$ at 37 against 2 marginally and still at 9 conditionally on vocals, with the Hill index well under its floor, and this is where the components go: components devoted to the modulus are not devoted to the geometry, the premise of Proposition~\ref{prop:scalemix} measured, and not assumed. And $\gamma$ stays positive inside the components, so the residual dependence between neighbouring bins escapes any Gaussian copula, the signature of a shared scale.

\subsection{Localisation of the Error Off Support}
\label{sec:results:offsupport}

\begin{figure}[pos=tbp]
\centering
\begin{tikzpicture}[scale=1,>=stealth,line join=round]
  \draw[gray!55] (-0.15,0.000) -- (-0.15,4.608);
  \draw[gray!55] (-0.25,0.576) -- (-0.15,0.576);
  \node[gray!70,anchor=east,font=\scriptsize] at (-0.3,0.576) {-2};
  \draw[gray!55] (-0.25,1.296) -- (-0.15,1.296);
  \node[gray!70,anchor=east,font=\scriptsize] at (-0.3,1.296) {-1};
  \draw[gray!55] (-0.25,2.016) -- (-0.15,2.016);
  \node[gray!70,anchor=east,font=\scriptsize] at (-0.3,2.016) {0};
  \draw[gray!55] (-0.25,2.736) -- (-0.15,2.736);
  \node[gray!70,anchor=east,font=\scriptsize] at (-0.3,2.736) {1};
  \draw[gray!55] (-0.25,3.456) -- (-0.15,3.456);
  \node[gray!70,anchor=east,font=\scriptsize] at (-0.3,3.456) {2};
  \draw[gray!55] (-0.25,4.176) -- (-0.15,4.176);
  \node[gray!70,anchor=east,font=\scriptsize] at (-0.3,4.176) {3};
  \node[gray!70,rotate=90,anchor=south,font=\scriptsize] at (-0.95,2.304) {increase in error (dB)};
  \draw[gray!55] (-0.15,0.000) -- (7.200,0.000);
  \draw[gray!55] (0.000,0.000) -- (0.000,-0.100);
  \node[gray!70,anchor=north,font=\scriptsize] at (0.000,-0.150) {0.0};
  \draw[gray!55] (1.167,0.000) -- (1.167,-0.100);
  \node[gray!70,anchor=north,font=\scriptsize] at (1.167,-0.150) {0.5};
  \draw[gray!55] (2.333,0.000) -- (2.333,-0.100);
  \node[gray!70,anchor=north,font=\scriptsize] at (2.333,-0.150) {1.0};
  \draw[gray!55] (3.500,0.000) -- (3.500,-0.100);
  \node[gray!70,anchor=north,font=\scriptsize] at (3.500,-0.150) {1.5};
  \draw[gray!55] (4.667,0.000) -- (4.667,-0.100);
  \node[gray!70,anchor=north,font=\scriptsize] at (4.667,-0.150) {2.0};
  \draw[gray!55] (5.833,0.000) -- (5.833,-0.100);
  \node[gray!70,anchor=north,font=\scriptsize] at (5.833,-0.150) {2.5};
  \draw[gray!55] (7.000,0.000) -- (7.000,-0.100);
  \node[gray!70,anchor=north,font=\scriptsize] at (7.000,-0.150) {3.0};
  \node[gray!70,anchor=north,font=\scriptsize] at (3.500,-0.550) {tilt parameter $t$};
  \draw[dotted,gray!55] (-0.15,2.016) -- (7.200,2.016);
  \draw[thick] plot coordinates {(0.000,2.016) (0.175,1.838) (0.350,1.687) (0.525,1.548) (0.700,1.416) (0.875,1.287) (1.050,1.160) (1.225,1.035) (1.400,0.912) (1.575,0.790) (1.750,0.670) (1.925,0.552) (2.100,0.436) (2.275,0.323) (2.450,0.218) (2.625,1.188) (2.800,1.197) (2.975,1.116) (3.150,1.038) (3.325,0.963) (3.500,0.892) (3.675,0.825) (3.850,0.827) (4.025,2.813) (4.200,2.800) (4.375,2.788) (4.550,2.777) (4.725,2.768) (4.900,2.761) (5.075,2.757) (5.250,3.664) (5.425,3.700) (5.600,3.699) (5.775,3.699) (5.950,3.701) (6.125,3.704) (6.300,3.708) (6.475,3.713) (6.650,3.719) (6.825,3.726) (7.000,3.734)};
  \fill (2.625,1.188) circle (1.5pt);
  \draw[gray!45,very thin] (2.625,0.000) -- (2.625,1.188);
  \fill (4.025,2.813) circle (1.5pt);
  \draw[gray!45,very thin] (4.025,0.000) -- (4.025,2.813);
  \fill (5.250,3.664) circle (1.5pt);
  \draw[gray!45,very thin] (5.250,0.000) -- (5.250,3.664);
  \node[anchor=east,font=\scriptsize] at (7.000,3.434) {track B, frame 179};
  \draw[thick,gray!60] plot coordinates {(0.000,2.016) (0.175,2.028) (0.350,2.040) (0.525,2.052) (0.700,2.065) (0.875,2.078) (1.050,2.091) (1.225,2.105) (1.400,2.118) (1.575,2.132) (1.750,2.147) (1.925,2.162) (2.100,2.177) (2.275,2.192) (2.450,2.208) (2.625,2.123) (2.800,3.451) (2.975,3.463) (3.150,2.398) (3.325,2.710) (3.500,3.233) (3.675,3.272) (3.850,3.290) (4.025,3.308) (4.200,3.327) (4.375,3.347) (4.550,3.367) (4.725,3.388) (4.900,3.410) (5.075,3.432) (5.250,3.455) (5.425,3.479) (5.600,3.519) (5.775,4.178) (5.950,4.189) (6.125,4.200) (6.300,4.211) (6.475,4.223) (6.650,4.235) (6.825,4.247) (7.000,4.260)};
  \fill (2.800,3.451) circle (1.5pt);
  \draw[gray!45,very thin] (2.800,0.000) -- (2.800,3.451);
  \fill (3.150,2.398) circle (1.5pt);
  \draw[gray!45,very thin] (3.150,0.000) -- (3.150,2.398);
  \fill (3.325,2.710) circle (1.5pt);
  \draw[gray!45,very thin] (3.325,0.000) -- (3.325,2.710);
  \fill (5.775,4.178) circle (1.5pt);
  \draw[gray!45,very thin] (5.775,0.000) -- (5.775,4.178);
  \node[anchor=east,font=\scriptsize] at (7.000,4.560) {track A, frame 135};
  \draw[thick,densely dashed] plot coordinates {(0.000,2.016) (0.175,2.000) (0.350,2.090) (0.525,3.485) (0.700,3.471) (0.875,3.457) (1.050,3.443) (1.225,3.430) (1.400,3.417) (1.575,3.404) (1.750,2.800) (1.925,2.768) (2.100,2.736) (2.275,2.705) (2.450,2.675) (2.625,2.645) (2.800,2.615) (2.975,2.552) (3.150,2.180) (3.325,2.018) (3.500,3.685) (3.675,3.706) (3.850,3.727) (4.025,3.751) (4.200,3.775) (4.375,3.802) (4.550,3.830) (4.725,3.860) (4.900,3.892) (5.075,3.926) (5.250,3.962) (5.425,4.001) (5.600,4.042) (5.775,4.085) (5.950,4.130) (6.125,4.178) (6.300,4.229) (6.475,4.283) (6.650,4.339) (6.825,4.398) (7.000,4.460)};
  \fill (0.525,3.485) circle (1.5pt);
  \draw[gray!45,very thin] (0.525,0.000) -- (0.525,3.485);
  \fill (1.750,2.800) circle (1.5pt);
  \draw[gray!45,very thin] (1.750,0.000) -- (1.750,2.800);
  \fill (2.275,2.705) circle (1.5pt);
  \draw[gray!45,very thin] (2.275,0.000) -- (2.275,2.705);
  \fill (3.325,2.018) circle (1.5pt);
  \draw[gray!45,very thin] (3.325,0.000) -- (3.325,2.018);
  \fill (3.500,3.685) circle (1.5pt);
  \draw[gray!45,very thin] (3.500,0.000) -- (3.500,3.685);
  \node[anchor=east,font=\scriptsize] at (7.000,5.080) {track C, frame 42};
\end{tikzpicture}
\caption{The error of the conditional mean along the spectral tilt, on three of the 36 paths, plotted as the increase over the undeformed frame so that the three are comparable. Markers are the steps at which the arg max of $\boldsymbol{\phi}$ switches, with a light rule dropped to the axis. Between switches the curve moves by hundredths of a decibel per step; at a switch it jumps by about a decibel, in either direction. Track A is \emph{A Classic Education - NightOwl}, B is \emph{ANiMAL - Clinic A}, C is \emph{ANiMAL - Easy Tiger}.}
\label{fig:offsupport}
\end{figure}

The prediction of Section~\ref{sec:analysis:offsupport} is that the responsibility term of \eqref{eq:twoterm}, and not the regression term, is what fails when the observation leaves the region the prior was fitted on, and it is signed: along a continuous deformation of the observed frame the error should move slowly while the arg max of $\boldsymbol{\phi}$ is stable and break where it switches. The deformation is a one-parameter spectral tilt, $s[f] \mapsto s[f]\exp(t(f/F - 1/2))$ with $t$ swept over 41 values in $[0, 3]$, applied to both sources at once and renormalised per frame and per step so that the observed frame keeps its energy. It is therefore a plain filter, which matters twice: the mixing identity survives, so the truth moves with the observation and the error stays defined, and only the shape of the spectrum changes, where a change of level would leave the support for a reason the analysis does not claim to be about the partition. 12 frames spread over the loudest half of each of the three excerpts give 36 paths, evaluated at $K = 32$ with diagonal covariances at $L = 1024$ on the checkpoints of the in-support arm, nothing being fitted here.

The prediction holds, and not narrowly. Every one of the 36 paths crosses at least one cell boundary, for 209 switches over 1440 steps. The median absolute step in error is 1.00~dB at a switch against 0.03~dB elsewhere, a ratio of 30 on the pooled medians and of 49.7, 32.2 and 16.9 taken excerpt by excerpt, and the extremes are further apart still: the largest step at a switch is 8.50~dB where the 95th percentile of the steps away from switches is 0.35~dB. The refutation, a ratio near one, is nowhere in sight. Figure~\ref{fig:offsupport} shows the shape behind those numbers on three paths, one per excerpt: long stretches on which the estimator tracks the deformation as a single affine map would, punctuated by jumps of about a decibel exactly where the leading component pair changes.

Two qualifications belong with the result. Away from the walls the error drifts instead of climbing: of the 1231 steps at which the arg max is stable, 829 raise the error and 402 lower it, by median amounts of 0.026 and 0.052~dB, and 29 of the 36 paths end above where they started, by a median of 1.99~dB end to end. The fraction of rising steps, 0.86, 0.41 and 0.69 per excerpt, therefore measures nothing about the prediction, since it counts sign changes of hundredths of a decibel; what the deformation produces is a slow upward drift and not a monotone climb. And not every switch breaks the curve. 46 of the 209 move the error by less than the 95th percentile of the ordinary steps, and they are the switches at which the assignment is not sharp: their median $\max\boldsymbol{\phi}$ is 0.85 against 0.997 for the rest. That is the same mechanism seen from the other side, a shared assignment making the estimator a blend of two affine operators, so that swapping which of the two leads moves the estimate very little. The break is a property of a sharp cell boundary and not of the arg max label, which is what \eqref{eq:twoterm} says: the responsibility term is weighted by $\|\boldsymbol{\phi}(\bfx) - \boldsymbol{\phi}(\bfx')\|_1$, and that quantity is small on a boundary the responsibilities cross gradually.

\subsection{Position of the Estimator and Scope of the Measurement}
\label{sec:results:scope}

The ceiling is not the only landmark inside the class, and the second one is unfavourable to the estimator. At $K = 32$ in support the estimator reads 6.40~dB against 13.19~dB for the ideal ratio mask on the same material, so it is 6.78~dB below that mask, the two being paired per excerpt, which is a member of $\Mclass_{[0,1]}$, the smallest of the three nested subclasses of \eqref{eq:class}. Held out at $L = 1024$ the same comparison is wider: the ideal ratio mask reads 13.68~dB pooled against a ceiling of 16.62, so it comes within 2.94~dB of the best real mask on that material, while the smallest deficit of the estimator, 11.44~dB, puts it at 5.18~dB and therefore 8.50~dB below that mask. Those held-out figures are differences of pooled means over the same nine excerpts and not paired per-excerpt differences like the entries of Table~\ref{tab:deficit}, so they state a level and not a significance. About three quarters of the deficit reported above, 8.50~dB of 11.44 held out, is reachable without ever leaving the line, by a better predictor of the real gain alone. What binds the estimator at this level of performance is therefore the prior and not the geometry of the class, and the conflict of Proposition~\ref{prop:pull} becomes the limiting factor only for an estimator that has already come close to the ideal ratio mask, a regime no measurement reported here occupies.

Our measurements support one statement about a class and one about a model, and the two should not be conflated. About the class, the ceiling of $\Mclass_1$ is a data statistic: it is 16.62~dB on the nine retained held-out MUSDB18 excerpts at $L = 1024$, 14.83~dB at $L = 256$ on the five held-out excerpts and 14.88~dB at the same frame length on the five in-support ones, three different sets of material and therefore three separate statements, not a comparison, and no member of the class reaches past the figure of its own configuration, whatever its architecture or training volume. About the model, a closed-form generative prior that is rich enough to leave the class formally, and measurably outside it, sits 9.4 to 12.0~dB below that ceiling held out and 7.6 to 10.9 on the material it was fitted on, over every setting of Tables~\ref{tab:deficit} and~\ref{tab:cov}, and the three knobs that would ordinarily close such a gap, the number of components, the training volume and the covariance structure, account for less than a decibel between them on held-out material. That model also sits well inside the class instead of near its boundary, as the preceding paragraph reports, so the second statement is about a mechanism operating at that level, without claiming that the class has been exhausted. The second statement does not license a claim about generative separation in general, and in particular not about learned generative priors reached by sampling. The estimator studied here is a posterior mean under a Gaussian mixture, and Proposition~\ref{prop:pull} attributes the return to the line to the read-out and the criterion instead of to the family of priors. No learned mask-based separator was run on these excerpts. The ceiling and the subclass landmarks bind such a system by construction, being statistics of the material and not of any estimator, so what is reported here is the position of one estimator against the bounds of its own class and not a ranking of systems; the figures quoted from the literature in the introduction are measured on other material and are not paired with ours.

The objection that such a regime is merely data-limited is the one the two arms were built to settle, and they settle it. The large arm provides 144 to 581 frames per dimension, of the order a rule of thumb would ask for, and the deficit moves by at most 0.19~dB, while the pilot of Section~\ref{sec:protocol:model} shows the estimator to be genuinely sensitive to volume where volume is scarce. The insensitivity reported here is therefore a plateau reached and not assumed, and Section~\ref{sec:analysis:offsupport} explains why the failure of a starved fit lands on the responsibilities instead of on the regression. The open question is now about the family of priors: Proposition~\ref{prop:pull} predicts that a sharper prior lowers the level while the slope in $K$ stays as shallow as measured, a deficit equal to a posterior variance shrinking with the sharpness of the prior and not with the number of components.

Proposition~\ref{prop:pull} read together with Table~\ref{tab:readout} makes the conflict concrete: the read-out that visibly leaves the class is the one that is 1.6~dB worse, by a margin independent of the number of components. The witnesses move toward the line as the fit improves, which is what a deficit equal to a posterior variance predicts and the opposite of what inexact inference would give, and the gate of Section~\ref{sec:results:gate} attributes the bulk of the deficit to the posterior variance of the fitted prior, which is the same finding reached from the other side. Proposition~\ref{prop:pull} turns on the symmetry of the phase posterior and on the criterion instead of on how good the prior is, so it binds a sharper prior in the same way. A separator trained on squared error, or on any criterion whose optimum is a conditional mean, inherits the pull onto the line to the extent that its implicit phase posterior is symmetric about the mixture direction, whatever class its architecture nominally lives in. Escaping the ceiling therefore requires either a prior whose phase posterior is measurably asymmetric, which Table~\ref{tab:diagnostics} shows this material does not supply at the resolution measured, or a criterion whose optimum is not a conditional mean. Which criterion that should be remains open. The established alternatives to squared error in this field, the Itakura-Saito divergence of F\'evotte, Bertin and Durrieu \cite{fevotte2009is} among them, are written on power spectra and are therefore silent about phase instead of free of the pull, while perceptual and adversarial criteria do leave the class by construction and come with no statement about where in the complement of the class they land.

\section{Conclusion}
\label{sec:conclusion}

The class of real gains has an exact ceiling, that ceiling is a statistic of the angular disagreement between source and mixture, and the same projection argument fixes it in one bin, in one frame and over a whole excerpt. Membership in the class, for an estimator that returns a conditional mean, is then decided jointly by the law of the prior and by that read-out. A phase posterior symmetric by reflection about the mixture direction puts the estimate back on the line whatever the prior; non-zero means, non-circularity and inter-frequency coupling are the three assumptions whose relaxation is what lets the estimate leave the line at all, each one term of a chain of nested block structures on stacked spectra; and the excess error of such an estimator is exactly the posterior variance of the oracle gain. Minimising squared error and leaving the class are therefore conflicting requests, and the statement bears on the criterion rather than on a family of priors: it survives replacing every Gaussian component by a Gaussian scale mixture, which makes heavier tails free instead of a rewrite.

The measurements locate a fitted instance of that mechanism instead of exhausting the class. The separator reported here lands 9.4 to 12.0~dB under measured ceilings of 14.8 to 16.6~dB held out, and 7.6 to 10.9 on the material it was fitted on, with no crossing in the 189 distinct per-excerpt figures of Tables~\ref{tab:deficit} and~\ref{tab:cov}, whose 144 and 60 entries share the 15 held-out diagonal measurements at $L = 256$, and which are not independent observations, the same excerpts recurring under every model setting. None of the three knobs that would ordinarily close such a gap does: the number of components, a training volume multiplied by 7.5x and a full covariance account for less than a decibel between them on held-out material, and the last of the three rotates the output 10 times further off the real line for that. A closed-form gate attributes about 70\% of the deficit, as a ratio of decibels, to honest posterior variance, the rest being what recalibrating this fit could recover and nothing more. The measurement is nonetheless taken well inside the class and not at its edge, the ideal ratio mask standing 8.50~dB above the estimator while coming within 2.94~dB of the ceiling, so about three quarters of the deficit is reachable without leaving the line, and the geometry binds only a prior sharper than the one fitted here. What the experiments establish is accordingly the mechanism at a prior-dominated level, without fixing the level itself.

Three limits of the evidence should be carried with those figures. The transport from the spectral analysis to the time-domain ratios reported throughout is an inequality checked numerically rather than proved, so a shortfall is genuine only up to its slack; the time-domain deficits themselves run 0.91 to 0.97~dB above their spectral counterparts, a difference reported as attributable to the synthesis and not decomposed, no arm having isolated the two contributions. The ceilings at the two window lengths are measured on different excerpt sets, so they are separate statements rather than a comparison. And the landmark figures of Section~\ref{sec:results:scope}, unlike the entries of Tables~\ref{tab:deficit} and~\ref{tab:cov}, are differences of pooled means rather than paired per-excerpt differences, and therefore state a level and not a significance.

One construction stands independently of the deficit reported above, and we expect it to be the most immediately usable part of this work. The posterior of Section~\ref{sec:analysis:yardstick} is a genuine two-source separation problem, at full spectral dimension, with an exactly normalised posterior, an exact posterior mean, an exact and observation-independent posterior covariance, and a pointwise log-density. A diffusion or flow-matching sampler can therefore be scored on each of those quantities with no Monte Carlo error on the reference side, which its end metric alone cannot show. The limit of what it provides is the measured realism of the prior, which the diagnostics state in numbers.

Three leads follow from those same diagnostics, in increasing order of what the analysis says they can win. Student components gain heavier tails at one extra parameter each and, by Proposition~\ref{prop:scalemix}, leave the analysis intact beyond restricting the exactness of the reference to a scalar quadrature; they sharpen the posterior where the model is overconfident, and do not remove variance the prior legitimately has. A banded covariance would retain the coupling of $\Mclass_4$ on a neighbourhood of each bin, wherever the frame count can support it, and Section~\ref{sec:results:cov} says what it is for: the unrestricted version does carry the coupling and does leave the line, by an order of magnitude, but at the parameter count of Section~\ref{sec:protocol:model} it obtains its advantage on the material it was fitted on and loses almost all of it elsewhere, so a band is the parametrisation that keeps the first property without incurring the second. And a prior over consecutive frames would exploit the phase advance of a stationary sinusoid between frames, which is deterministic and which \eqref{eq:prior} ignores by construction, the multi-frame filters of \cite{huang2012multiframe,fischer2017mfmvdr} exploiting exactly that dependence from the second-order statistics of the observation instead of from a prior. The last is the lever the calibration gate leaves most room to, since conditioning on the previous frame reduces the within-component variance itself instead of reweighting tails, and that within-pair term is where almost all of the predicted variance sits.

\section*{CRediT authorship contribution statement}

\textbf{Maxime Baelde:} Conceptualization, Methodology, Formal analysis, Software, Investigation, Data curation, Writing -- original draft, Writing -- review and editing, Visualization.

\section*{Declaration of competing interest}

The author declares no known competing financial interests or personal relationships that could have appeared to influence the work reported in this paper.

\section*{Data availability}

The experiments use the public MUSDB18 corpus \cite{musdb18}, and no other data. The code that produces every figure and table of this paper is public, at \url{https://github.com/mbaelde/masking-ceiling}, tag \texttt{v1.0.0}; the separator it measures is the thesis implementation, imported from \url{https://github.com/mbaelde/generative-audio-source-models} rather than reimplemented. The witness statistics quoted in the text and not tabulated here, per-excerpt phase rotations, shares of gains outside $[0,1]$ and the quartile ratios of realised to predicted error, are written out per excerpt and per cell by that code.

\section*{Declaration of generative AI and AI-assisted technologies in the manuscript preparation process}

During the preparation of this work the author used Claude (Anthropic) in order to assist with the drafting of the experimental code, the \LaTeX{} typesetting and the language editing of the manuscript. After using this tool, the author reviewed and edited the content as needed and takes full responsibility for the content of the published article.


\end{document}